\title{On the Complexity of Problems on Graphs Defined on Groups } 
\author{Bireswar Das}{IIT Gandhinagar, India}{bireswar@iitgn.ac.in }{}{}
\author{Dipan Dey}{TIFR, India}{dipan.dey@tifr.res.in}{}{}
\author{Jinia Ghosh}{IIT Gandhinagar, India}{jiniag@iitgn.ac.in}{}{}
\authorrunning{B. Das, D. Dey and J. Ghosh} 
\keywords{Complexity Theory, \and Graphs Defined on Groups,  \and Exponential Time Hypothesis, \and Power Graphs, \and Graph Motif} 
\tikzset{
    small circles/.style={circle,inner sep=2pt,fill=#1},
    hollow circles/.style n args={2}{circle,inner sep=#1,draw=#2,thick},
    stars/.style={star,inner sep=2pt}
}
\definecolor{DarkGreen}{rgb}{0.1,0.5,0.1}
\let\c@author\relax
\newtheorem{fact}{Fact}
\theoremstyle{claimstyle}
\newtheorem{claimc}{Claim}
\newtheorem{claimd}{Claim}
\newenvironment{proofof}[1]{ \vspace{0.5 em}
    \par\noindent\textit{Proof of #1.}\;%
}{%
    \hfill\par%
}
\newcommand{\G}{\Gamma}
\newcommand{\PP}{\mathcal{P}}
\newcommand{\GG}{\Gamma_{\Phi}}
\newcommand{\DG}{\mathcal{D}}
\newcommand{\ZZ}{\mathbb{Z}}
\newcommand{\DDG}[1]{\mathcal{D}_{#1}}
\begin{document}
%



\maketitle              
\begin{abstract}

We study the complexity of graph problems on graphs defined on groups, especially power graphs. We observe that an isomorphism invariant problem, such as Hamiltonian Path, Partition into Cliques, Feedback Vertex Set, Subgraph Isomorphism, cannot be NP-complete for power graphs, commuting graphs, enhanced power graphs, directed power graphs, and bounded-degree Cayley graphs, assuming the Exponential Time Hypothesis (ETH). An analogous result holds for isomorphism invariant group problems: no such problem can be NP-complete unless ETH is false. We show that the Weighted Max-Cut problem is NP-complete in power graphs. 
We also show that, unless ETH is false, the Graph Motif problem cannot be solved in quasipolynomial time on power graphs, even for power graphs of cyclic groups. 
We study the recognition problem of power graphs when the adjacency matrix or list is given as input and show that for abelian groups and some classes of nilpotent groups, it is solvable in polynomial time.
\end{abstract}

\section{Introduction}
Several important classes of graphs are defined on groups, including power graphs, enhanced power graphs, Cayley graphs, and commuting graphs. These constructions encode algebraic relationships into combinatorial structures, creating an interplay between group theory and graph theory. In this paper, we study the computational complexity of various problems on such graphs. Since these graphs are derived from algebraic objects, a natural and compelling question is whether the underlying group structure can be leveraged to design more efficient algorithms or to better understand the inherent difficulty of the problems. 



In this paper, while our primary focus is on power graphs,  several of our results also extend to enhanced power graphs, directed power graphs, commuting graphs and bounded degree Cayley graphs (see \cite{cameron2022graphs} for a comprehensive survey). 
The notion of directed power graphs was introduced by Kelarev et al. \cite{kelarev2000combinatorial}, while the concept of power graphs was later proposed by Chakrabarty et al. \cite{chakrabarty2009undirected}. 
As power graphs are perfect graphs \cite{aalipour2017structure}, using results by 
Lovasz et al. 
\cite{gr6tschel1988geometric} for perfect graphs, one can see that computational problems such as graph colouring, maximum clique and maximum independent set are in P.



 While power graphs had been defined decades ago, to the best of our knowledge, no nontrivial NP-completeness result has been reported for this class. Our first line of investigation concerns identifying NP-complete problems on graphs defined from groups. Many classical NP-complete problems,  such as  Hamiltonian Cycle, Partition into Cliques, Partition into Perfect Matchings, Metric Dimension, Treewidth determination, are invariant under graph isomorphism, meaning their solutions depend only on the structure of the graph, not on the labelling of the vertices. We prove that any isomorphism-invariant graph problem is not NP-complete when restricted to power graphs, enhanced power graphs, commuting graphs, directed power graphs, or bounded-degree Cayley graphs, unless Exponential Time Hypothesis (ETH) fails (\Cref{sec : non-hardness}). This result also extends to isomorphism invariant group problems where the groups are given by their Cayley table. It is already known that isomorphism invariant problems such as the group isomorphism problem (see \cite{miller1978nlog}), 
minimal faithful permutation degree problem \cite{das2024minimal} are not NP-complete under ETH. Our result shows that \emph{any} isomorphism invariant group problem cannot be NP-complete, under ETH.

Therefore, to identify NP-complete problems on group-based graphs, we must focus on problems that are not invariant under graph isomorphism (\Cref{sec: motif}). One such example is the Weighted Max-Cut problem. We show that Weighted Max-Cut remains NP-complete when restricted to power graphs. However, this result is not particularly surprising, as for graph classes having sufficiently large cliques, the NP-completeness of Weighted Max-Cut is expected. Motivated by this, we turn to the Graph Motif problem, which differs fundamentally in that it incorporates a connectivity constraint in addition to a colouring condition. 


Originally introduced in the context of biological network analysis \cite{lacroix2006motif}, the graph motif problem has since attracted significant interest within the computer science community. The problem remains NP-complete even when restricted to trees \cite{lacroix2006motif} and bipartite graphs with maximum degree $4$ \cite{fellows2007sharp}. A variety of fixed-parameter tractable (FPT) algorithms have been developed for this problem, with parameters ranging from the number of distinct colours to the input size (see \cite{pinter2016deterministic, guillemot2013finding}, e.g.). 
Ganian showed that Graph Motif can be solved in time $O^*(2^k)$ where $k$ is the neighbourhood diversity \cite{ganian2012using}. Notably, the problem is $W[1]$-hard for trees when parameterised by the number of colours \cite{fellows2007sharp}. Furthermore, several lower bounds based on ETH have been established for the problem \cite{bonnet2017graph}.


We prove that the Graph Motif problem does not admit a quasipolynomial time algorithm on power graphs, even for cyclic groups, assuming ETH (\Cref{sec: motif}).  
This shows that even with the strong algebraic restrictions of a cyclic group, the motif problem remains hard. This hardness result is derived through a careful application of concepts from both group theory and number theory.

On the contrary, by exploiting 
the structure of power graphs of cyclic groups, we obtain a subexponential time algorithm to solve the Graph Motif problem for power graphs of cyclic groups. Moreover, for power graphs of $p$-groups, the problem admits a more efficient solution, as we present a polynomial time algorithm.

On a related note, it is worthwhile to mention that the non-hardness results for isomorphism-invariant graph (or group) problems defined on graphs (or groups), as well as the hardness results of the Graph Motif problem in power graphs, can be derived from the widely held assumption that $\mathsf{EXP} \neq \mathsf{NEXP}$, where $\mathsf{EXP} = \bigcup_{t>0}\mathsf{DTIME}(2^{n^t})$ and $\mathsf{NEXP} = \bigcup_{t>0}\mathsf{NTIME}(2^{n^t})$, by using a result by Buhrman and Homer \cite{10.1007/3-540-56287-7_99}. 

There are two natural representations of power graphs. The first is via the Cayley table of the underlying group, where the group structure is explicitly provided. We refer to this as \emph{the algebraic representation}. The second is through the adjacency matrix or adjacency list of the power graph, with no accompanying group information.  In the latter representation, the graph is presented purely as a combinatorial object. 
We refer to this as the \emph{standard graph representation}.


Within the standard graph representation, a fundamental question is the recognition problem: given a graph, determine whether it is the power graph of some group.  Following the results of Arvind et al. for recognition of commuting graphs \cite{arvind2024aspectscommutinggraph}, one can show that this problem admits a quasipolynomial time algorithm for power graphs, enhanced power graphs, and directed power graphs. We show that it is easy to recognise power graphs of nilpotent groups of squarefree exponent. We further show that the recognition problem for power graphs is solvable in polynomial time when the graph arises from an abelian group, as well as from certain classes of nilpotent groups (\Cref{sec: recognition}). In particular, we consider nilpotent groups of bounded polycyclic length - a class that includes several important subclasses, such as metacyclic groups and nilpotent groups of cube-free order.



This paper presents both hardness and non-hardness results on power graphs, along with recognition algorithms for specific group classes. Except for the result on $p$-groups in \Cref{sec: motif}, all results in \Cref{sec : non-hardness} and \ref{sec: motif} are representation-independent. For $p$-groups, however, the \textsc{Graph Motif} problem must be treated as a promise problem under standard representation or require algebraic input.

\section{Preliminaries}


For a \textit{graph} $\Gamma$, we write $V(\Gamma)$ and $E(\Gamma)$ to denote the vertex set and edge set, respectively.
We denote an \textit{undirected edge} between two vertices $u$ and $v$ by $\{u,v\}$ and a \textit{directed edge} from $u$ to $v$ by $(u,v)$. Unless otherwise specified, the graphs we consider are simple undirected graphs. The \textit{neighbourhood} of a vertex $u$ in a graph $\G$ is denoted by $N_{\G}(u)$. For a subset $L \subseteq V(\G)$, $N(L)=\{v \in V(\G) : \text{ there exists } a \in L \text{ with } \{a,v\} \in E(\G) \}$. 
Two vertices $u$ and $v$ are \textit{closed twins} in a \emph{directed graph} $\DG$ if their closed out-neighbourhoods are the same. 
The relation between two vertices of being closed twins is an equivalence relation, and the equivalence class under this relation is called the \textit{closed twin-class}. 
The \textit{subgraph induced on the set} $S\subseteq V(\G)$ is denoted by $\Gamma[S]$. 
A function $f:V(\G) \xrightarrow[]{} V(\G')$ is an \textit{isomorphism} between two graphs $\G$ and $\G'$ if $f$ is a bijection 
such that $\{u,v\} \in E(\G)$ if and only if $\{f(u),f(v)\} \in E(\G')$; here, $\G$ and $\G'$ are called \textit{isomorphic} and it is denoted by $\G \cong \G'$. We write $col(v)$ to denote the colour of a vertex $v$ in a vertex-coloured graph. For two vertex-coloured graphs $\G$ and $\G'$, if for an isomorphism $f$, 
$col(u)=col(f(u))$, for all $u\in V(\G)$, then $f$ is called a \textit{colour-preserving isomorphism} (or, \emph{colour-isomorphism}) and $\G$ and $\G'$ are said to be \textit{colour-isomorphic}, denoted by $\G \cong_c \G'$. 

The basic definitions and facts on group theory can be found in any standard book (e.g.,  \cite{rotman2012introduction}). 
The \textit{order of a group} $G$ and the \textit{order of an element} $g$ in $G$ are denoted by $|G|$  and $ord(g)$ respectively. The \textit{cyclic subgroup generated by $g$} is denoted by $\langle g\rangle$, where $g$ is a \textit{generator} of $\langle g \rangle$. The number 
of generators of $\langle g \rangle$ is $\phi(|\langle g \rangle|)$, where $\phi$ is the \textit{Euler's totient function} and $\phi(p_1^{\alpha_1}\dots p_k^{\alpha_k})=p_1^{\alpha_1-1}(p_1-1)\dots p_k^{\alpha_k-1}(p_k-1)$, where $p_i$'s are distinct primes. Any cyclic group of order $n$ is isomorphic to $\ZZ_n$, the additive group of integers modulo $n$. The \textit{exponent} of a finite group is the least common multiple of the orders of all elements of the group.



A group $G$ is called a \textit{$p$-group} if the order of $G$ is a power of a prime $p$. 
It is common to refer to a group as a $p$-group whenever its order is a power of some prime, where the symbol $p$ denotes an arbitrary prime. 
If $p^m$ is the highest power of a prime $p$ such that $p^m$ divides $|G|$, then there is at least one subgroup of order $p^m$ in $G$, and such a subgroup is called \textit{Sylow p-subgroup} of $G$. A finite group is called a \textit{nilpotent group} if it is a direct product of its Sylow subgroups. Moreover, each Sylow subgroup is unique in a finite nilpotent group.

Since our primary focus is on power graphs and directed power graphs, we provide their definitions only. For the definitions of other group-based graph classes, 
we refer the reader to \cite{cameron2022graphs}.
The \emph{directed power graph} of a group $G$, denoted by $DPow(G)$, is the directed graph with vertex set $G$, and edge set $\{(x,y): y = x^m$ \textrm{ for some positive integer} m\}. The \emph{power graph}, denoted by $Pow(G)$, is the underlying undirected graph of $DPow(G)$, i.e., the undirected graph obtained from $DPow(G)$ after removing directions, self-loops and double arcs.

\medskip 

 \begin{remark}\label{remark power graph} 
 If $\{x,y\}$ $\in E(Pow(G))$, then either $ord(x)|ord(y)$ or $ord(y)|ord(x)$. If $G$ is a cyclic group, then 
 $\{x,y\}$ is an edge in $Pow(G)$ if and only if either $ord(x)|ord(y)$ or $ord(y)|ord(x)$.
 \end{remark}
 
 We define an equivalence relation $\sim$ on $G$ as follows: for $x,y\in G$, $x\sim y$ if and only if $\langle x \rangle = \langle y \rangle$, i.e., $x$ and $y$ generate the same cyclic subgroup of $G$. The equivalence class containing $x$ is denoted by $[x]$. So, all the elements of a class have the same order. Note that the size of $[x]$ is $\phi(ord(x))$.

\medskip 

\begin{remark}\label{equivalence classes form a clique} In $Pow(G)$, the following two facts hold:
    (\textit{$i$}) Each equivalence class $[x]$ of $G$ is a clique of size $\phi(ord(x))$ in $Pow(G)$;
    (\textit{$ii$}) For two classes $[x]$ and $[y]$, if an element $x\in [x]$ is adjacent to an element $y\in [y]$ in $Pow(G)$, then every element of $[x]$ is adjacent to every element of $[y]$. 
\end{remark}
\medskip 
\begin{lemma}\cite{chakrabarty2009undirected}\label{complete power}
    The power graph of a finite group is complete if and only if the group is a cyclic group of prime-power order.
\end{lemma}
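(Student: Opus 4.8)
The plan is to prove both implications by reducing the completeness of $Pow(G)$ to the adjacency characterisation. By the definition of the power graph, $\{x,y\}\in E(Pow(G))$ holds precisely when one of $x,y$ is a power of the other, i.e. $x\in\langle y\rangle$ or $y\in\langle x\rangle$. Hence $Pow(G)$ is complete if and only if every pair of elements of $G$ is comparable in this sense, and throughout I would phrase adjacency in terms of orders via \Cref{remark power graph}.

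For the easy direction, suppose $G$ is cyclic of prime-power order $p^n$. Then every element order is a power of $p$, so the orders occurring in $G$ lie in $\{1,p,\dots,p^n\}$, which is a chain under divisibility. Given any $x,y\in G$, the orders $ord(x)$ and $ord(y)$ are therefore comparable, so by the cyclic case of \Cref{remark power graph} the pair $\{x,y\}$ is an edge. Thus $Pow(G)$ is complete.

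For the converse, assume $Pow(G)$ is complete. I would first extract cyclicity by choosing an element $g$ of maximum order. For any $x$, completeness forces $x\in\langle g\rangle$ or $g\in\langle x\rangle$; in the second case $\langle g\rangle\subseteq\langle x\rangle$ gives $ord(g)\le ord(x)$, and maximality of $ord(g)$ then yields $\langle x\rangle=\langle g\rangle$, so in either case $x\in\langle g\rangle$. Hence $G=\langle g\rangle$ is cyclic. It remains to show its order $n$ is a prime power, which I would do by contradiction: if two distinct primes $p,q$ divided $n$, then, since $G$ is cyclic, $G$ would contain elements $a$ and $b$ of orders $p$ and $q$; because $p\nmid q$ and $q\nmid p$, \Cref{remark power graph} shows $\{a,b\}$ is not an edge, contradicting completeness. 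Therefore $n$ is a prime power.

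The divisibility bookkeeping is routine; the single load-bearing step is the extraction of a cyclic generator from completeness in the converse, where the maximality argument must rule out any element sitting strictly above $g$. Beyond making this maximality step precise and invoking the existence of an element of each prime order dividing $n$ in a cyclic group, I expect no serious obstacle.
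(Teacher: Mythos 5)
Your proof is correct. Note that the paper itself does not prove this lemma at all---it is imported as a known result from Chakrabarty et al.\ \cite{chakrabarty2009undirected}---so there is no internal proof to compare against; your argument is essentially the standard proof of that cited result. All the load-bearing steps check out: in the forward direction the element orders form a divisibility chain $\{1,p,\dots,p^n\}$ and the cyclic case of \Cref{remark power graph} gives every edge; in the converse, the maximality step correctly upgrades $\langle g\rangle\subseteq\langle x\rangle$ to $\langle x\rangle=\langle g\rangle$ (since $ord(g)\mid ord(x)$ plus maximality forces $ord(x)=ord(g)$, and a subgroup of equal order coincides), and the non-edge between elements of orders $p$ and $q$ follows from the contrapositive of the first (general-group) part of \Cref{remark power graph}, contradicting completeness.
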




 The \textit{Exponential Time Hypothesis} (ETH) is a conjecture by Impagliazzo, Paturi and Zane \cite{impagliazzo2001problems}
 that states that there is no $2^{o(n)}$-time algorithm for $3$-SAT over $n$ variables. 
 
 A function $f$ that takes a graph as input and outputs a graph is called a \emph{canonization function} if a) $f(G)\cong G$ and b) $G_1\cong G_2$ if and only if $f(G_1)= f(G_2)$. 
 
 The size of a set $S$ is denoted by $|S|$, whereas the length of an object $x$ - that is, the number of bits in its encoding - is denoted by $||x||$.

\section{Non-hardness of Isomorphism-Invariant Problems}\label{sec : non-hardness}

Let $\mathscr{G}$ be a class of graphs closed under isomorphism. We consider problems whose instances are of the form $(\Gamma_1,\ldots, \Gamma_k,x)$, where $\Gamma_i\in \mathscr{G}$, $k$ is a constant,  and  $||x||=O(\log s)$, where $s=||(\Gamma_1,\ldots, \Gamma_k,x)||$, is the size of the instance. In other words, a language representation of such problems is a subset of $\mathscr{G}^k\times \{0,1\}^*$, where $k$ is a constant, and the last component (i.e., $x$) is of logarithmic length in the length of the input. We call such a language $L$ \emph{isomorphism invariant over} $\mathscr{G}$, if for all $\Lambda_i,\Gamma_i\in \mathscr{G}$, where $1\leq i \leq k$ 
such that $\Gamma_i\cong \Lambda_i$, 
for all $i$, we have $(\Gamma_1,\ldots,\Gamma_k,s)\in L$ if and only if $(\Lambda_1,\ldots,\Lambda_k,s)\in L$. A wide range of problems are isomorphism invariant. For example, Graph Colouring: The language is $L=\{(\Gamma,r)|\Gamma \textrm{ has a vertex colouring with } r\textrm{ colours}\}$, where $r$ is given in binary and 
$||r||=O( \log |V(\Gamma)|)$. This problem is isomorphism invariant because $\Gamma$ is $r$-colourable if and only if an isomorphic copy of $\Gamma$ is $r$-colourable. 
Similarly, the following problems are isomorphism invariant: Independent Set, Clique, Graph Colouring, Hamiltonian Path, Partition into Cliques, Oriented Diameter, etc. 


The following theorem is stated for isomorphism invariant problems on power graphs. However, the theorem also holds for directed power graphs, enhanced power graphs, commuting graphs, and bounded-degree Cayley graphs.

\medskip 

\begin{theorem}\label{iso-inv}
    If $L$ is an isomorphism invariant graph problem over the class of power graphs, then $L$ is not NP-complete under ETH.
\end{theorem}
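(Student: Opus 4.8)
The plan is to assume, for contradiction, that $L$ is NP-complete and to turn this assumption into a quasipolynomial-time algorithm for $3$-SAT, which contradicts ETH. The whole argument rests on one structural fact: \emph{there are very few power graphs}. A power graph on $n$ vertices is $Pow(G)$ for a group $G$ with $|G|=n$, so the number of isomorphism types of power graphs on $n$ vertices is at most the number of groups of order $n$, which is $n^{O(\log^2 n)}=2^{O(\log^3 n)}$ by the standard Higman--Sims/Pyber bound. Since $k$ is constant and the last component satisfies $\|x\|=O(\log s)$, the number of isomorphism types of instances $(\Gamma_1,\dots,\Gamma_k,x)$ of size at most $s$ is therefore at most $\big(2^{O(\log^3 s)}\big)^k\cdot 2^{O(\log s)}=2^{O(\log^3 s)}$, i.e.\ quasipolynomial in $s$.

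First I would invoke NP-hardness to fix a polynomial-time many-one reduction $R$ from $3$-SAT to $L$: for every formula $\psi$, $\psi$ is satisfiable iff $R(\psi)\in L$, and $R(\psi)$ is a power-graph instance of size $\mathrm{poly}(\|\psi\|)$. The role of isomorphism-invariance is now decisive: if $R(\psi_1)$ and $R(\psi_2)$ are isomorphic instances (same $x$ and $\Gamma_i\cong\Gamma_i'$ for all $i$), then they agree on membership in $L$, hence $\psi_1$ and $\psi_2$ have the same satisfiability status.

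The core of the algorithm is a Mahaney-style pruning of the self-reduction tree of $3$-SAT. Given $\phi$ on $m$ variables, I process the tree of partial assignments level by level, branching on one variable at a time, where each node carries the corresponding restricted formula; then $\phi$ is satisfiable iff some reachable leaf (a full assignment) satisfies it. At each level I apply $R$ to every surviving node, compute a canonical form of each resulting power-graph instance (using quasipolynomial graph canonization and comparing the $x$-components directly), and keep only one node per canonical form. Because isomorphic images have identical satisfiability, discarding duplicates never destroys a satisfiable branch; and by the counting bound at most $2^{O(\log^3 m)}$ nodes survive at each of the $m$ levels, since every restricted formula has size $\mathrm{poly}(m)$ and hence its image under $R$ is a power-graph instance on $\mathrm{poly}(m)$ vertices. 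The total running time is $m\cdot 2^{O(\log^3 m)}$ times the cost of $R$ and of canonization, which is again $2^{O(\log^3 m)}=2^{o(m)}$. This contradicts ETH, so $L$ cannot be NP-complete.

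The main obstacle to handle carefully is the pruning invariant together with the uniform frontier bound: I must ensure that nodes of different sizes are never merged (the canonical form plus the $x$-comparison guarantees this), and that the bound $2^{O(\log^3 m)}$ counts isomorphism types of the \emph{images} $R(\mathrm{node})$ rather than of all power graphs, which is legitimate precisely because these images all have $\mathrm{poly}(m)$ vertices. The same template transfers to directed power graphs, enhanced power graphs, commuting graphs, and bounded-degree Cayley graphs: each such family on $n$ vertices is again indexed by groups of order $n$ (with only polynomially many extra choices for the connection set in the bounded-degree Cayley case), so the relevant count stays quasipolynomial; one only replaces graph canonization by its directed or coloured analogue.
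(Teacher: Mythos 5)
Your pruning rule is sound, but the frontier bound has a genuine gap, and it sits exactly where Mahaney's theorem is nontrivial. A Karp reduction $R$ from $3$-SAT to $L$ guarantees only that $\psi$ is satisfiable iff $R(\psi)\in L$; your explicit assertion that ``$R(\psi)$ is a power-graph instance of size $\mathrm{poly}(\|\psi\|)$'' is not implied by NP-hardness. When $\psi$ is unsatisfiable, $R(\psi)$ is merely a non-member of $L$: it may be a $k$-tuple of \emph{arbitrary} graphs on $\mathrm{poly}(m)$ vertices that are not power graphs of any group (such tuples are never in $L$ because the class of power graphs is isomorphism-closed, so this is perfectly consistent with $R$ being a correct reduction). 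There are exponentially many isomorphism types of arbitrary graphs on $\mathrm{poly}(m)$ vertices, so deduplication by canonical form can leave an exponentially large frontier of pairwise non-isomorphic, unsatisfiable nodes; the count $2^{O(\log^3 m)}$ applies only to images that genuinely lie in $\mathscr{G}^k\times\{0,1\}^*$, which is guaranteed only for images of \emph{satisfiable} nodes. Hence the running time is no longer $2^{o(m)}$ and no contradiction with ETH follows. The issue is not the vertex count of the images, as your last paragraph suggests, but whether they are power graphs at all.

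There are two ways to close the gap. One is to discard every node whose image is not a valid power-graph instance; but detecting this is precisely the power-graph recognition problem, which needs the machinery of \Cref{subexpo} (quasipolynomial-time recognition via enumerating group descriptions and Babai's isomorphism test) --- an ingredient you neither mention nor can avoid in this variant. The other is to follow the actual proof of Mahaney's theorem via left sets: reduce the monotone ``left set'' of $3$-SAT to $L$, order the partial assignments lexicographically, resolve collisions (isomorphic images) by discarding the element that provably cannot be the prefix of the extremal satisfying assignment, and then cap the frontier at $B+1$ nodes, where $B=2^{O(\log^3 m)}$ bounds the isomorphism types of genuine instances; this works because only left-set \emph{members} need to be counted, and members map into $L$, hence to genuine power-graph instances. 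The paper takes a cleaner route that sidesteps the whole issue: it uses Babai's canonization to reduce $L$ in quasipolynomial time to the language $L'$ of canonized yes-instances, observes that $L'$ is quasipolynomially sparse (sparsity counts only members, so garbage images are irrelevant), and invokes the ETH version of Mahaney's theorem (\Cref{maha}) as a black box. In short, your instinct to inline the Mahaney argument is reasonable, but the inlined version as written omits the one step --- controlling what non-members map to --- that the theorem exists to handle.
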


We prove the above theorem using a version of Mahaney's theorem \cite{mahaney1982sparse}, which can be proved easily by mimicking the proof of Mahaney's theorem (see, e.g., \cite{grochow2016np}). We say that a language $L$ is \emph{quasipolynomially sparse} if there exists a constant $c>0$ such that $|L\cap \{0,1\}^n|=2^{O(\log^c n)}$. 

\medskip 

\begin{theorem}\label{maha}
Under ETH, no quasipolynomially sparse language is NP-complete.
\end{theorem}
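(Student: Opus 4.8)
The plan is to argue by contradiction. Suppose some quasipolynomially sparse language $L$ is NP-complete, so that $|L\cap\{0,1\}^{m}|=2^{O(\log^{c}m)}$ for a constant $c$. I will convert this assumption into a $2^{o(n)}$-time algorithm for $3$-SAT, which contradicts ETH. As indicated in the excerpt, the construction mimics the proof of Mahaney's theorem; the one genuinely new point is to track how the weaker (quasipolynomial rather than polynomial) sparsity bound propagates into the running time, and to check that the resulting bound is still subexponential.

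Fix a $3$-CNF formula $\phi$ on variables $x_{1},\dots,x_{n}$ and work with the \emph{left set} $\mathrm{Left}(\phi)=\{\langle\phi,u\rangle : \phi \text{ has a satisfying assignment lexicographically at least } u\}$, where $u$ ranges over length-$n$ strings. This language is in $\mathsf{NP}$, it is downward closed along the lexicographic order (so its ``yes'' thresholds form an initial segment determined by the largest satisfying assignment), and $\phi$ is satisfiable if and only if $\langle\phi,0^{n}\rangle\in\mathrm{Left}(\phi)$. Since $L$ is NP-complete, fix a polynomial-time many-one reduction $f$ from $\mathrm{Left}(\phi)$ to $L$; because $f$ runs in polynomial time, every query of length $\mathrm{poly}(n)$ is mapped to a string of length at most $q(n)$ for a fixed polynomial $q$.

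First I would run the Mahaney-style pruned search on the self-reducibility tree of the left set: process the tree level by level, at level $i$ maintaining a list of surviving length-$i$ prefixes, extending each to its two children, evaluating $f$ on the associated thresholds, and then pruning. The pruning uses only equality of the images $f(\cdot)$ --- never membership in $L$, which we cannot afford to decide --- and its soundness rests on the monotonicity of the left set, exactly as in Mahaney's argument; this is the part I would import wholesale. The effect of the pruning is that the width of each level is bounded by a fixed polynomial in the census $|L\cap\{0,1\}^{\le q(n)}|=\sum_{m\le q(n)}2^{O(\log^{c}m)}=2^{O(\log^{c}n)}$, which is quasipolynomial in $n$ rather than polynomial. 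Each level costs polynomial time per surviving prefix, there are $n$ levels, and at the end the surviving thresholds decide whether $\langle\phi,0^{n}\rangle\in\mathrm{Left}(\phi)$, hence whether $\phi$ is satisfiable.

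The total running time is therefore $n\cdot 2^{O(\log^{c}n)}\cdot\mathrm{poly}(n)=2^{O(\log^{c}n)}$, where I used that a fixed polynomial in $2^{O(\log^{c}n)}$ is again $2^{O(\log^{c}n)}$. Since $(\log n)^{c}=o(n)$ for every fixed $c$, this is $2^{o(n)}$, giving a subexponential algorithm for $3$-SAT and contradicting ETH. The main obstacle is the soundness and width control of the pruning step: guaranteeing that collapsing prefixes by equality of their reduction images never discards the prefix of the maximal satisfying assignment, while keeping each level's width at the census bound. This is precisely the content borrowed from Mahaney's theorem, and the only new verification is the arithmetic showing that the quasipolynomial census still yields a subexponential running time, so that ETH (and not merely $\mathrm{P}\neq\mathrm{NP}$) is contradicted.
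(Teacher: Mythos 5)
Your proposal is correct and takes essentially the same route as the paper: the paper's entire proof is the remark that one mimics Mahaney's theorem (citing the left-set exposition of Grochow), which is exactly the pruned self-reducibility-tree search you describe, with soundness of the pruning imported from the monotonicity of the left set. Your added bookkeeping --- that the census of a quasipolynomially sparse target language is $2^{O(\log^c n)}$, so each level's width and hence the total running time remain $2^{O(\log^c n)} = 2^{o(n)}$, contradicting ETH rather than merely $\mathrm{P}\neq\mathrm{NP}$ --- is precisely the ``easy'' verification the paper leaves implicit.
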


\medskip 

\begin{remark}  

    An isomorphism invariant problem on power graphs need not be quasipolynomially sparse,  irrespective of the representations (algebraic or standard graph representation), even though the number of non-isomorphic groups of order $n$ is quasipolynomial \cite{mciver1987enumerating}. The following lemma 
    is possibly a folklore and has been used previously  (e.g., \cite{farzan2006succinct}).
\end{remark}

\medskip

\begin{lemma}\label{number of cayley tables}
 The number of distinct Cayley tables representing groups of order $n$ is exponential in $n$.
\end{lemma}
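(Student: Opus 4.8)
The plan is to observe that a single group of order $n$ already gives rise to exponentially many distinct Cayley tables, which immediately lower-bounds the total count over all groups of order $n$. I would take the cyclic group $\ZZ_n$ as the witness, since its automorphism group is the easiest to control, and the totient function $\phi$ is already in play in the excerpt.

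First I would fix the label set $\{1,\dots,n\}$ together with the group operation (written $\cdot$) that makes it a copy of $\ZZ_n$, and let the symmetric group $S_n$ of the $n$ labels act on this table by relabeling: a permutation $\sigma$ produces the Cayley table $T_\sigma$ of the transported operation $i *_\sigma j = \sigma\bigl(\sigma^{-1}(i)\cdot\sigma^{-1}(j)\bigr)$. Every table $T_\sigma$ represents a group isomorphic to $\ZZ_n$, hence is a legitimate Cayley table of a group of order $n$, so it suffices to count the distinct tables obtained as $\sigma$ ranges over $S_n$.

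The key step is to identify the stabilizer of a table. A short computation --- substituting $i=\sigma(a)$ and $j=\sigma(b)$ into the defining identity --- shows that $T_\sigma = T_\tau$ exactly when $\rho = \tau^{-1}\sigma$ satisfies $\rho(a\cdot b)=\rho(a)\cdot\rho(b)$ for all $a,b$, i.e.\ exactly when $\rho\in\mathrm{Aut}(\ZZ_n)$. By the orbit--stabilizer theorem the orbit of the standard table therefore has size $n!/|\mathrm{Aut}(\ZZ_n)|$. Since an automorphism of $\ZZ_n$ is determined by the image of a single generator, $|\mathrm{Aut}(\ZZ_n)| = \phi(n) \le n$, and hence there are at least $n!/n = (n-1)!$ distinct Cayley tables representing $\ZZ_n$ alone. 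By Stirling's approximation $(n-1)! = 2^{\Theta(n\log n)}$, which is in particular $2^{\Omega(n)}$, so the total number of Cayley tables of groups of order $n$ is exponential in $n$.

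I expect the only nonroutine point to be the stabilizer computation --- verifying that a relabeling fixes the table precisely when it is an automorphism --- and that is where any subtlety lies; the rest is orbit counting and Stirling. To round out the picture I would also note that the same orbit decomposition yields a matching upper bound: summing $n!/|\mathrm{Aut}(G)| \le n!$ over the quasipolynomially many isomorphism types of order $n$ gives at most $n!$ times a quasipolynomial factor, which is again $2^{\Theta(n\log n)}$, so the count is in fact $2^{\Theta(n\log n)}$.
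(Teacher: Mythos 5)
Your proof is correct and follows essentially the same route as the paper's: relabelling the Cayley table of $\ZZ_n$ by permutations, identifying the stabilizer as $\mathrm{Aut}(\ZZ_n)$, and applying orbit--stabilizer with $\phi(n)\le n$ to get the $(n-1)!$ lower bound. The matching upper bound you add at the end is extra material not in the paper, but it does not change the argument.
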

\begin{proof}
    Let $C$ be the Cayley table of a group $G$ on elements $\{1,\ldots,n\}$, i.e.,  $C[i,j]$ stores the product of  $i$ and $j$. By relabelling the elements of $G$ by a permutation $\pi$ of $[n]$, we get the Cayley table, say $C^\pi$, of an isomorphic copy of $G$. We have $C^\pi[i,j]=C[ i^{\pi^{-1}},  j^{\pi^{-1}} ]$. One can check that $C$ and $C^\pi$ are the same if and only if $\pi$ is a group automorphism of the group $G$. By the orbit-stabiliser theorem, we obtain that the number of different Cayley tables representing isomorphic copies of $G$ is $n!/|Aut(G)|$, where $Aut(G)$ is the set of group automorphisms of $G$. If we take $G=\mathbb{Z}_n$, then $|Aut(G)|=\phi(n)\leq n$, where $\phi$ is the Euler totient function. Thus, the number of different Cayley tables is at least $(n-1)!=2^{\Omega(n\log n)}$. 
 \end{proof}


The number of adjacency matrices representing power graphs on $n$ vertices can be very small: If $n$ is a prime power,  then the power graph is a complete graph (\Cref{complete power}). However, this is the only case when this number is small, as shown in the next lemma. 

\medskip 

\begin{lemma}\label{number of adj matrices}
    The number of distinct adjacency matrices representing power graphs on $n$ vertices is $\Omega(\exp(\sqrt{n}))$, when $n$ is a number with at least two distinct prime divisors. For infinitely many $n$, there are $\Omega(\exp(n/\log \log n))$ distinct adjacency matrices representing power graphs on $n$ vertices.
\end{lemma}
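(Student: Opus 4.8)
The plan is to produce the lower bound from a single power graph, namely $Pow(\ZZ_n)$, by counting how many of its relabelings are pairwise distinct as adjacency matrices. The key structural observation I would establish first is that in $Pow(\ZZ_n)$ the \emph{universal} vertices (those adjacent to every other vertex) are exactly the identity together with the $\phi(n)$ generators of $\ZZ_n$. Indeed, by the adjacency rule for cyclic power graphs (\Cref{remark power graph}), a vertex of order $d$ is adjacent to a vertex of order $e$ iff $d\mid e$ or $e\mid d$, so a vertex of order $d$ is universal iff $d$ is comparable under divisibility to every divisor of $n$. If $1<d<n$ and $n$ has at least two distinct prime divisors, I would argue $d$ must be incomparable to some prime power $p^{a}$ exactly dividing $n$: if $d\mid p^{a}$ then $d$ is a nontrivial $p$-power, incomparable to any other maximal prime-power divisor $q^{b}$; otherwise $p^{a}\mid d$ for every such $p^{a}$, forcing $n\mid d$ and hence $d=n$. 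Thus the universal set $U$ has size exactly $u:=1+\phi(n)$.

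Next I would convert this into a counting bound, bypassing any direct computation of $|\mathrm{Aut}(Pow(\ZZ_n))|$. Whether a vertex is universal can be read off its row in the adjacency matrix, so the universal set is a matrix invariant; and conversely, for every $u$-subset $W\subseteq\{1,\dots,n\}$ there is a relabeling of $Pow(\ZZ_n)$ whose universal set is precisely $W$ (map $U$ onto $W$ and $V\setminus U$ onto the complement by any two bijections). Distinct choices of $W$ therefore yield distinct adjacency matrices, all representing power graphs on $n$ vertices, so there are at least $\binom{n}{u}=\binom{n}{1+\phi(n)}$ of them.

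For the two quantitative bounds I would estimate this binomial via $\binom{n}{u}\ge 2^{\min(u,\,n-u)}$. For the first claim, since $n$ has at least two distinct prime divisors its smallest prime factor satisfies $p_{1}\le\sqrt n$ (two prime factors each exceeding $\sqrt n$ would multiply to more than $n$), whence $n-\phi(n)\ge n/p_{1}\ge\sqrt n$ and thus $n-u\ge\sqrt n-1$; combined with the elementary bound $\phi(n)\ge\sqrt n$ (valid for $n>6$), so that $u\ge\sqrt n$, this gives $\min(u,n-u)\ge\sqrt n-1$ and hence $\binom{n}{u}\ge 2^{\sqrt n-1}=\Omega(\exp(\sqrt n))$. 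For the second claim I would let $n$ range over the primorials $n=\prod_{p\le y}p$: by Mertens' theorem together with the prime number theorem one has $\phi(n)=\Theta(n/\log\log n)$ (the minimal order of Euler's function), and for large such $n$ we have $u\le n/2$, so $\binom{n}{u}\ge 2^{u}\ge 2^{\phi(n)}=\exp(\Omega(n/\log\log n))$; since there are infinitely many primorials the claim follows. (The even simpler family $n=2p$, with $p$ prime, already gives $u=p=n/2$ and therefore $\exp(\Omega(n))$ matrices, which is stronger still.)

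I expect the main obstacle to be the structural step rather than the arithmetic: one must pin down the universal vertices \emph{exactly}, and then observe both that the universal set is recoverable from the adjacency matrix and that every $u$-subset is realizable as a universal set. This is precisely what turns ``one graph with a large, automorphism-stable vertex class'' into an honest count of $\binom{n}{u}$ distinct matrices. The remaining number-theoretic inputs, namely $p_{1}\le\sqrt n$ and $\phi(n)\ge\sqrt n$ for the first bound and the minimal order of $\phi$ for the second, are standard.
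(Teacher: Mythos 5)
Your proposal is correct in substance but reaches the key counting bound by a genuinely different route than the paper. The paper counts relabelings of $Pow(\mathbb{Z}_n)$ via the orbit--stabilizer theorem, getting $n!/|Aut(Pow(\mathbb{Z}_n))|$, and then invokes the exact automorphism group formula $|Aut(Pow(\mathbb{Z}_n))|=(\phi(n)+1)!\prod_{d\mid n,\, d\notin\{1,n\}}(\phi(d))!$ from \cite{FENG2016197}, relaxing it to $\binom{n}{\phi(n)+1}$ via $\sum_{d\mid n}\phi(d)=n$. You instead pin down the universal vertices of $Pow(\mathbb{Z}_n)$ (exactly the identity plus the $\phi(n)$ generators when $n$ has two distinct prime divisors; your divisibility case analysis is sound), observe that the universal set is readable off the adjacency matrix, and realize every $(\phi(n)+1)$-subset as the universal set of some relabeling, arriving at the same bound $\binom{n}{\phi(n)+1}$ with no appeal to the automorphism group. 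This is more elementary and self-contained; the paper's route in exchange yields the exact orbit size (though it only ever uses the binomial relaxation). Your arithmetic inputs ($p_1\le\sqrt{n}$, hence $n-\phi(n)\ge\sqrt{n}$; $\phi(n)\ge\sqrt{n}$; the minimal order of $\phi$) match the paper's.

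The one flaw is the final estimation step: $\binom{n}{u}\ge 2^{\min(u,\,n-u)}$ is too lossy. It yields $2^{\sqrt{n}-1}$ and $2^{\Theta(n/\log\log n)}$, which are exponential in $\sqrt{n}$ and in $n/\log\log n$ but, read literally, are not $\Omega(\exp(\sqrt{n}))$ or $\Omega(\exp(n/\log\log n))$: for base $e$ one has $2^{\sqrt{n}-1}/e^{\sqrt{n}}\to 0$, and note that your own conclusion for the second claim is stated as $\exp(\Omega(n/\log\log n))$, which is weaker than the lemma's $\Omega(\exp(n/\log\log n))$. The repair is immediate: use $\binom{n}{k}\ge (n/k)^k$ together with monotonicity of $k\mapsto\binom{n}{k}$ for $k\le n/2$. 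In the first case $\binom{n}{u}\ge\binom{n}{\lceil\sqrt{n}\rceil-1}\ge n^{(\sqrt{n}-1)/2}$, whose exponent $\tfrac12(\sqrt{n}-1)\ln n$ dominates $\sqrt{n}$; in the second, $n/u=\Theta(\log\log n)\to\infty$, so the exponent gains an unbounded $\log\log\log n$ factor over $n/\log\log n$. Alternatively, your parenthetical family $n=2p$ already settles the second claim on its own: there $u=n/2$ and $\binom{n}{n/2}\ge 2^n/(n+1)\ge e^{n/\log\log n}$ for large $n$, since $(\ln 2)\,n \gg n/\log\log n$. So nothing structural is missing; only the last inequality needs strengthening.
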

\begin{proof}
    We show that the number of distinct adjacency matrices representing isomorphic copies of power graphs of $\mathbb{Z}_n$ is itself $\Omega(\exp(\sqrt{n}))$. Similar to the case above for Cayley tables (\Cref{number of cayley tables}), the number of distinct adjacency matrices representing isomorphic copies of a graph $X$ is $n!/|Aut(X)|$, where $Aut(X)$ is the set of automorphisms of the graph $X$. It is known that $|Aut(Pow(\mathbb{Z}_n))|=(\phi(n)+1)!\prod_{d|n, d\notin \{1,n\}} (\phi(d))!$ \cite{FENG2016197}. Now, note the following:
    \begin{equation*}
    \begin{split}
         \frac{n!}{(\phi(n)+1)! \prod_{d|n , d\notin\{1,n\} } (\phi(d))! } & \geq \frac{n!}{(\phi(n)+1)!  (\sum_{d|n , d\notin\{1,n\} } \phi(d))! }\\
         & = \frac{n!}{(\phi(n)+1)!(n-\phi(n)-1)!}\\
         & = \binom {n}{\phi(n)+1}.
    \end{split}   
    \end{equation*}
    The first equality holds because $\sum_{d |n} \phi(d)=n$. From the fact $\sqrt{n}/2 \leq \phi(n) \leq  n - \sqrt{n}$ (the second inequality holds because $n$ has at least two distinct prime divisors), we have $\binom {n}{\phi(n)+1} \geq \binom{n}{\sqrt{n}}$. Using Stirling's approximation, 
    we get $\binom{n}{\sqrt{n}}=\Omega(exp(\sqrt{n}))$.
    
    For infinitely many $n$, it is known that $\frac{n}{c_1\log\log n} < \phi(n) <  \frac{n}{c_2\log\log n}$, where $c_1$ and $c_2$ are two fixed positive constants (Note that such an $n$ cannot be a prime power and therefore, the result from \cite{FENG2016197} can be applied.). The first inequality holds for any $n\geq 3$ (see \cite{rosser1962approximate}), and the second inequality holds for infinitely many $n$ (see 
    \cite{ribenboim2012new}). Using Stirling's approximation, we get $\binom{n}{\phi(n)+1}=\Omega(exp(n/ \log\log n))$.
    \end{proof}

{\bf Proof of \Cref{iso-inv}:} Let $(\Gamma_1\ldots,\Gamma_k,x)$ be an instance of problem $L$. We use Babai's quasipolynomial time computable graph canonization function \cite{babai2016graph}, which we denote by $canon$, on each of $\Gamma_1,\ldots,\Gamma_k$ to obtain $canon(\Gamma_1),\ldots,$ $canon(\Gamma_k)$. Since $L$ is isomorphism invariant, $(\Gamma_1,\ldots,\Gamma_k,x)\in L $ if and only if $(canon(\Gamma_1),\ldots, canon(\Gamma_k),x)\in L$.

Let $L'=\{(canon(\Gamma_1),\ldots, canon(\Gamma_k),x)| (\Gamma_1,\ldots,\Gamma_k,x)\in L\}$. Since the number of non-isomorphic groups of order $n$ is $2^{O(\log^3 n)}$  \cite{mciver1987enumerating}, the number of possibilities of $canon(\Gamma_i)$ is at most quasipolynomial in the size of the input. Since $k$ is a constant, we observe that $L'$ is a quasipolynomially sparse language. Therefore, $L'$ is not NP-hard under ETH by \Cref{maha}. Since $L$ is quasipolynomial time reducible to $L'$, $L$ is also not NP-hard under ETH. \hfill$\blacktriangleleft$


A similar notion of isomorphism invariant problems on groups can be defined. Let $\mathscr{G}$ be the class of finite groups closed under isomorphism. We consider problems whose instances are of the form $(\Gamma_1,\ldots, \Gamma_k,x)$, where $\Gamma_i\in \mathscr{G}$ is given by its Cayley table, $k$ is a constant,  and  $||x||=O(\log s)$, where $s=||(\Gamma_1,\ldots, \Gamma_k,x)||$, is the size of the instance. 
Such a language $L$ is called \emph{an isomorphism invariant group problem}, if for all $\Lambda_i,\Gamma_i\in \mathscr{G}$ with $\Gamma_i\cong \Lambda_i$, where $1\leq i \leq k$, 
 we have $(\Gamma_1,\ldots,\Gamma_k,s)\in L$ if and only if $(\Lambda_1,\ldots,\Lambda_k,s)\in L$. 
The next result is similar.

\medskip 

\begin{theorem}\label{iso-inv-group}
   If $L$ is an isomorphism invariant group problem, then $L$ is not $NP$-complete under ETH.
\end{theorem}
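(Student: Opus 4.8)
The plan is to mirror the proof of \Cref{iso-inv}, replacing the graph canonization step with group canonization. Given an instance $(\Gamma_1,\ldots,\Gamma_k,x)$ of the group problem $L$, where each $\Gamma_i$ is a finite group supplied by its Cayley table, I would first compute a canonical form $canon(\Gamma_i)$ for each group. The key fact I need is that such a group canonization function is computable in quasipolynomial time: by Babai's bound (or the earlier work on isomorphism testing via the \cite{mciver1987enumerating} enumeration), group isomorphism for groups of order $n$ given by Cayley tables can be decided, and a canonical Cayley table produced, in time $n^{O(\log n)}$. Since $L$ is isomorphism invariant, $(\Gamma_1,\ldots,\Gamma_k,x)\in L$ if and only if $(canon(\Gamma_1),\ldots,canon(\Gamma_k),x)\in L$.

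Next I would define the sparsified language $L'=\{(canon(\Gamma_1),\ldots,canon(\Gamma_k),x)\mid (\Gamma_1,\ldots,\Gamma_k,x)\in L\}$ and argue that $L'$ is quasipolynomially sparse. The number of non-isomorphic groups of order $n$ is $2^{O(\log^3 n)}$ by \cite{mciver1987enumerating}, so for each coordinate there are at most quasipolynomially many distinct canonical forms. Because $k$ is a constant and $x$ has length $O(\log s)$ (contributing only a polynomial, hence absorbable, factor of $2^{O(\log s)}$ strings), the total number of strings of a given length in $L'$ is bounded by $(2^{O(\log^3 n)})^k\cdot 2^{O(\log s)}=2^{O(\log^c s)}$ for a suitable constant $c$. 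Thus $L'$ is quasipolynomially sparse, and by \Cref{maha} it is not NP-hard under ETH.

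Finally, since $canon$ runs in quasipolynomial time, the map $(\Gamma_1,\ldots,\Gamma_k,x)\mapsto(canon(\Gamma_1),\ldots,canon(\Gamma_k),x)$ is a quasipolynomial-time reduction from $L$ to $L'$. If $L$ were NP-complete, then composing a polynomial-time reduction from an NP-complete problem into $L$ with this reduction would make $L'$ NP-hard under quasipolynomial-time reductions, contradicting \Cref{maha} under ETH. Hence $L$ is not NP-complete under ETH.

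The main obstacle, and the only substantive difference from \Cref{iso-inv}, is justifying the existence of a quasipolynomial-time \emph{group} canonization function rather than merely a quasipolynomial-time isomorphism \emph{test}. I expect to invoke the standard fact that groups of order $n$ admit canonical forms computable in $n^{O(\log n)}$ time (following from the classical $n^{O(\log n)}$ generator-enumeration approach to group isomorphism, which can be made canonical), so that the output of $canon$ is itself a bona fide group encoding lying in the same class $\mathscr{G}$; this ensures isomorphism invariance of $L$ can be applied to the canonical instances. Everything else is a direct transcription of the counting and reduction argument already used for power graphs.
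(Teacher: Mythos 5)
Your proposal is correct and follows essentially the same route as the paper, which proves \Cref{iso-inv-group} by the identical canonize--sparsify--apply-\Cref{maha} argument used for \Cref{iso-inv}, with the only substantive change being the need for a quasipolynomial-time canonization of groups given by Cayley tables (obtainable either via the $n^{O(\log n)}$ generator-enumeration canonization you cite, or by encoding the Cayley table as a graph and reusing Babai's graph canonization). Your counting of canonical forms via the $2^{O(\log^3 n)}$ bound on groups of order $n$ and the final quasipolynomial-time reduction step match the paper's argument exactly.
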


\medskip

\begin{remark}
Due to Buhrman and Homer \cite{10.1007/3-540-56287-7_99}, under $EXP\neq NEXP$, no quasipolynomially sparse language is NP-complete. Hence, the isomorphism invariant problems on both group-based graphs (\Cref{iso-inv}) and groups (\Cref{iso-inv-group}) are not NP-complete, under the assumption $EXP\neq NEXP$.  
\end{remark}


\section{Hardness of the \textsc{Weighted Max-Cut} Problem and \textsc{Graph Motif} Problem}\label{sec: motif}

First, we study the \textsc{Weighted Max-Cut} problem: given an edge-weighted graph $\Gamma$ and a number $k$, the goal is to find a subset $S \subset V(\Gamma)$ such that the total weight of edges across $S$ and $V(\Gamma)\setminus S$ is at least $k$.  


\medskip 

\begin{theorem}\label{max-cut np}
    \textsc{Weighted Max-Cut} is NP-complete for power graphs, even when the underlying groups are cyclic.
\end{theorem}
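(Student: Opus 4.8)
The plan is to prove the two directions separately: membership in NP, which is routine, and NP-hardness, which I would obtain by reducing from the (unweighted) \textsc{Max-Cut} problem, a classical NP-complete problem. Membership is immediate, since given a candidate partition $(S, V(\Gamma)\setminus S)$ one can sum the weights of the crossing edges and compare the total with $k$ in polynomial time.

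The central idea for hardness is to exploit \Cref{complete power}: the power graph of a cyclic group of prime-power order is a complete graph. Consequently every complete graph $K_q$ with $q$ a prime power is realised as $Pow(\mathbb{Z}_q)$, so it suffices to encode an arbitrary \textsc{Max-Cut} instance into a complete graph using edge weights. Concretely, given an instance $(G,k)$ of \textsc{Max-Cut} with $G=(V,E)$ and $N=|V|$, I would choose the smallest power of two $q\geq N$ (so $q\leq 2N$), identify $V$ with $N$ of the $q$ vertices of $Pow(\mathbb{Z}_q)=K_q$, and define a weight function $w$ assigning weight $1$ to every edge of $E$ and weight $0$ to all the remaining edges of $K_q$, including every edge incident to one of the $q-N$ padding vertices.

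Correctness follows because adjoining weight-$0$ edges changes no cut value: for any $S\subseteq V(K_q)$, the total crossing weight equals the number of edges of $G$ cut by $S\cap V$. Hence $G$ has a cut of size at least $k$ if and only if the weighted power graph $(Pow(\mathbb{Z}_q),w)$ has a cut of weight at least $k$. The whole construction is computable in time polynomial in $N$ (the vertex count at most doubles), and since $\mathbb{Z}_q$ is cyclic this yields NP-hardness already for power graphs of cyclic groups.

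The only genuine obstacle is that the host graph must be an actual power graph rather than an arbitrary complete graph; this is resolved by restricting the blow-up target to prime-power orders, where \Cref{complete power} guarantees completeness. The remaining point to verify is the admissibility of zero weights in the intended formulation of \textsc{Weighted Max-Cut}; if only strictly positive weights are permitted, I would instead give edges of $E$ a large weight and all other edges a common small positive weight, using the fixed padded vertex count $q$ to keep the baseline non-edge contribution controllable, and shift the threshold accordingly, but the zero-weight version above is the cleanest and matches the stated model.
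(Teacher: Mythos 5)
Your proof is correct, and it follows the same high-level strategy as the paper --- realise a large clique inside the power graph of a cyclic group, embed the hard instance there, and kill all other edges with weight $0$ --- but the concrete route differs in two ways. First, the key structural fact: you invoke \Cref{complete power} and pick $q$ to be the least power of two with $q\geq N$, so that $Pow(\mathbb{Z}_q)$ is \emph{itself} the complete graph $K_q$; the paper instead works in $Pow(\mathbb{Z}_{n^2})$, which is not complete in general, and extracts its clique from the generators of $\mathbb{Z}_{n^2}$ via \Cref{equivalence classes form a clique} together with the estimate $\phi(n^2)=n\cdot\phi(n)>n$. Second, the source problem: you reduce directly from unweighted \textsc{Max-Cut}, whereas the paper first passes from \textsc{Weighted Max-Cut} to \textsc{Weighted Max-Cut} on complete graphs and then embeds. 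Your variant is slightly more elementary (no Euler-$\phi$ computation, and the host graph being globally complete makes the induced-subgraph bookkeeping trivial), and you additionally make the NP-membership check and the zero-weight issue explicit; note that the paper's reduction relies on weight-$0$ edges exactly as yours does, so your fallback with large-versus-small positive weights is extra robustness rather than a repair of something the paper handles differently. Both arguments are polynomial-time and both witness hardness already for cyclic groups, so they prove the same statement.
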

\begin{proof}
     \textsc{Weighted Max-cut} can be easily reduced to  \textsc{Weighted Max-Cut} for complete graphs. 
     We reduce \textsc{Weighted Max-Cut} for complete graphs to \textsc{Weighted Max-Cut} in power graphs of cyclic groups. From an instance of \textsc{Weighted Max-Cut} in $K_n$ (complete graph of size $n$), we construct an instance of \textsc{Weighted Max-Cut} in $Pow(\mathbb{Z}_{n^2})$. From \Cref{equivalence classes form a clique}, the generators of $\mathbb{Z}_{n^2}$ forms a clique of size $\phi(n^2)$. Since $\phi(n^2)= n \cdot \phi(n) 
     > n$ (as $n > 1$), the graph $Pow(\mathbb{Z}_{n^2})$ contains an induced subgraph $\Gamma'$ that is isomorphic to $K_n$. Hence, any edge-weighted $K_n$ can be embedded in edge-weighted $Pow(\mathbb{Z}_{n^2})$ by setting weights of all the edges which are not in $\Gamma'$ as $0$.
      \end{proof}

We now shift our attention to the \textsc{Graph Motif} problem. 
A \textit{motif} $M$ is a multiset of colours; it is \textit{colourful} if it is a set. In a vertex-coloured graph $\Gamma$, the colour multiset $col(V')$ of a set $V' \subseteq V(\G)$ is the multiset of the colours of the vertices in $V'$.
We say that $M$ \emph{occurs} in $\Gamma$ if there exists a connected subgraph induced on $V' \subseteq V(\Gamma)$ such that $col(V') = M$, in which case $V'$ is an \emph{occurrence} of $M$. The \textsc{Graph Motif} problem asks, given a vertex-coloured graph $\Gamma$ and a motif $M$, whether such an occurrence $V'$ exists.


 \medskip 

\begin{theorem}\label{Graph motif not in P}
    Under ETH, \textsc{Graph Motif} problem on power graphs cannot be solved in time $2^{O(\log ^c n)}$, for any constant  $c\geq 0$, even if the underlying group is cyclic and $M$ is colourful.
\end{theorem}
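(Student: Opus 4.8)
The plan is to prove the lower bound by a reduction from $3$-SAT, using the fact that ETH forbids a $2^{o(N)}$ algorithm on $N$ variables. The key design constraint is a \emph{quasipolynomial} blow-up: I would map a formula $\varphi$ on $N$ variables to a colourful \textsc{Graph Motif} instance $(Pow(\ZZ_n),M)$ in which the number of vertices satisfies $\log n = \mathrm{polylog}(N)$, i.e.\ $n = 2^{(\log N)^{O(1)}}$, and the reduction runs in time $\mathrm{poly}(n)=2^{\mathrm{polylog}(N)}$. Granting this, suppose \textsc{Graph Motif} on power graphs admitted a $2^{O(\log^c n)}$ algorithm for some fixed $c\ge 0$. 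Composing it with the reduction would decide $\varphi$ in time $2^{O(\log^c n)}+\mathrm{poly}(n)=2^{O((\log N)^{O(1)})}=2^{o(N)}$, since any fixed power of a polylogarithm is $o(N)$; this contradicts ETH. As the argument applies to every fixed $c$, it rules out quasipolynomial time outright, which is exactly the statement. I stress that a quasipolynomial (rather than polynomial) blow-up is the right target: it is consistent with the subexponential-time algorithm obtained elsewhere in the paper, so I do not expect to prove — and should not attempt — full NP-hardness.

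First I would prepare the source. By the sparsification lemma \cite{impagliazzo2001problems} I may assume $\varphi$ has $M=O(N)$ clauses, so the formula is described by $O(N)$ bits, which is what keeps the target quasipolynomial. The heart of the construction is the choice of $n$. By \Cref{remark power graph}, in $Pow(\ZZ_n)$ two elements are adjacent precisely when one order divides the other, so after collapsing the twin-classes of \Cref{equivalence classes form a clique} the graph is the comparability graph of the divisor lattice of $n$, with each divisor $d$ blown up into a clique of $\phi(d)$ vertices; by \Cref{complete power} the extreme case (prime-power $n$) is a single clique and is useless, so $n$ must be genuinely composite. I would take $n$ squarefree, a product of $k=\mathrm{polylog}(N)$ distinct primes, so the divisor lattice is the Boolean lattice $B_k$ with $2^{k}=2^{\mathrm{polylog}(N)}$ classes — far more than the $O(N)$ needed — while, by the prime number theorem, $\log n=\Theta(k\log k)=\mathrm{polylog}(N)$, meeting the size budget. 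The primes are chosen large enough (the number-theoretic part of the argument) that each class $d$ carries a clique $\phi(d)$ big enough to host the colours its gadget requires.

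Next I would build the gadgets inside this blown-up lattice. Truth values are encoded by colours placed on the vertices of selected classes, so that the \emph{colourful} motif $M$ forces the selection of exactly one vertex per ``slot'' — one truth value per variable — because a repeated colour is forbidden. Clauses are enforced through the \emph{connectivity} requirement together with the divisibility (comparability) structure: a chosen set of literal-vertices can be knit into a single connected induced subgraph only by routing through higher divisors representing satisfied clauses, whose connector classes are coloured so that a connector becomes usable exactly when one of its literals is selected. The universal vertices — the identity, of order $1$, and the generators, of order $n$, which are adjacent to everything — must be neutralised by assigning them colours absent from $M$, so that connectivity cannot be obtained for free. I would then prove both directions: a satisfying assignment yields a connected occurrence realising exactly the colour multiset $M$, and conversely any occurrence of $M$, by colourfulness and connectivity, reads off a consistent satisfying assignment.

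The main obstacle is the rigidity of the target class. Unlike general graphs, power graphs of cyclic groups are \emph{forced} comparability graphs of divisor lattices with clique multiplicities pinned to the values of $\phi$; one cannot draw edges at will, so every selection gadget and every consistency check must be realised purely through divisibility and the colour palette, and the cheapness of connectivity caused by the universal vertices must be actively suppressed. Compounding this is the number-theoretic balancing act: securing enough classes and large enough cliques while keeping $\log n=\mathrm{polylog}(N)$ (and in particular keeping the number of atoms small, which prevents the naive ``one prime per variable'' encoding and forces variables to be carried by composite divisors). Making the colour budget of $M$ match the gadgets exactly — so that colourfulness enforces precisely the intended one-per-slot choices — is where the bulk of the careful work lies.
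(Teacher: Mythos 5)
Your outer framework is exactly the paper's: reduce $3$-SAT on $N$ variables to \textsc{Graph Motif} on $Pow(\ZZ_n)$ with $n$ squarefree, a product of polylogarithmically many small primes (so $\log n = \mathrm{polylog}(N)$), make the motif colourful, kill all vertices outside the gadget by giving them a colour not in $M$, and then compose with a hypothetical $2^{O(\log^c n)}$ algorithm to contradict ETH. That composition step is correct as you state it (the paper's \Cref{sat to pow} produces $n = 2^{O(\log N\log\log N)}$, within your budget), and your remark that one should aim for quasipolynomial rather than polynomial blow-up is the right calibration. Incidentally, sparsification is unnecessary: $m=O(N^3)$ clauses is harmless, since the number of primes needed remains $O(\log N)$ and the blow-up is quasipolynomial regardless.

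The genuine gap is in the gadget mechanism, and it is not just an unfinished detail — the mechanism you sketch fails in both directions. First, completeness: literal vertices must be pairwise non-adjacent (their orders form an antichain, as you note), and you propose to connect them \emph{only} through connectors of satisfied clauses. But satisfiability does not make that incidence structure connected: for $(x_1\lor x_2)\wedge(x_3\lor x_4)$ under the all-true assignment, the occurrence splits into the component of $\{x_1,x_2,C_1\}$ and that of $\{x_3,x_4,C_2\}$, so no valid occurrence of $M$ exists even though the formula is satisfiable. The paper's fix is a dedicated hub: a root $r$ of order $p_1$ with colour $0\in M$, every literal vertex having order $p_1\prod_{t\in f_{\PP}(l)}p_t$, so all selected literals are glued through $r$ while $r$ stays non-adjacent to every connector. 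Second, soundness: you want a connector to be ``usable exactly when one of its literals is selected,'' but adjacency in $Pow(\ZZ_n)$ is structural (divisibility) and cannot depend on selection; realising a clause connector as a \emph{common multiple} of its three literals' orders makes it adjacent to all three regardless, and worse, the union of three half-size prime sets typically contains many \emph{other} literals' sets, creating spurious connector–literal edges that destroy the reading-off of an assignment. The paper avoids this by hanging a separate connector clique \emph{below} each literal, of order $\prod_{t\in f_{\PP}(l)}p_t$ (drop $p_1$): by the antichain property, such a connector is adjacent, among all literal vertices, to exactly its own literal, so connectivity of an occurrence forces each clause colour to sit next to a selected (true) literal containing that clause. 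Both of your flaws are repairable, but only by introducing precisely these devices, which your proposal does not contain.
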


Using the next lemma, \Cref{Graph motif not in P} can be easily proved.

\medskip 

\begin{lemma}\label{sat to pow}
    There is an $n^{O( \log (\log n))}$-time many-one reduction from $3$-SAT to \textsc{Graph Motif} in power graphs of cyclic groups.
\end{lemma}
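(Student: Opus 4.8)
The plan is to reduce a $3$-CNF formula $\varphi$ with variables $x_1,\dots,x_n$ and clauses $C_1,\dots,C_t$ (so $t=O(n^3)$) to a \textsc{Graph Motif} instance on $Pow(\ZZ_N)$ for a squarefree modulus $N$. The guiding observation is \Cref{remark power graph}: in the power graph of a cyclic group two vertices are adjacent exactly when their orders are comparable under divisibility. If $N=\prod_{p\in P}p$ is squarefree, then its divisors are indexed by subsets of the prime ``ground set'' $P$, and selecting one representative from each relevant equivalence class $[x]$ (\Cref{equivalence classes form a clique}) yields an induced subgraph whose adjacency is precisely the containment order among the chosen subsets. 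The whole reduction is thus phrased as a set system over $P$. The number-theoretic budget is fixed by the target time: I may afford only $k=O(\log n)$ distinct primes, because the product of the first $k$ primes is $e^{\Theta(k\log k)}$, so $N=2^{\Theta(\log n\,\log\log n)}=n^{O(\log\log n)}$, which matches the claimed bound since $|V(Pow(\ZZ_N))|=N$ equals the output size.

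The gadget uses one distinguished prime $q$ together with a \emph{$3$-cover-free family} $\{S_\ell\}$ of $2n$ distinct proper subsets of $P\setminus\{q\}=\{p_1,\dots,p_k\}$, one set per literal, where no member is contained in the union of three others; such superimposed codes are polynomial-time constructible over a ground set of size $k=O(\log n)$ for this constant cover-freeness. I assign to each literal $\ell$ the order $d_\ell=\prod_{i\in S_\ell}p_i$, to each clause $C_j$ the order $d_{C_j}=q\cdot\prod_{i\in T_j}p_i$ with $T_j=\bigcup_{\ell\in C_j}S_\ell$, and to one \emph{hub} vertex the order $d_h=\prod_{i=1}^{k}p_i$. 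Cover-freeness delivers exactly the four adjacency facts I need: the literals form an antichain (pairwise non-adjacent); $d_\ell\mid d_{C_j}\iff S_\ell\subseteq T_j\iff \ell\in C_j$, so each clause is adjacent to precisely its own literals; $T_j\subseteq T_{j'}$ forces $C_j\subseteq C_{j'}$ and hence $C_j=C_{j'}$, so distinct clauses are non-adjacent; and every $d_\ell$ divides $d_h$ (the hub lies above all literals) while the hub is incomparable to each $d_{C_j}$, since $q$ divides every clause order but not $d_h$, and $d_h\nmid d_{C_j}$ as $d_h$ already contains all $p_i$ but $T_j$ is proper. I then give the hub a private colour, share a single colour $A_i$ between the two literals of $x_i$, give each clause vertex its own colour, and paint every remaining element of $\ZZ_N$ with a dummy colour; the motif $M$ is the (colourful) set of all non-dummy colours.

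Correctness is then structural: a colourful occurrence must contain the hub and all clause vertices and must choose exactly one literal per variable, so it encodes a truth assignment; the hub together with the chosen literals always forms a star, and a clause vertex attaches to this component iff one of its literals was chosen, and in no other way, being non-adjacent to the hub, to all other clauses, and to unchosen literals. Hence the occurrence is connected iff every clause is satisfied, giving $\varphi$ satisfiable $\iff$ $M$ occurs, with $M$ colourful; the running time is dominated by writing down $Pow(\ZZ_N)$ on $N=n^{O(\log\log n)}$ vertices (each order and comparability costing $\mathrm{poly}(\log N)$), which is $n^{O(\log\log n)}$. I expect the main obstacle to be the connectivity backbone rather than the incidence encoding: naively, any vertex adjacent to all literals is forced, because every clause order lies above some literal order, to become comparable to the clauses as well and thereby short-circuit the connectivity test. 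The device that rescues the construction is the dedicated prime $q$, which keeps the hub below every literal yet incomparable to every clause, and the bulk of the bookkeeping lies in checking that this one prime preserves all four adjacency requirements simultaneously while the cover-free family still fits into $O(\log n)$ primes.
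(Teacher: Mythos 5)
Your reduction is correct, and it shares the paper's outer shell: take $N$ to be the product of the first $O(\log n)$ primes (so $N=n^{O(\log\log n)}$), reserve one distinguished prime, encode literals as squarefree orders indexed by subsets of the remaining primes, make the motif colourful, and turn satisfaction into connectivity. But the way you encode clause--literal incidence is genuinely different. The paper first builds an intermediate gadget graph $\GG$ (a root $r$ adjacent to all literal vertices, with a clique of clause-copies hanging off each literal --- one copy of $C_j$ for \emph{every} literal occurring in $C_j$, hence $3m$ clause vertices) and then embeds $\GG$ into $Pow(\ZZ_N)$: the root has order $p_1$, a literal $l$ has order $p_1\prod_{t\in f_\PP(l)}p_t$, and its clause-copies are distinct elements of the class $[h]$ of order $\prod_{t\in f_\PP(l)}p_t$ sitting \emph{below} the literal. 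The only combinatorial design the paper needs is an antichain (all $(b-1)/2$-subsets), and the price it pays is the class-size bookkeeping $|[h]|\geq |N(u)|-1$, i.e.\ the requirement $\phi(ord(h))\geq cn^3$ that drives the inequality analysis in \Cref{star-clique to pow}. You instead keep a \emph{single} vertex per clause and place it \emph{above} its literals, via the union of their prime sets times the marker $q$, with a hub above all literals --- a dual orientation to the paper's. This buys you a cleaner picture (no clause duplication, no Euler-phi counting), but it costs you a strictly stronger design: an antichain no longer suffices, since a constant-weight family like the paper's is easily covered by a union of three members, so you genuinely need $3$-cover-freeness to kill the spurious literal--clause and clause--clause adjacencies. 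Such a family over a ground set of $O(\log n)$ primes does exist and is deterministically constructible in polynomial time (derandomizing the random construction, or Porat--Rothschild), but you should cite this explicitly: the classical explicit Kautz--Singleton construction only achieves a ground set of size $O(\log^2 n)$, which would inflate $N$ to $n^{O(\log n\,\log\log n)}$ --- still enough for \Cref{Graph motif not in P}, but not for the lemma's stated bound. Two pieces of fine print to add: normalize the $3$-SAT instance so that clauses are pairwise distinct sets of exactly three literals (this is what makes your ``distinct clauses are non-adjacent'' claim literally true, though mild violations would not in fact break correctness), and justify that each union $T_j$ is a proper subset of the ground set (automatic if the family has constant weight about $k/4$, or simply give the hub one spare prime that no literal set uses).
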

\begin{proof} The proof is divided into two main steps:\\
 (1) Given a $3$-SAT formula ${\Phi}$, with $n$ variables, we construct an instance $(\GG,M)$ of \textsc{Graph Motif} in a vertex-coloured graph $\GG$ in polynomial time such that $\Phi$ is satisfiable if and only if $M$ occurs in $\GG$.\\
 (2) We embed $\GG$ as an \emph{induced subgraph} in $Pow(\ZZ_N)$, where $N=n^{O(\log(\log n))}$.


 
\noindent\textit{Step 1:} Let $\Phi=C_1 \wedge C_2 \wedge \dots \wedge C_m$ be an instance of $3$-SAT over variables $x_1,x_2,\dots,x_n$. 
To construct $\GG$, we first create a vertex $r$ and colour it with $0$. Then for each literal $l_i \in \{x_i,\overline{x}_i\}$, $1 \leq i \leq n$, we create two vertices $x_i$ and $\overline{x}_i$ both coloured with $i$. After that, we add the following edges: $\{r,x_i\}$, $\{r,\overline{x}_i\}$, for $1 \leq i \leq n$. Now for each literal $l_i$, we do the following: Suppose $l_i$ appears in clauses $C_{i_1}, C_{i_2},\dots, C_{i_s}$; then create a clique consisting of vertices $z_{i_1}, z_{i_2}, \dots, z_{i_s}$ that are coloured with $C_{i_1}, C_{i_2},\dots, C_{i_s}$ respectively, and add the edges: $\{l_i,z_{i_j}\}$, $1\leq j \leq s$. Note that a new set of $z$-vertices is added for vertices corresponding to each literal. Therefore, the total number of added $z_{i_j}$ vertices is $3m$. The motif $M$ is specified to be the set of colours $\{0,1,2,\dots,n,C_1,C_2,\dots,C_m\}$. 

\begin{figure}[hpt!]
    \centering
    \begin{tikzpicture}[scale=0.3]
        
        \coordinate (O) at (12,11);
        
        \coordinate (x1) at (2,6);
        \coordinate (x1n) at (6,6);
        \coordinate (x2) at (10,6);
        \coordinate (x2n) at (14,6);
        \coordinate (x3) at (18,6);
        \coordinate (x3n) at (22,6);
        
        \coordinate (y1) at (1,1);
        \coordinate (y2) at (3,1);
        \coordinate (y3) at (5,1);
        \coordinate (y4) at (7,1);
        \coordinate (y5) at (9,1);
        \coordinate (y6) at (11,1);
        \coordinate (y7) at (10,-1);
        \coordinate (y8) at (14,1);
        \coordinate (y9) at (17,1);
        \coordinate (y10) at (19,1);
        \coordinate (y11) at (18,-1);
        \coordinate (y12) at (22,1);
        
       \draw (O) -- (x1);
       \draw (O) -- (x1n);
       \draw (O) -- (x2);
       \draw (O) -- (x2n);
       \draw (O) -- (x3);
       \draw (O) -- (x3n);
       
       \draw (x1) -- (y1);
       \draw (x1) -- (y2);
       \draw (x1n) -- (y3);
       \draw (x1n) -- (y4);
       \draw (x2) -- (y5);
       \draw (x2) -- (y6);
       \draw (x2n) -- (y8);
       \draw (x3) -- (y9);
       \draw (x3) -- (y10);
       \draw (x3n) -- (y12);
       
       \draw (y1) -- (y2);
       \draw (y3) -- (y4);
       \draw (y5) -- (y6);
       \draw (y9) -- (y10);
       \draw (y11) -- (y9);
       \draw (y11) -- (x3);
       \draw (y11) -- (y10);
       \draw (y5) -- (y7);
       \draw (x2) -- (y7);
       \draw (y6) -- (y7);

       \draw (O) -- (x2);
       \draw (O) -- (x2n);
       \draw (O) -- (x3);
       \draw (O) -- (x3n);

        \filldraw [brown] (O) circle(10pt);
        \filldraw [magenta] (x1) circle(8pt);
        \filldraw [magenta] (x1n) circle(8pt);
        \filldraw [magenta] (x2) circle(8pt);
        \filldraw [magenta] (x2n) circle(8pt);
        \filldraw [magenta] (x3) circle(8pt);
        \filldraw [magenta] (x3n) circle(8pt);
        
        \filldraw [cyan] (y1) circle(8pt);
        \filldraw [cyan] (y2) circle(8pt);
        \filldraw [cyan] (y3) circle(8pt);
        \filldraw [cyan] (y4) circle(8pt);
        \filldraw [cyan] (y5) circle(8pt);
        \filldraw [cyan] (y6) circle(8pt);
        \filldraw [cyan] (y7) circle(8pt);
        \filldraw [cyan] (y8) circle(8pt);
        \filldraw [cyan] (y9) circle(8pt);
        \filldraw [cyan] (y10) circle(8pt);
        \filldraw [cyan] (y11) circle(8pt);
        \filldraw [cyan] (y12) circle(8pt);

        \draw (O) node[above, yshift=3pt] {$0$};
        \draw (x1) node[left] {$1$};
        \draw (x1n) node[left] {$1$};
        \draw (x2) node[left, xshift=-2 pt] {$2$};
        \draw (x2n) node[right,xshift=2pt] {$2$}; 
        \draw (x3) node[right,xshift=2pt] {$3$};
        \draw (x3n) node[right,xshift=2pt] {$3$}; 
        
        \draw (y1) node[below, yshift=-2pt] {$C_1$};
        \draw (y2) node[below, yshift=-2pt] {$C_3$};
        \draw (y3) node[below, yshift=-2pt] {$C_4$};
        \draw (y4) node[below, yshift=-2pt] {$C_2$};
        \draw (y5) node[below, yshift=-2pt] {$C_1$}; 
        \draw (y6) node[below, xshift=2 pt, yshift=-2pt] {$C_2$};
        \draw (y7) node[below, yshift=-2pt] {$C_4$};
        \draw (y8) node[below, yshift=-2pt] {$C_3$};
        \draw (y9) node[below, yshift=-2pt] {$C_1$};
        \draw (y10) node[below, xshift=3 pt, yshift=-2pt] {$C_3$};
        \draw (y11) node[below, yshift=-2pt] {$C_4$}; 
        \draw (y12) node[below, yshift=-2pt] {$C_2$};
              
         \end{tikzpicture}
        \caption{\footnotesize{$C_1=(x_1\lor x_2 \lor x_3)$, $C_2=(\overline{x}_1\lor x_2 \lor \overline{x}_3)$, $C_3=(x_1\lor \overline{x}_2 \lor x_3)$, $C_4=(\overline{x}_1\lor x_2 \lor x_3)$}}
\end{figure}


    Since the number of clauses in ${\Phi}$ is $O(n^3)$ (as ${\Phi}$ is a $3$-SAT), the construction of $\GG$ above is clearly polynomial in $n$. The correctness of this construction is stated in the next claim (See  \Cref{app: proof of claim 1(sat to star-clique)} for the proof of the claim).

\begin{claimc}
    \label{sat to star-clique}
        $\Phi$ is satisfiable if and only if $M$ occurs in $\GG$.
\end{claimc}

\noindent \textit{Step 2:} We embed $\GG$ as an induced subgraph in $Pow(\mathbb{Z}_N)$ for some suitably chosen $N$. The number $N$ is the product of the first $b$ primes, where $b$ is an odd number. We later show that setting $b=O(\log n)$ is sufficient. It is a well-known fact that the product of the first $b$ primes is bounded above by $2^{(1+o(1))b \log b}$ (see, e.g., \cite{finch2003mathematical}). With that, we have $N=2^{O(\log n \log (\log n))}$.

First, we describe the embedding procedure. 
Let $N=p_1p_2\dots p_b$, where $p_i$ is the $i$-th natural prime number. To embed $\GG$ in $Pow(\mathbb{Z}_N$), we define a bijective mapping $f : V(\Gamma_{\phi}) \xrightarrow{} S$, where $S \subseteq \mathbb{Z}_N$, such that $f$ is an isomorphism between $\GG$ and $Pow(\mathbb{Z}_N)[S]$. 

Let $\PP_{\frac{b-1}{2}}(\{2,3,\dots,b\})$ denote the collection of subsets $I \subset \{2,3,\dots,b\}$ such that $|I|=\frac{b-1}{2}$. Let $f_{\PP}$ be a one-to-one function 
from the set of literals of $\Phi$ to $\PP_{\frac{b-1}{2}}(\{2,3,\dots,b\})$. The function $f$ maps the vertex $x_i$ (or $\overline{x}_i$) to an element $g \in \mathbb{Z}_N$ of order \(p_1 \prod_{t\in f_{\PP}(l_i)}p_t\), where $l_i=x_i$ (or $l_i=\overline{x}_i$). The element $g$ has a neighbour $h$ of order $\prod_{t\in f_{\PP}(l_i)}p_t$ (by \Cref{remark power graph}). Moreover, $g$ is adjacent to every element in the equivalence class $[h]$ (by \Cref{equivalence classes form a clique}). Suppose $N(u)=\{r,z_{i_1},z_{i_2},\dots,z_{i_k}\}$, where $u=x_i$ or $\overline{x}_i$. Then, the function $f$ maps each $z_{i_j}$, $1\leq j \leq k$, to distinct elements in $[h]$. For this mapping procedure to be one-to-one, we need $|[h]| \geq |N(u)|-1$. The function $f$ maps the vertex $r$ to some element of order $p_1$.

Once we ensure that $f$ is one-to-one, then taking $S=f(V(\GG))$ implies that $f$ is a bijection from $V(\GG)$ to $S$. The next claim shows that taking $b=O(\log n)$ is sufficient for the mapping $f$ to be one-to-one, and consequently to be an isomorphism between $\GG$ and $Pow(\mathbb{Z}_N)[S]$.

\begin{claimc}\label{star-clique to pow}
    If $b=O(\log n)$, then $f$ is an isomorphism between $\GG$ and $Pow(\mathbb{Z}_N)[S]$.
\end{claimc}




\begin{proof}

We begin by identifying two inequalities, that must be satisfied for \( f \) to be one-to-one. Subsequently, we show that \( b = O(\log n) \) satisfies these inequalities.

(1) First, $f_\mathcal{P}$ has to be one-to-one. For that, we need the following: 
\begin{equation}\label{variables_1}
\binom{b-1}{\frac{b-1}{2}} \geq 2n
\end{equation}

(2)  For each vertex $u \in \{x_i,\overline{x}_i \ | \ 1 \leq i \leq n \}$, we must have: $|[h]| \geq |N(u)| - 1$, where $h$ is a neighbour of $f(u)$ in $Pow(\mathbb{Z}_N)$. Note that $|N(u)|$ is at most the number of clauses $m$ in $\Phi$ and $m=O(n^3)$. From the definition of $f$, one can see that $|[h]|$ is at least $\phi(p_2p_3\dots p_{\frac{b+1}{2}})$, as $|[h]|=\phi(ord(h))$. Hence, 
satisfying the following inequality is sufficient: 

\begin{equation}\label{clauses2}
    cn^3 \leq \phi(p_2p_3\dots p_{\frac{b+1}{2}}), \text{ where $c$ is some constant}    
\end{equation}

We now show that for $b=2\log(2n)+1$ and for sufficiently large $n$, both the above inequalities, \Cref{variables_1} and \Cref{clauses2}, 
are satisfied.


Note: $\phi(p_2p_3\dots p_{\frac{b+1}{2}}) = 2.4.6.\dots.(p_{\frac{b+1}{2}}-1)$ $\geq$ $ 2.3.5. \dots. p_{\frac{b-1}{2}}$ $\geq 2^{c'{\frac{b-1}{2}}. \log({\frac{b-1}{2}})}$, for some fixed constant $c'$. The last inequality can be easily derived from the following two facts: (1) $lim_{n\to\infty}(p_1p_2\dots p_n)^{\frac{1}{p_n}}=e$ (see \cite{finch2003mathematical}); (2) prime number theorem, i.e., $\lim_{n\to\infty}\frac{p_n}{n \log n}=1$ (see, e.g., \cite{apostol2013introduction}). Hence, if 
the following inequality holds, then \Cref{clauses2} consequently holds:

\begin{equation}\label{clauses1_1}
    cn^3 \leq 2^{c'{\frac{b-1}{2}}. \log({\frac{b-1}{2}})}
\end{equation}

Now, we find a suitable value of $b$ such that \Cref{variables_1} and \Cref{clauses1_1} simultaneously hold. Let us denote $(b-1)/2$ by $k$. Using Stirling's approximation, 
for sufficiently large $k$, $\binom{2k}{k} \sim \frac{2^{2k}}{\sqrt{k \pi}}$. Then \Cref{variables_1} implies the following inequality, and solving it serves our purpose.
\begin{equation}\label{stirling}
\begin{split}
 \frac{2^{2k}}{\sqrt{k \pi}} & \geq 2n \\
\end{split}
\end{equation}
For $k=\log (2n)$, \Cref{stirling} holds, due to the inequality: $2 \log (2n) - \frac{1}{2} \log (\log (2n)\sqrt{\pi}) 
    \geq \log(2n)$. 
Moreover, for $b=2\log(2n)+1$ and for sufficiently large $n$, \Cref{clauses1_1} is also satisfied.



Therefore, $f$ is a bijection between $V(\GG)$ and $S\subseteq \mathbb{Z}_N$, where $S=f(V(\GG))$ and $N$ is the product of first $b$ primes. We now show that $f$ is an isomorphism between $\GG$ and $Pow(\mathbb{Z}_N)[S]$. 

Let $\{u,w\}$ be an edge in $\GG$. We show that $\{f(u),f(w)\}$ is an edge in $Pow(\mathbb{Z}_N)$ considering three nonequivalent cases. The first case is when $col(u)=C_i$ and $col(w)=C_j$ for some $1 \leq i < j \leq m$. Since $\{u,w\}$ is an edge, $u$ and $w$ must have a common neighbour coloured with $k$ 
for some $1 \leq k \leq n$. Therefore, by the definition of $f$, $f(u)$ and $f(w)$ both belong to the same equivalence class and hence are adjacent in $Pow(\mathbb{Z}_N)$ (by \Cref{equivalence classes form a clique}). The second case is when $col(u)=i$ 
and $col(w)=C_j$ for some $1\leq i \leq n$ and $1 \leq j \leq m$. By the definition of $f$, $ord(f(u))=p_1\prod_{t\in f_{\PP}(l)}p_t$, where $l$ is the literal corresponding to $u$ and $ord(f(w))=\prod_{t\in f_{\PP}(l)}p_t$; hence, they are adjacent in $Pow(\mathbb{Z}_N)$ (by \Cref{remark power graph}). The only remaining case is when $col(u)=0$ 
and $col(w)=i$. 
Similarly to the second case, it can be shown that $\{f(u),f(w)\}$ is an edge in $Pow(\mathbb{Z}_N)$. 
For the other direction, we show that if $\{u,w\}$ is not an edge in $\GG$, then $\{f(u),f(w)\}$ is not an edge in $Pow(\mathbb{Z_N})$. The case when $col(u)=0$ and $col(w)=C_j$ for some $1 \leq j \leq m$ is easy to show: as in this case, neither $ord(f(u))$ divides $ord(f(w))$ nor $ord(f(w))$  divides $ord(f(u))$ and hence are not adjacent in $Pow(\mathbb{Z}_N)$ (by \Cref{remark power graph}). For the second case, without loss of generality, we assume that $u=x_i$ and $w=z_{j_k}$, such that the literal $x_i$ does not appear in clause $C_{j_k}$ (where $col(z_{j_k})=C_{j_k}$). Then, $ord(f(u))=p_1\prod_{t \in f_{\PP}(x_i)}p_t$ and $ord(f(w))=\prod_{t \in f_{\PP}(l_j)}p_t$, where $l_j$ is some literal appearing in the clause $C_{j_k}$. Since $l_j \neq x_i$ and $f_{\PP}$ is one-to-one, $f_{\PP}(x_i)$ and $f_{\PP}(l_j)$ are distinct elements of $\PP_{\frac{b-1}{2}}(\{2,3,\dots,b\})$. Hence, neither $ord(f(u))$ divides $ord(f(w))$ nor $ord(f(w))$ divides $ord(f(u))$, and therefore there is no edge between $f(u)$ and $f(w)$ in $Pow(\mathbb{Z}_N)$ (by \Cref{remark power graph}). The remaining non-equivalent cases are as follows: one is when $u=z_{j_k}$ and $w=z_{j_{k'}'}$ such that no literal $x_i$ or $x'_i$ appears in both $C_{j_k}$ and $C_{j'_{k'}}$, where $C_{j_k}=col(z_{j_k})$ and $C_{j'_{k'}}=col(z_{j'_{k'}})$;
 another one is when $u=x_i$ and $w=x_j$, where $i \neq j$. Note that, similar to the second case, in these two cases also, one can show that $\{f(u),f(w)\}$ is not an edge in $Pow(\mathbb{Z}_N)$.
 
 Hence, the claim is proved. 
\end{proof}

We note that there exists a one-to-one mapping $f_{\PP}$, as required, computable in time $\binom{b-1}{(b-1)/2}$, implying that $f$ can be computed in $2^{O(\log n \log (\log n))}$ time. Following the embedding of $\GG$ in \Cref{star-clique to pow}, an instance of \textsc{Graph Motif} in $\text{Pow}(\mathbb{Z}_N)$ can be constructed by assigning a colour not in $M$ to all vertices in $\mathbb{Z}_N \setminus S$. This yields an $n^{O(\log \log n)}$-time reduction from 3-SAT to \textsc{Graph Motif} in power graphs of cyclic groups, completing the proof of the lemma.
\end{proof}



We are now ready to prove \Cref{Graph motif not in P}.

\begin{proofof}{\Cref{Graph motif not in P}}
        The proof is by contradiction. 
    Assume that \textsc{Graph Motif} in the power graph of the cyclic group of order $N$ can be solved in $2^{O((\log N)^c)}$ time. Given an instance $\Phi$ $3$-SAT with $n$ variables, using the reduction stated in \Cref{sat to pow}, we get an instance of \textsc{Graph Motif} in the power graph $Pow(\mathbb{Z}_N)$ such that $N=2^{O(\log n \log(\log n))}$. Since $2^{O((\log n)^c (\log(\log n)^c)}=2^{o(n)}$, there is an algorithm of $\Phi$ with runtime $2^{o(n)}$, which contradicts ETH. Hence, our assumption is wrong. \hfill$\blacktriangleleft$
\end{proofof}

\medskip 

\begin{remark}
    A similar statement to \Cref{Graph motif not in P} can be made without using ETH. More explicitly, \textsc{Graph Motif} on power graphs cannot be solved in quasipolynomial time, unless $\mathsf{EXP}=\mathsf{NEXP}$. This is because, in that case \Cref{sat to pow} would imply that $\mathsf{NP} \subseteq \bigcup_{t>0}\mathsf{DTIME}(2^{\log ^tn)})$, which in turn would imply that $\mathsf{EXP}=\mathsf{NEXP}$ \cite{10.1007/3-540-56287-7_99}.
\end{remark}

We now show the existence of an algorithm with runtime $2^{o(n)}$ for power graphs of cyclic groups. We begin with a few definitions and a relevant result.
    Two vertices $v$ and $v'$ of a graph are said to have the same \emph{type} if and only if $N(v)\setminus\{v'\} = N(v')\setminus \{v\}$. A graph $\Gamma(V,E)$ has \emph{neighbourhood diversity} at most $w$ if there exists a partition of $V(\G)$ into at most $w$ sets, such that all the vertices in each set have the same type \cite{lampis2012algorithmic}.
Ganian \cite{ganian2012using} gave an FPT algorithm for \textsc{Graph Motif} parameterised by neighbourhood diversity.

\medskip 

\begin{lemma}\cite{ganian2012using}\label{ganian}
     \textsc{Graph Motif} can be solved in time $O(2^k\cdot \sqrt{|V|}|E|)$ on a graph $\Gamma=(V,E)$ of neighbourhood diversity at most $k$.
\end{lemma}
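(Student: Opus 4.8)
The plan is to reconstruct Ganian's algorithm by exploiting the fact that, in a neighbourhood diversity decomposition, all adjacency information collapses to the level of types. First I would fix a partition $V = T_1 \cup \dots \cup T_k$ into at most $k$ types and build the \emph{quotient graph} $Q$ on vertex set $\{1,\dots,k\}$, recording for each pair $i \neq j$ whether $T_i$ and $T_j$ are completely joined, and for each $i$ whether $T_i$ induces a clique or an independent set. The key structural observation is that whether two vertices $u \in T_i$ and $w \in T_j$ are adjacent depends only on $i$ and $j$ (and, when $i=j$, on whether $T_i$ is a clique). Consequently, for any $V' \subseteq V$ the connectivity of $\Gamma[V']$ is determined by the set of types that $V'$ meets, together with one degenerate case. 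Note also that vertices within a single type may carry different colours, so the colour bookkeeping cannot be done purely at the type level.

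The algorithm then enumerates all $2^k$ subsets $A \subseteq \{1,\dots,k\}$, interpreting $A$ as the set of types from which the sought occurrence draws at least one vertex, taking no vertex from types outside $A$. For each $A$ I would perform two independent checks. The \emph{connectivity check} reduces to testing whether $Q[A]$ is connected: if $|A| \geq 2$ and $Q[A]$ is connected, then every chosen vertex lies in a type joined (through the connected quotient) to another occupied type, and since we insist on at least one vertex per type in $A$, the induced subgraph is automatically connected; the only exception is $A = \{i\}$ with $T_i$ independent, where at most one vertex may be used, so only motifs of size one qualify. The \emph{realizability check} asks whether one can select a sub-multiset of vertices, at least one from each type in $A$, whose colour multiset equals $M$.

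I would solve the realizability check by a flow/matching formulation. Build a bipartite transportation instance with a demand node for each colour occurrence required by $M$ on one side and the vertices lying in types of $A$ on the other, joining a demand of colour $c$ to every vertex of colour $c$; a saturating assignment corresponds to an exact realization of $M$. The ``at least one vertex per selected type'' requirement is enforced as a lower bound on the flow leaving each type node, handled by the standard reduction of flows with lower bounds to ordinary bipartite matching. Since the total demand is $|M| \leq |V|$, each instance is solvable in $O(\sqrt{|V|}\,|E|)$ time by a Hopcroft--Karp-type matching, and the algorithm accepts iff some $A$ passes both checks, yielding the claimed $O(2^k \sqrt{|V|}\,|E|)$ bound.

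The step I expect to be the main obstacle is the realizability check: coupling the \emph{exact} multiset condition on the colours with the per-type lower bounds (which is precisely what ties the colour selection back to connectivity) while keeping the per-subset cost at the level of a single bipartite matching. Ensuring that the lower-bound flow reduction stays within the $O(\sqrt{|V|}\,|E|)$ budget, rather than forcing a more expensive general max-flow routine, is the delicate point; the colourful case is easier, since every demand then has multiplicity one and the instance is a plain bipartite matching.
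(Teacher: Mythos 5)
First, a point of comparison: the paper does not prove this lemma at all --- it is imported verbatim from Ganian \cite{ganian2012using} --- so your reconstruction can only be judged on its own merits. Its architecture is indeed that of Ganian's algorithm: enumerate the $2^k$ subsets $A$ of types, test connectivity of the type-quotient $Q[A]$, and decide the colour condition by a matching/flow feasibility test that forces at least one vertex from each type of $A$. Your connectivity characterisation (including the special case of a single independent type) is correct and correctly justified: inter-type edges exist only between completely joined types, and any selection hitting every type of a connected $Q[A]$ with $|A|\ge 2$ is automatically connected. So the correctness part of your argument is sound.

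The genuine gap is in the running-time claim for the realizability check, i.e.\ exactly the step you flagged, and it does not go away as written. First, your bipartite instance joins the $|M|$ colour occurrences to individual vertices, so it may have $\Theta(|M|\cdot|V|)$ edges, and this quantity is not bounded by $|E(\Gamma)|$: for a star $K_{1,n-1}$ (neighbourhood diversity $2$, $|E|=n-1$) with $n/2$ leaves of one colour and that colour occurring $n/2$ times in $M$, the matching instance has $\Theta(n^2)$ edges, so Hopcroft--Karp costs $\Theta(n^{2.5})$ per subset while the budget $\sqrt{|V|}\,|E|$ is only $\Theta(n^{1.5})$. Second, ``the standard reduction of flows with lower bounds to ordinary bipartite matching'' cannot be invoked as a black box: lower-bounded flows reduce to max-flow with a super-source and super-sink, and one must then argue separately that the transformed network is still a unit network in the Even--Tarjan sense for an $O(\sqrt{V}E)$-type bound to apply. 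A natural repair is to collapse interchangeable vertices (same type, same colour) into a single capacitated node; the resulting flow network has only $O(|V|)$ edges and nodes, and feasibility (with the per-type lower bounds) is then decidable in $O(|M|\cdot|V|)$ time per subset by augmenting paths, giving $O(2^k|V|^2)$ overall --- a correct FPT bound, but still not the $O(2^k\sqrt{|V|}\,|E|)$ stated in the lemma. So your proposal establishes a correct algorithm, but the proof of the stated time bound is missing precisely at its load-bearing step.
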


Let $g$ and $g'$ be two elements in a group $G$. If $[g]=[g']$, then  $g$ and $g'$ have the same type in $Pow(G)$. Hence, the neighbourhood diversity of $Pow(G)$ is at most the number of equivalence classes of $G$. If $G$ is a cyclic group of order $n$, then the number of equivalence classes is the number of total divisors of $n$, which is bounded by $n^{O(\frac{1}{\log\log n})}$ \cite{apostol2013introduction}. Hence, we have the following corollary.

\medskip 

\begin{corollary}\label{cyclic group graph motif}
     \textsc{Graph Motif} for $Pow(\ZZ_n)$ can be solved in $2^{o(n)}$ time, in particular, in 
     $2^{n^{c/\log \log n}}$ time.
\end{corollary}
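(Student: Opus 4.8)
The plan is to combine the structural observation about power graphs of cyclic groups with Ganian's FPT result (\Cref{ganian}). The key insight, already noted in the text preceding the corollary, is that the neighbourhood diversity of $Pow(\ZZ_n)$ is controlled by the number of equivalence classes under the relation $\sim$, which for a cyclic group equals the number of divisors of $n$. So the whole argument reduces to bounding the divisor function $d(n)$ and plugging that bound into \Cref{ganian}.

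First I would make the neighbourhood-diversity bound precise. By the remark just before the corollary, if $[g]=[g']$ then $g$ and $g'$ have the same type in $Pow(G)$, since \Cref{equivalence classes form a clique} tells us that two elements generating the same cyclic subgroup are adjacent to exactly the same set of other vertices (part $(ii)$ of that remark guarantees that adjacency depends only on the classes). Hence the partition of $\ZZ_n$ into the equivalence classes $[x]$ witnesses that the neighbourhood diversity is at most the number of such classes. Each class $[x]$ consists of the generators of a fixed cyclic subgroup, and the distinct cyclic subgroups of $\ZZ_n$ are in bijection with the divisors of $n$; therefore the number of classes is exactly $d(n)$, the number of divisors of $n$.

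Next I would invoke the classical bound on the divisor function, namely $d(n)=n^{O(1/\log\log n)}$ (this is the standard estimate $d(n)\le 2^{(1+o(1))\log n/\log\log n}$, available from \cite{apostol2013introduction} as cited). Setting $k=d(n)$ in \Cref{ganian} gives a running time of $O\!\left(2^{d(n)}\cdot\sqrt{|V|}\,|E|\right)=2^{n^{O(1/\log\log n)}}\cdot\mathrm{poly}(n)$. Since $n^{c/\log\log n}=2^{c\log n/\log\log n}=2^{o(n)}$, this is subexponential, which is exactly the claimed bound, and the polynomial factor is absorbed.

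I do not expect any serious obstacle here, as the corollary is essentially a direct specialization of the neighbourhood-diversity machinery to the divisor structure of cyclic groups. The only point that needs a little care is making sure the equivalence classes genuinely certify \emph{neighbourhood diversity} in the exact sense of the definition (same type, i.e. $N(v)\setminus\{v'\}=N(v')\setminus\{v\}$), rather than some weaker notion; this follows cleanly from part $(ii)$ of \Cref{equivalence classes form a clique}, since all elements of one class are adjacent to all elements of another class (or none), and within a class they form a clique, so any two elements of the same class have identical neighbourhoods once the two vertices themselves are excluded. The other minor point is citing the correct form of the divisor bound so that the exponent matches the stated $n^{c/\log\log n}$.
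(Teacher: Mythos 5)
Your proposal is correct and follows essentially the same route as the paper: the equivalence classes $[x]$ witness that the neighbourhood diversity of $Pow(\ZZ_n)$ is at most the number of divisors of $n$, which is $n^{O(1/\log\log n)}$, and this parameter is then fed into Ganian's algorithm (\Cref{ganian}). The only flaw is a notational slip in the final asymptotic step: to conclude a $2^{o(n)}$ bound you need the \emph{exponent} to satisfy $n^{c/\log\log n}=2^{c\log n/\log\log n}=2^{o(\log n)}=o(n)$; writing it as ``$=2^{o(n)}$'' as you did only yields the useless bound $2^{2^{o(n)}}$.
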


From \Cref{Graph motif not in P}, we know that \textsc{Graph Motif} cannot be solved in quasipolynomial time for the power graphs of finite cyclic groups. On the other hand, \Cref{cyclic group graph motif} shows the existence of a $2^{o(n)}$ time algorithm for \textsc{Graph Motif} of power graphs of cyclic groups. \Cref{cyclic group graph motif} also implies that there cannot be a linear time many-one reduction from $3$-SAT to \textsc{Graph Motif} on power graphs of cyclic groups under ETH. 

When $G=(\ZZ_p)^n$, where $p$ is a prime, the number of equivalence classes 
 is $(p^n-1)/(p-1)$, and hence Ganian's algorithm (see \Cref{ganian}) yields running time $2^{O(|G|/p)}$, which is almost exponential. Hence, the parameterised algorithm is not efficient in the case of power graphs of $p$-groups. We design a polynomial time algorithm for solving \textsc{Graph Motif} in power graphs of $p$-groups. 


\medskip 

\begin{theorem} \label{non-cyclic poly motif}
    \textit{\textsc{Graph Motif}} for power graphs of $p$-groups can be solved in polynomial time. 
\end{theorem}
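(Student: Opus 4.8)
The plan is to show that $Pow(G)$ for a $p$-group $G$ has an extremely rigid shape — it is a clique blow-up of the comparability graph of a rooted tree — and then to exploit this tree to turn \textsc{Graph Motif} into a small family of colour-counting feasibility checks. First I would establish the structural backbone. Partition the vertices into the equivalence classes $[x]$ of \Cref{equivalence classes form a clique}; each class is a clique, and by the same remark the adjacency between two classes is ``all or nothing'', so it suffices to understand the quotient. Associate to each class $[x]$ its cyclic subgroup $\langle x\rangle$. By \Cref{remark power graph}, two classes are completely joined in $Pow(G)$ exactly when one of $\langle x\rangle,\langle y\rangle$ is contained in the other. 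Now I use that $G$ is a $p$-group: every cyclic subgroup has prime-power order, and a cyclic group of prime-power order has a unique subgroup of each smaller order, so the subgroups contained in a fixed $\langle x\rangle$ form a chain. Hence the poset of cyclic subgroups of $G$ is a forest, and adjoining the trivial subgroup $\{e\}$ (corresponding to the identity, which is a universal vertex of $Pow(G)$) as a root turns it into a rooted tree $T$. Thus $Pow(G)$ is obtained from $T$ by replacing each node $v$ by a clique $K_v$ (of size $\phi(ord(\cdot))$, the class size) and joining $K_u,K_v$ completely precisely when $u$ is an ancestor of $v$; in particular $Pow(G)$ is trivially perfect.

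Next I would characterise connectivity on this blow-up. For a candidate occurrence $V'$, let $S$ be the set of tree nodes $v$ with $K_v\cap V'\neq\emptyset$. Because two incomparable nodes of a forest have no common upper bound (otherwise the principal ideal below that bound would fail to be a chain), the comparability graph restricted to $S$ is connected if and only if $S$ has a unique minimal element $v^\ast$, i.e.\ a touched node that is an ancestor of every other touched node. Equivalently, $V'$ induces a connected subgraph if and only if it lies inside the subtree $T_{v^\ast}$ rooted at some node $v^\ast$ and meets $K_{v^\ast}$; the chosen vertex of $K_{v^\ast}$ is then adjacent to all of $V'$, so once $v^\ast$ is fixed, connectivity imposes \emph{no further constraint} on which descendant vertices are selected.

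This is the crux, and it reduces the problem to counting. For each of the at most $n$ nodes $v^\ast$ of $T$, I would compute in one bottom-up pass the number $cnt_\gamma$ of vertices of each colour $\gamma$ lying in the subtree $T_{v^\ast}$. Writing $m_\gamma$ for the multiplicity of $\gamma$ in $M$, the motif $M$ occurs with top node $v^\ast$ if and only if $cnt_\gamma \geq m_\gamma$ for every colour $\gamma\in M$ (so that exactly $m_\gamma$ vertices of colour $\gamma$, and none of the colours outside $M$, can be selected) \emph{and} $K_{v^\ast}$ contains a vertex whose colour lies in $M$ (so that one selected vertex can be placed in $K_{v^\ast}$, forcing connectivity). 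Returning \textsc{yes} iff some $v^\ast$ passes both tests is clearly polynomial; the ``meets $K_{v^\ast}$'' clause is exactly what rules out a spurious selection drawn from two incomparable subtrees with no common ancestor, and conversely any genuine connected occurrence is detected when $v^\ast$ is taken to be its top node.

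The main obstacle is not the counting but obtaining the tree $T$ from the input, and this is precisely where the representation matters (as flagged in the introduction). Given the algebraic (Cayley-table) representation, element orders, cyclic subgroups, the classes $[x]$, and the inclusion tree are all computable in polynomial time, so the algorithm runs directly. Under the standard graph representation the statement becomes a promise problem: one must recover $T$ from $Pow(G)$ alone, which I would do by reading off the classes as closed-twin classes and reconstructing the rooted-tree decomposition of the (trivially perfect) graph via its universal-vertex-and-disjoint-union structure in polynomial time; the colour-counting step is then identical.
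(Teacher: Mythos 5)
Your proposal is correct, and it reaches the result by a genuinely different route than the paper. The paper's proof never builds a tree: it deletes all vertices whose colours lie outside $M$, and for each connected component $C$ of what remains it checks whether $M$ is a sub-multiset of the colours appearing in $C$. Its one structural ingredient is the claim (\Cref{lem: pgroup dominating}, proved by a shortest-path contradiction using the group fact that an edge $\{g,h\}$ with $ord(g)\leq ord(h)$ forces $g\in\langle h\rangle$) that a vertex of \emph{minimal order} in $C$ dominates $C$; connectivity is then obtained for free by swapping one vertex of a hypothetical occurrence for this dominating vertex, exactly as in your exchange argument. Your proof instead makes the global structure explicit: the cyclic subgroups of a $p$-group form a rooted tree because a cyclic $p$-group has a unique subgroup of each order, so $Pow(G)$ is a clique blow-up of an ancestor relation (i.e., trivially perfect), and the motif question becomes a subtree colour-counting feasibility test over the at most $n$ choices of the top class. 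The two arguments rest on the same fact in different clothing --- trivial perfection is precisely the statement that every connected induced subgraph has a universal vertex, which is the paper's dominating-vertex claim --- and your test at a top node $v^{\ast}$ is equivalent to the paper's test on the component of the colour-restricted graph containing $K_{v^{\ast}}$. What your route buys: a clean separation of the group theory from the algorithmics (your counting step works verbatim on any trivially perfect graph once a tree representation is available), and an explicit treatment of the representation issue, matching the paper's own caveat that under the standard graph representation the result must be read as a promise problem. What the paper's route buys: a shorter, more elementary argument whose algorithm runs directly on the input graph with no reconstruction of $T$ --- finding a dominating vertex of a component is trivial in polynomial time --- so only the correctness proof, not the algorithm itself, invokes the promise.
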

\begin{proof}

  Since the power graph of a cyclic $p$-group is a complete graph (see \Cref{complete power}), \textsc{Graph Motif} can be solved in linear time in this case. 
  Let $\Gamma=Pow(G)$, where $G$ is a non-cyclic $p$-group. We design a greedy algorithm.
  
  At first, we delete the set $S$ of vertices which have colours outside the motif $M$. Let $\Gamma'$ denote the graph $\G \setminus S$, i.e., the graph obtained by deleting all the vertices of $S$ from $\G$. As $\Gamma'$ might be disconnected, we first find the connected components of $\Gamma'$, and 
  check whether any of these connected components contain an occurrence of $M$.
  Let $C$ be the connected component being inspected at some stage. We select any dominating vertex\footnote{A vertex $u$ is said to be a \textit{dominating vertex} of a graph $\G$, if $N(u) \cup \{u\}=V(\G)$. } $v$  of $C$ (the existence of such a dominating vertex is proved in \Cref{lem: pgroup dominating} 
  ). If $M$ is a subset of $\{col(v)\}\cup col(N_C(v))$, where $N_C(v)$ denotes the neighbourhood of $v$ in $\Gamma'[C]$,
  we return `Yes'
  ; else, we check the next unchecked connected component. If none of the connected components contains an occurrence of $M$, we return `No'. 
  It is easy to see that the runtime of our algorithm is $poly(|G|)$. 

Before proving the correctness of the algorithm, we first prove the next claim.

\begin{claimd} \label{lem: pgroup dominating}
 There is at least one dominating vertex in any connected component $C$ of $\Gamma'$.
\end{claimd}
\begin{proof}

   In a power graph, if there exists an edge $\{g,h\} \in E$ such that $ord(g)=p^i$, $ord(h)=p^j$ and $i \leq j$, then $g \in \langle h \rangle$. 
   Let $x$ be a vertex in $C$ such that $ord(x)=p^l$, where $l$ is the lowest power of $p$ in $C$. We claim that $x$ is a dominating vertex of $\Gamma'[C]$, which we prove by contradiction.

   Assume that there exists a vertex $y \in C$ such that $ord(y)=p^j$, $j \geq l$ and $x$ is not adjacent to $y$. As $\Gamma'[C]$ is connected, there exists a path between $x$ and $y$ of length at least $2$. Let $xx_1x_2\dots x_ky$ be a shortest path between $x$ and $y$. 
   This implies $\{x,x_1\} \in E(\Gamma)$ and $\{x,x_2\} \notin E(\Gamma)$. 
   Let $ord(x_1)=p^{l'}$, where $l' \geq l$. So, $\{x,x_1\} \in E(\Gamma)$ implies $x \in \langle x_1 \rangle$. Now, $\{x_1,x_2\} \in E(\Gamma)$ implies that either $x_1 \in \langle x_2 \rangle$ or $x_2 \in \langle x_1 \rangle$. If $x_1 \in \langle x_2 \rangle$, then $x \in \langle x_2 \rangle$, which is a contradiction. So, let $x_2 \in \langle x_1 \rangle$. Notice that $\langle x_1 \rangle \cap C$ is a clique in $\Gamma'$ (by \Cref{complete power}). This implies $x,x_2 \in \langle x_1 \rangle$ and hence $\{x,x_2\} \in E(\Gamma)$, which is a contradiction. 
   Therefore, our assumption 
   is wrong. Hence, $x$ is a dominating vertex of $C$.
\end{proof}

For the correctness of our algorithm, we need to argue that, if there is an occurrence of $M$ in $\Gamma$, then there is a connected component $C$ in $\Gamma'$ such that the occurrence of $M$ in $\Gamma'[C]$ contains a dominating vertex of $C$. Note that the colour of the dominating vertex $v$ is in $M$; otherwise, it would have been deleted. 
Let $W$ be any occurrence of $M$ in $\Gamma$. Then, $\Gamma[W]=\Gamma'[W]$ and hence, $W$ is a subset of some connected component $C$ in $\Gamma'$. Let $v$ be a dominating vertex of $C$. If $W$ contains $v$, then we have nothing to prove. Suppose $W$ does not contain $v$. As $v\in \Gamma'$, $col(v) \in M$. Moreover, there must be some vertex $w \in W$ such that $col(v)=col(w)$. So, $W'=(W \setminus \{w\}) \cup \{v\}$ is an occurrence of $M$ in $\Gamma$ such that $W'$ contains a dominating vertex $v$ of the connected component $C$ of $\Gamma'$.
\end{proof}

\section{Recognition of Power Graphs}\label{sec: recognition}

In this section, we show that the recognition problem for power graphs of abelian groups and certain nilpotent groups can be solved in polynomial time (\Cref{summary}). The input graph is given in the standard graph representation model. The task is to decide if the graph is a power graph of some group. 

Das et al. \cite{das_et_al:LIPIcs.FSTTCS.2024.20} gave a reconstruction algorithm that takes a power graph as input and produces a directed power graph. However, if the input is not a power graph, the algorithm may fail or produce a directed graph that is not a directed power graph. So, we may, \emph{without loss of generality, focus on the recognition problem of directed power graphs}.

The quasipolynomial time recognition algorithm by Arvind, Cameron, Ma, and Maslova \cite{arvind2024aspectscommutinggraph} uses the concept of \emph{description} of groups. The \textit{description} of a finite group \( G \), denoted as $desc(G)$, is the encoding of $G$ into a binary string\footnote{Note that description of a group is different from the well-known notion of presentation of a group, which is defined and used later in this section.} of length at most \( c \cdot \log^3|G| \), where $c>0$ is some fixed constant. They designed an algorithm $\mathcal{D}$ that, on input $desc(G)$, computes the multiplication table of a group $G'$ isomorphic to $G$ in time \( poly(|G|)\). Moreover, using the algorithm $\mathcal{D}$, the multiplication tables of all groups G of order n are generated as follows: First, enumerate all possible binary strings $w$ of length at most $c\cdot \log^3 n$. Then, for each $w$, run $\mathcal{D}$ on it. If $w = desc(G)$ for some $G$, the algorithm $\mathcal{D}$ outputs its multiplication table. From that multiplication table, the commuting graph of $G$ can be found in time polynomial in $|G|$. Then, using Babai's graph isomorphism algorithm \cite{babai2016graph}, it can be checked whether a given graph $\Gamma$ is isomorphic to the commuting graph of $G$ in time $2 ^{O(\log^c n)}$. The same approach works for the recognition of any graph defined in terms of groups.


\medskip 

\begin{restatable}{lemma}{fta}\label{subexpo}
 There are algorithms that, on the input of an undirected graph $\G$ (or directed graph $\DG$) of size n, check whether $\G$ (or $\DG$) is isomorphic to a power graph (or directed power graph), enhanced power graph, deep commuting graph or generating graph in $2^{O(\log ^c n)}$ time. 
\end{restatable}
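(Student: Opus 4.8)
The plan is to instantiate, for each of the five graph classes, the description-based enumeration framework of Arvind, Cameron, Ma, and Maslova recalled just above. A power graph (or directed power graph, enhanced power graph, deep commuting graph, generating graph) on an input with $n$ vertices can only be realised by a group $G$ with $|G|=n$, since the vertex set of each of these graphs is exactly $G$. So I would first enumerate every binary string $w$ of length at most $c\cdot\log^3 n$; there are only $2^{O(\log^3 n)}$ of them. For each $w$ I run the algorithm $\mathcal{D}$: whenever $w=desc(G)$ for some group $G$ of order $n$, it returns in $poly(n)$ time the multiplication table of a group $G'$ isomorphic to $G$. Because every isomorphism type of group of order $n$ has a description of length at most $c\cdot\log^3 n$, this sweep produces at least one multiplication table per isomorphism class of groups of order $n$.

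The second step is to build, from each multiplication table obtained, the relevant graph on $G'$ in polynomial time. For the (directed) power graph I compute $ord(g)$ for every $g\in G'$ by iterating powers, and then decide adjacency through the divisibility/power condition of \Cref{remark power graph} (respectively the defining condition of $DPow(G')$). For the enhanced power graph, deep commuting graph, and generating graph, adjacency is likewise a predicate on pairs of group elements that is checkable in $poly(n)$ time from the table --- for instance, for the generating graph one tests $\langle x,y\rangle=G'$ by closing $\{x,y\}$ under multiplication, which touches at most $|G'|=n$ elements. In every case the associated graph on $G'$ is constructed in $poly(n)$ time.

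The third step is to test whether the input graph is isomorphic to the graph just constructed, using Babai's quasipolynomial graph isomorphism algorithm, which runs in $2^{O(\log^c n)}$ time on $n$-vertex graphs. For the directed power graph this is an isomorphism test between directed graphs, which reduces to coloured undirected graph isomorphism by encoding the orientation of each arc with gadget colours, and so is still handled in $2^{O(\log^c n)}$ time. I answer ``yes'' precisely when some enumerated $G'$ yields a graph isomorphic to the input. The total running time is the number of strings times the per-string cost, i.e.\ $2^{O(\log^3 n)}\cdot\bigl(poly(n)+2^{O(\log^c n)}\bigr)=2^{O(\log^c n)}$.

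The only part that needs genuine verification, rather than routine instantiation, is that each of the five adjacency relations is computable in $poly(n)$ from the Cayley table and that the directed case reduces cleanly to Babai's undirected algorithm; once these are in hand, correctness and the running-time bound follow immediately from the enumeration framework. I expect the directed-power-graph reduction to coloured isomorphism, together with the $poly(n)$ computability of the deep-commuting and generating graphs, to be the only points requiring explicit justification.
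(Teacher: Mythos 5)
Your proposal is correct and takes essentially the same route as the paper: the paper likewise enumerates all binary strings of length at most $c\cdot\log^3 n$, runs the Arvind--Cameron--Ma--Maslova decoding algorithm $\mathcal{D}$ on each to obtain Cayley tables of all groups of order $n$, constructs the relevant group-defined graph from each table in polynomial time, and applies Babai's quasipolynomial isomorphism test to the input. The details you flag for verification (polynomial-time computability of each adjacency predicate from the Cayley table, and handling the directed case) are treated just as briefly in the paper, which simply asserts that ``the same approach works for the recognition of any graph defined in terms of groups.''
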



 Two aspects of the algorithm by Arvind et al. that contribute to high runtime are: (a) the number of non-isomorphic groups of order $n$ can be $2^{\Omega(\log^3n)}$ \cite{mciver1987enumerating}, (b) Babai's isomorphism algorithm has quasipolynomial runtime. The polynomial time algorithm for the isomorphism of power graphs for nilpotent groups by Das et al. \cite{das_et_al:LIPIcs.FSTTCS.2024.20} appears to be useful; 
but, it solves a promise problem: it is guaranteed to work correctly if the graphs are power graphs of nilpotent groups. However, if one input is not a power graph of a nilpotent group, it may give a false positive answer (for details, see \Cref{story behind the two techniques}). 
To tackle this issue, we use the following two tools: (i) the notion of \emph{reduced graphs} (similar to \cite{das_et_al:LIPIcs.FSTTCS.2024.20}); (ii) an algorithm referred to as \emph{Gluing process}. Using these, we prove that the recognition problem of directed power graphs of nilpotent groups is reducible to the recognition problem of directed power graphs of $p$-groups (\Cref{pow(nil) reducible to dpow(p)}).

\noindent\textbf{Reduced graph:} 
Given a directed graph $\DG$, we construct a directed graph $R(\DG)$ from $\DG$, referred to as the \emph{reduced graph}, by the following method: (1) For each closed twin-class $\tau$ of $R(\DG)$,  we create a vertex in $\DG$; (2) The colour of each vertex $\tau$, denoted by $col(\tau)$, of $R(\DG)$ is the out-degree of any vertex of $\tau$ in $\DG$; (3) Put a directed edge from $\tau_1$ to $\tau_2$ in $R(\DG)$ where $\tau_1 \neq \tau_2$, if there is a directed edge from some vertex $x_1 \in \tau_1$ to some vertex $x_2 \in \tau_2$ in $\DG$. 

 Recall that for any finite group $G$, the out-degree of a vertex $u$ in $DPow(G)$ equals to $ord(u)$ in $G$. If $(u,v)$ is an edge in $DPow(G)$, then $ord(v) \mid ord(u)$. Hence, if $(\tau_1,\tau_2)$ is an edge in $R(DPow(G))$, then $col(\tau_2) \mid col(\tau_1)$. So, it is easy to see that $R(DPow(G))$ is a directed acyclic graph (DAG). In $DPow(G)$, two vertices $u$ and $v$ are closed twins if and only if $\langle u \rangle =\langle v \rangle $ in $G$, i.e., $u$ and $v$ are two generators of the same cyclic subgroup in $G$. So, the vertex set of  $R(DPow(G))$ is $\{[x] \ | \ x \in G \}$. If $(x,y)$ is an edge in $DPow(G)$, then for every $x'\in [x]$ and for every $y' \in [y]$, there is an edge $(x',y') \in DPow(G)$. Then by \Cref{remark power graph}, there is an edge $(\tau_1,\tau_2)$ in $R(DPow(G))$. 
 From these properties, the following lemma from \cite{das_et_al:LIPIcs.FSTTCS.2024.20} is not hard to see.

\medskip 

\begin{lemma}\label{color iso in enough}
 Let $DPow(G)$ and $DPow(H)$ be two directed power graphs. Then $R(DPow(G))$ is colour-isomorphic to $R(DPow(H))$ if and only if $DPow(G) \cong DPow(H)$. 
\end{lemma}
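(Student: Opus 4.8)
The plan is to prove the two implications separately, leaning on the structural facts recorded just before the statement: in $DPow(G)$ the closed twin-classes are exactly the equivalence classes $[x]$; the colour of $[x]$ equals the common order $ord(x)$ of its elements (the out-degree); and edges are \emph{class-homogeneous}, i.e. an edge from one element of $[x]$ to one element of $[y]$ forces an edge from every element of $[x]$ to every element of $[y]$. The easy direction is $DPow(G) \cong DPow(H) \Rightarrow R(DPow(G)) \cong_c R(DPow(H))$. Here I would take any digraph isomorphism $\psi : DPow(G) \to DPow(H)$, observe that it preserves closed out-neighbourhoods and hence maps closed twin-classes bijectively onto closed twin-classes, and let $\bar{\psi}$ be the induced bijection on classes. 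Since $\psi$ preserves out-degrees it preserves colours, and since it preserves the edge relation it preserves the reduced edge relation; thus $\bar{\psi}$ is a colour-isomorphism of the reduced graphs.

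The substantive direction is the converse. Given a colour-isomorphism $\sigma : R(DPow(G)) \to R(DPow(H))$, the key preliminary observation is that because $\sigma$ preserves colour, a class $\tau$ and its image $\sigma(\tau)$ consist of elements of the same order $c = col(\tau)$, and therefore have the same size $\phi(c)$ (using $|[x]| = \phi(ord(x))$). I would then build a bijection $\psi : V(DPow(G)) \to V(DPow(H))$ by choosing, for each twin-class $\tau$, an \emph{arbitrary} bijection from $\tau$ onto $\sigma(\tau)$; this is legitimate precisely because the colour pins down the class sizes.

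To verify that $\psi$ is a digraph isomorphism I would distinguish within-class and between-class pairs. Within a single class every pair of elements generates the same cyclic subgroup, so each is a power of the other; hence each class is a complete bidirectional subgraph (analogous to \Cref{equivalence classes form a clique} and \Cref{complete power}), and any bijection respects these internal edges on both sides. For $u \in \tau_1$ and $v \in \tau_2$ with $\tau_1 \neq \tau_2$, class-homogeneity in $DPow(G)$ reduces the question of whether $(u,v)$ is an edge to whether $(\tau_1,\tau_2)$ is an edge of $R(DPow(G))$; since $\sigma$ preserves the reduced edge relation and $\psi(u) \in \sigma(\tau_1)$, $\psi(v) \in \sigma(\tau_2)$, a second application of class-homogeneity in $DPow(H)$ shows $(\psi(u),\psi(v))$ is an edge iff $(\sigma(\tau_1),\sigma(\tau_2))$ is. Chaining these equivalences gives that $\psi$ preserves adjacency in both directions.

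The point requiring the most care — the main obstacle — is the recognition that the within-class bijections may be chosen completely arbitrarily without destroying the isomorphism. This is exactly what class-homogeneity delivers, and it is the structural reason the construction works: the reduced graph discards the internal labelling of each class and records only its colour (equivalently, the common order of its elements), yet this is enough to recover $DPow(G)$ up to isomorphism, since order determines class size via $\phi$ and homogeneity determines all between-class edges from the reduced edge relation.
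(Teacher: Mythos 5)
Your proposal is correct and follows essentially the same route as the paper, which states the lemma (citing Das et al.) as an immediate consequence of exactly the structural facts you use: closed twin-classes in $DPow(G)$ are the classes $[x]$, the colour of a class is the common order of its elements so class sizes are pinned down by $\phi(col(\tau))$, and class-homogeneity of edges lets the reduced edge relation determine all between-class adjacencies. Your write-up simply fills in the details (arbitrary within-class bijections, the chain of equivalences for between-class edges) that the paper leaves as "not hard to see."
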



Due to the above lemma, for the rest of the discussion, \emph{we assume that the input to our recognition query is a reduced directed graph}. First, we consider the subproblem with inputs, a reduced directed graph $\DDG{1}$ and a nilpotent group $H$; the task is to check whether $\DDG{1} \cong_c R(DPow(H))$. When the input group is a $p$-group, then we can directly use the linear time isomorphism algorithm of directed power graphs of $p$-groups by \cite{das_et_al:LIPIcs.FSTTCS.2024.20}. See the next lemma for the statement.


\medskip 

\begin{lemma}\label{iso: nil}   Let $\DDG{1}$ be a reduced directed graph and $H$ be a finite $p$-group. Then, there is a linear time algorithm that checks whether $\DDG{1} \cong_c R(DPow(H))$ and if so, returns a colour-preserving isomorphism $\psi:V(R(DPow(H)))\xrightarrow[]{} V(\DDG{1})$.
\end{lemma}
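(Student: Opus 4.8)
The plan is to build the concrete target graph $R(DPow(H))$ from $H$ and then decide $\DDG{1}\cong_c R(DPow(H))$ by invoking the linear-time directed-power-graph isomorphism algorithm for $p$-groups of Das et al.~\cite{das_et_al:LIPIcs.FSTTCS.2024.20}, guarding its output with a direct verification step.

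First I would compute $R(DPow(H))$ explicitly. For each $x\in H$ I determine $ord(x)$ (a power of $p$) and the cyclic subgroup $\langle x\rangle$; the classes $[x]$ --- equivalently the closed twin-classes of $DPow(H)$ --- then form the vertex set, each coloured by its order, with an arc $([x],[y])$ whenever $[x]\neq[y]$ and $y\in\langle x\rangle$. All of this is linear in the size of the graph. By construction $R(DPow(H))$ is a genuine reduced directed power graph of a $p$-group, and by \Cref{color iso in enough} deciding $\DDG{1}\cong_c R(DPow(H))$ is exactly the colour-isomorphism question the Das et al.\ algorithm is designed to answer on directed power graphs.

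The single difficulty --- and the step I expect to be the real obstacle --- is that the Das et al.\ algorithm is a \emph{promise} algorithm: it is guaranteed correct only when both inputs are (reduced) directed power graphs of $p$-groups, whereas $\DDG{1}$ is an arbitrary reduced directed graph and could trigger a false positive. I would neutralise this by treating the algorithm's output merely as a \emph{candidate} colour-preserving map $\psi\colon V(R(DPow(H)))\to V(\DDG{1})$ and then checking, in linear time, that $\psi$ is a bijection that preserves colours and sends arcs to arcs in both directions.

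Correctness then follows from a two-way argument. Soundness is immediate: if $\psi$ survives verification it is, by definition, a colour-isomorphism, so $\DDG{1}\cong_c R(DPow(H))$ irrespective of whether the promise held. For completeness, assume $\DDG{1}\cong_c R(DPow(H))$; since $H$ is a $p$-group, $R(DPow(H))$ is a bona fide reduced directed power graph, hence $\DDG{1}$ is itself colour-isomorphic to one, so the promise of the Das et al.\ algorithm is satisfied and it must return a valid $\psi$ that passes verification. Thus the combined procedure decides $\cong_c$ and outputs the required $\psi$, with every stage --- constructing $R(DPow(H))$, the isomorphism call, and the verification --- running in linear time.
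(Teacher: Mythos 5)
Your proposal is correct, and its core is the same as the paper's: the paper proves \Cref{iso: nil} by directly invoking the linear-time isomorphism algorithm for directed power graphs of $p$-groups from \cite{das_et_al:LIPIcs.FSTTCS.2024.20}, exactly as you do. The difference is your verification wrapper, and it is a genuine (and welcome) strengthening rather than a detour. The paper states the lemma as an immediate consequence of the cited algorithm, leaving implicit how an arbitrary reduced directed graph $\DDG{1}$ --- which need not satisfy that algorithm's promise --- is handled; indeed, the paper itself flags the false-positive danger of promise violations (in its appendix discussion of why the Das et al.\ algorithm cannot be applied naively), but resolves it explicitly only for the composite-order, nilpotent case via reduced graphs and the Gluing process. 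Your soundness/completeness split is exactly the right way to discharge this for the $p$-group base case: a candidate map $\psi$ that survives verification certifies $\DDG{1}\cong_c R(DPow(H))$ regardless of the promise, while a colour-isomorphic $\DDG{1}$ is (up to relabelling) itself a reduced directed power graph of a $p$-group, so the promise holds and a valid $\psi$ is produced. The linear-time bound also survives: constructing $R(DPow(H))$, running the cited algorithm, and checking that $\psi$ is a colour-preserving bijection mapping arcs to arcs (together with comparing arc counts, so that non-arcs need not be examined separately) are all linear in the input size. In short, you prove a self-contained version of what the paper asserts by citation, at no asymptotic cost.
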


    In a vertex-coloured directed acyclic graph, let vertices $u_1,u_2,\dots,u_k$ have a common parent $u$. Then $u$ is called \emph{least common parent} of $u_1,u_2,\dots,u_k$, if for any other existing common parent $u'$ of $u_1,u_2,\dots,u_k$, $col(u) \leq col(u')$.

Now consider the case when the input group is a finite nilpotent group $H=H_1\times H_2 \times \dots \times H_k $, where $H_i$ is the Sylow $p_i$-subgroup of $H$. Note that any group element of $H$ can be represented as $(x_1,x_2,\dots,x_k)$, where $x_i\in H_i$, $1\leq i \leq k$. We now describe the \emph{Gluing Process}.

\noindent\textbf{Gluing Process:} Let $\DDG{1}$ be a reduced directed graph and $H=H_1 \times \dots \times H_k $ be a finite nilpotent group. 
Suppose $\DDG{1i}$ denotes the subgraph of $\DDG{1}$ induced on the set of vertices whose colours are $p_i$-powers, for all $1\leq i \leq k$. 
Let $f_i : V(R(DPow(H_i)))  \xrightarrow{} V(\DDG{1i})$ be a colour-isomorphim  between $R(DPow(H_i))$ and $\DDG{1i}$, for $1\leq i \leq k$. Gluing process decides if there exists a colour-isomorphism $f: V(R(DPow(H))) \xrightarrow{} V(\DDG{1})$ such that $f$ restricted to $V(R(DPow(H_i)))$ is $f_i$. 
To achieve this, we define $f: V(R(DPow(H))) \xrightarrow{} V(\DDG{1})$ 
as follows: $f$ maps each vertex $[(x_1,\dots,x_k)]$ of $R(DPow(H))$ to a least common parent, if it exists, of $f_1([x_1])$, $\dots$, $f_k([x_k])$ in $\DDG{1}$. 
Thereafter, we check if $f$ is a colour-isomorphism between $R(DPow(H))$ and $\DDG{1}$. For a detailed discussion of the correctness of Gluing process, we refer the reader to \Cref{correctness of patchup}, particularly \Cref{rong mile geche}. 
Now, we prove one of the main results of this section. 

\medskip

\begin{restatable}{lemma}{ftaa}
\label{pow(nil) reducible to dpow(p)}
   Let $\mathcal{X}$ be a subclass of $\mathcal{G}_{PP}=\{G \text{ is a $p$-group}: \text{ $p$ is some prime}\}$ and $Nil(\mathcal{X})$ be the class of nilpotent groups whose Sylow subgroups are in $\mathcal{X}$. Let $\mathcal{A}$ be a polynomial time algorithm that on input a reduced directed graph $\mathcal{D}'$, decides if $\mathcal{D}' \cong_c R(DPow(G))$, for some $G\in \mathcal{X}$ and if so, returns a colour-preserving isomorphism $\psi:V(R(DPow(G))) \xrightarrow[]{} V(\mathcal{D}')$ along with $G$. Then, there is a polynomial time algorithm $\mathcal{A}_{nil}$, that, on input a reduced directed graph with $n$ vertices, checks whether it is colour-isomorphic to $R(DPow(H))$ for some group $H \in Nil(\mathcal{X})$. Moreover, $\mathcal{A}$ is called  $O(n)$ times by $\mathcal{A}_{nil}$. 
\end{restatable}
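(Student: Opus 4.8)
The plan is to reduce the recognition of $R(DPow(H))$ for $H\in Nil(\mathcal{X})$ to $k$ independent recognitions of its Sylow subgraphs, followed by a single gluing verification, where $k$ is the number of primes occurring in the colours. The structural fact driving everything is that for a nilpotent $H=H_1\times\cdots\times H_k$ with coprime Sylow orders, a cyclic subgroup factors as $\langle (x_1,\dots,x_k)\rangle=\langle x_1\rangle\times\cdots\times\langle x_k\rangle$. Consequently each vertex $[(x_1,\dots,x_k)]$ of $R(DPow(H))$ is determined by the tuple $([x_1],\dots,[x_k])$, its colour equals $\prod_i col([x_i])$, and, via the canonical embeddings $H_i\hookrightarrow H$, it is the \emph{unique least common parent} of the single-component classes $[x_1],\dots,[x_k]$: any common parent $[(w_1,\dots,w_k)]$ must satisfy $x_i\in\langle w_i\rangle$, so its colour is a multiple of $\prod_i ord(x_i)$, and $[(x_1,\dots,x_k)]$ itself attains this minimum. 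This is exactly the invariant the Gluing Process exploits.

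\textit{Decomposition and oracle calls.} I would first read off the candidate primes $p_1,\dots,p_k$ from the input $\DDG{1}$ as the colours that are prime numbers, and for each $p_i$ form $\DDG{1i}$, the subgraph of $\DDG{1}$ induced on the vertices whose colour is a power of $p_i$ (including the colour-$1$ vertex). I then call $\mathcal{A}$ on each $\DDG{1i}$. If some call reports that $\DDG{1i}$ is not colour-isomorphic to $R(DPow(G))$ for any $G\in\mathcal{X}$, I output ``No''; otherwise $\mathcal{A}$ returns a $p_i$-group $G_i\in\mathcal{X}$ together with a colour-isomorphism $\psi_i:V(R(DPow(G_i)))\to V(\DDG{1i})$. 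Since distinct primes come from distinct prime-coloured vertices, $k\le n$, so $\mathcal{A}$ is invoked $O(n)$ times; the primality tests and induced-subgraph extractions are all polynomial in $n$.

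\textit{Gluing and verification.} I set $H=G_1\times\cdots\times G_k$, which lies in $Nil(\mathcal{X})$ because the $G_i$ are $p_i$-groups for distinct primes and hence are exactly the Sylow subgroups of $H$. Running the Gluing Process with $f_i=\psi_i$ builds a candidate map $f:V(R(DPow(H)))\to V(\DDG{1})$ sending $[(x_1,\dots,x_k)]$ to the least common parent of $\psi_1([x_1]),\dots,\psi_k([x_k])$ in $\DDG{1}$, when one exists. I then \emph{explicitly} test whether $f$ is a bijective, colour-preserving graph isomorphism, and output ``Yes'' (with witness $H$) if and only if this test passes. Each least-common-parent computation and the edge/colour checks are polynomial, so $\mathcal{A}_{nil}$ is polynomial apart from its $O(n)$ calls to $\mathcal{A}$.

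Correctness splits into two directions. If $\DDG{1}\cong_c R(DPow(H))$ for some $H\in Nil(\mathcal{X})$, then the Sylow decomposition forces $\DDG{1i}\cong_c R(DPow(H_i))$ with $H_i\in\mathcal{X}$ (the $p_i$-power-coloured vertices are precisely the image of the embedded $H_i$), so every oracle call succeeds and, by the product structure above, the least-common-parent map is a genuine colour-isomorphism, whence the verification accepts. Conversely, acceptance produces a concrete $H\in Nil(\mathcal{X})$ together with an explicitly verified colour-isomorphism, so $\DDG{1}\cong_c R(DPow(H))$; this explicit check is exactly what eliminates the false positives a pure promise algorithm would admit. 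I expect the main obstacle to be establishing correctness of the Gluing Process itself, namely that when each $\psi_i$ is a colour-isomorphism the required least common parents exist, are unique, and assemble into a colour-isomorphism precisely when $\DDG{1}$ is a reduced nilpotent power graph. This is the content deferred to \Cref{rong mile geche}, and it is where the coprimality of the Sylow orders must be combined carefully with the DAG/divisibility structure of reduced power graphs recorded before \Cref{color iso in enough}.
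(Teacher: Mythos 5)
Your proposal is correct and follows essentially the same route as the paper: decompose the input into the induced subgraphs on $p_i$-power-coloured vertices, call $\mathcal{A}$ on each to obtain the Sylow candidates $G_i$ and isomorphisms $\psi_i$, form $H=G_1\times\cdots\times G_k$, build the gluing map via least common parents, and explicitly verify it is a colour-isomorphism (deferring the correctness of the Gluing Process, i.e.\ the unique-least-common-parent structure and \Cref{rong mile geche}, exactly as the paper does). Your only deviations are cosmetic and if anything cleaner --- you read the primes off the prime-valued colours rather than from a factorisation of the vertex count, and you spell out why the explicit verification kills false positives.
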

\begin{proof}
    Let $\DG$ be a reduced directed graph with $n$ vertices, where \newline $n=p_1^{\alpha_1}\dots p_k^{\alpha_k}$, $\alpha_i \geq 1$. The algorithm $\mathcal{A}_{nil}$ runs on input $\DG$ as follows: 
       at first, 
       for each $1\leq i \leq k$, identify the subgraph $\DG_i$ of $\DG$ induced on the set of vertices with out-degree $p_i$-power in $\DG$. If $\DG$ is indeed colour-isomorphic to the reduced directed power graph of some nilpotent group $G=G_1\times \dots \times G_k$, then we must have the following: each $\DDG{i}$ should correspond to the reduced directed power graph of the unique Sylow subgroup $G_i$ of $G$ (due to \Cref{claim2: patchup} in \Cref{correctness of patchup}). To check (1), we run the algorithm $\mathcal{A}$ on each $\DDG{i}$. If $\mathcal{A}$ returns `No' for some $\DDG{i}$, $1\leq i \leq k$, then $\mathcal{A}_{nil}$ returns `No'. Else, it returns $f_i:V(R(DPow(G_i))) \overrightarrow{} V(\DDG{i})$, for all $1\leq i \leq k$. Then we run the Gluing process on inputs $\DG$; $R(DPow(G)$; $R(DPow(G_i))$, $\DDG{i}, f_i$, for all $1\leq i \leq k$; . 
\end{proof}

By the preceding lemma, it suffices to design a polynomial time algorithm $\mathcal{A}$ for recognising directed power graphs within certain subclasses $\mathcal{X}$ of $\mathcal{G}_{PP}$. We focus on three such subclasses: abelian groups, groups of bounded polycyclic length, and groups of prime exponent. The last subclass is discussed in \Cref{sec: recognition nilpotent square free}. For the first two subclasses, all candidate groups can be enumerated in polynomial time (see \Cref{enumerating abelian p-groups in poly} and \Cref{bounded polycyclic length: presentation and enumerating}), followed by applying the isomorphism algorithm from \Cref{iso: nil}. The following result summarises the key outcomes of our discussion so far.


\medskip

\begin{theorem}\label{summary}
There are polynomial time algorithms $\mathcal{A}_{ab}$, $\mathcal{A}_{polycyc}$, and $\mathcal{A}_{sqfr}$ that take a graph $\Gamma$ as input and decide if $\Gamma$ is isomorphic to the power graph of an abelian group, a nilpotent group of bounded polycyclic length, or a nilpotent group of squarefree exponent, respectively.
\end{theorem}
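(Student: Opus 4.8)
The plan is to assemble \Cref{summary} from the machinery developed earlier in the section, treating each of the three group classes as an instance of the reduction in \Cref{pow(nil) reducible to dpow(p)}. First I would recall the preprocessing that lets us work entirely with reduced directed graphs: given the input graph $\Gamma$ in standard representation, I apply the reconstruction algorithm of Das et al.\ to obtain a candidate directed power graph, form its reduced graph $R(\DG)$, and invoke \Cref{color iso in enough} so that recognising $Pow(G)$ reduces to deciding colour-isomorphism of reduced directed graphs. By \Cref{pow(nil) reducible to dpow(p)}, it then suffices, for each target class of nilpotent groups, to supply a polynomial-time algorithm $\mathcal{A}$ recognising the reduced directed power graphs of the corresponding $p$-group Sylow subgroups; the nilpotent-to-$p$-group reduction and the Gluing process handle the rest.

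The core work is therefore providing the three algorithms $\mathcal{A}$ at the level of $p$-groups. For abelian groups, I would use that abelian $p$-groups of order $p^m$ are classified by partitions of $m$, so the number of candidate Sylow $p$-subgroups is polynomial in $n$ (this is exactly the enumeration promised by \Cref{enumerating abelian p-groups in poly}); I enumerate each candidate $G$, build $R(DPow(G))$, and run the linear-time colour-isomorphism test of \Cref{iso: nil} against the input subgraph. For nilpotent groups of bounded polycyclic length, the enumeration of candidate $p$-groups is again polynomial (\Cref{bounded polycyclic length: presentation and enumerating}), and the same enumerate-and-test strategy applies. For nilpotent groups of squarefree exponent, each Sylow subgroup has prime exponent, and this case is dispatched by the dedicated treatment in \Cref{sec: recognition nilpotent square free}. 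In every case I verify that $\mathcal{A}$ runs in polynomial time and returns the witnessing isomorphism together with the group $G$, so that the hypotheses of \Cref{pow(nil) reducible to dpow(p)} are met.

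Finally, since \Cref{pow(nil) reducible to dpow(p)} guarantees that $\mathcal{A}$ is called only $O(n)$ times and that $\mathcal{A}_{nil}$ runs in polynomial time whenever $\mathcal{A}$ does, combining the three instantiations yields the three algorithms $\mathcal{A}_{ab}$, $\mathcal{A}_{polycyc}$, and $\mathcal{A}_{sqfr}$ of the theorem, each running in polynomial time. I expect the main obstacle to be establishing the polynomial-size enumerations of candidate Sylow $p$-subgroups for each class and confirming that each candidate's reduced directed power graph can be constructed and tested in polynomial time; the bounded polycyclic length case in particular requires controlling the number of groups via their polycyclic presentations, which is where the bulk of the technical effort lies. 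The nilpotent assembly itself is comparatively routine once \Cref{pow(nil) reducible to dpow(p)} and the correctness of the Gluing process are in hand.
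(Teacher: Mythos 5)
Your proposal is correct and follows essentially the same route as the paper: reduce to recognition of reduced directed power graphs via the Das et al.\ reconstruction and \Cref{color iso in enough}, invoke \Cref{pow(nil) reducible to dpow(p)} with the Gluing process, and instantiate the $p$-group algorithm $\mathcal{A}$ by enumeration plus the isomorphism test of \Cref{iso: nil} for the abelian and bounded-polycyclic-length cases, and by the prime-exponent (star) characterization for the squarefree-exponent case. No substantive differences from the paper's own argument.
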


\subsection{Abelian Groups}\label{sec: recognition abelian p-groups}




The next theorem follows as a consequence of   \Cref{enumerating abelian p-groups in poly} and \Cref{pow(nil) reducible to dpow(p)}. Subsequently, \Cref{summary} for abelian groups follows directly.

\medskip 
    
\begin{theorem}\label{recognition abelian}
There is a polynomial time algorithm that takes a reduced directed graph $\Gamma$ as input and recognises whether $\Gamma$ is isomorphic to the reduced directed power graph of an abelian group.
\end{theorem}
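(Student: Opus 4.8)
\textbf{Proof proposal for \Cref{recognition abelian}.}

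The plan is to follow the reduction machinery already set up in \Cref{pow(nil) reducible to dpow(p)}, which reduces recognition of directed power graphs of nilpotent groups to recognition of directed power graphs of their Sylow subgroups. Since an abelian group is nilpotent and decomposes as a direct product of abelian $p$-groups, it suffices to instantiate the meta-algorithm $\mathcal{A}_{nil}$ with the class $\mathcal{X}$ taken to be the abelian $p$-groups. Concretely, I would take the input reduced directed graph $\G$, decompose its vertex set according to the prime-power colours (out-degrees) into the induced subgraphs $\DDG{i}$ on the $p_i$-power colours for each prime $p_i$ dividing $n$, and invoke the per-prime recognition subroutine $\mathcal{A}$ on each $\DDG{i}$, then glue the results together via the Gluing process. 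Thus the entire theorem reduces to supplying, for the class $\mathcal{X}$ of abelian $p$-groups, the polynomial-time subroutine $\mathcal{A}$ required as the hypothesis of \Cref{pow(nil) reducible to dpow(p)}.

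Next I would construct this subroutine $\mathcal{A}$ for abelian $p$-groups. The key observation is that an abelian $p$-group of order $p^m$ is determined up to isomorphism by a partition of $m$ (its type, $\ZZ_{p^{\lambda_1}}\times\cdots\times\ZZ_{p^{\lambda_t}}$), and the number of such partitions is at most the number of partitions of $m$, which is subexponential in $m$ and hence polynomial in $p^m=|G|$. This is exactly the content I would cite as \Cref{enumerating abelian p-groups in poly}: all candidate abelian $p$-groups of a given order can be enumerated in polynomial time. For each candidate abelian $p$-group $G$, I would build its reduced directed power graph $R(DPow(G))$ (which is polynomial-time computable from the Cayley table, itself obtainable in $poly(|G|)$ time), and then test colour-isomorphism against the input graph $\DDG{i}$ using the linear-time $p$-group isomorphism routine from \Cref{iso: nil}. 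Since only polynomially many candidates are tested, each in polynomial time, $\mathcal{A}$ runs in polynomial time and, on success, returns both the witnessing group $G$ and the colour-preserving isomorphism $\psi$, exactly as required by the hypothesis of \Cref{pow(nil) reducible to dpow(p)}.

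Assembling these pieces, the algorithm for \Cref{recognition abelian} is $\mathcal{A}_{nil}$ instantiated with the abelian-$p$-group subroutine $\mathcal{A}$; correctness follows because an abelian group lies in $Nil(\mathcal{X})$ precisely when each of its Sylow subgroups is abelian, and \Cref{color iso in enough} guarantees that colour-isomorphism of reduced graphs captures isomorphism of the underlying directed power graphs. The running time is polynomial: $\mathcal{A}_{nil}$ calls $\mathcal{A}$ only $O(n)$ times by \Cref{pow(nil) reducible to dpow(p)}, and each call is polynomial.

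\emph{Main obstacle.} The delicate point is not the enumeration but ensuring that the per-prime recognition and the Gluing process do not produce a false positive. The subtlety, flagged in the excerpt for the analogous nilpotent promise-problem of \cite{das_et_al:LIPIcs.FSTTCS.2024.20}, is that testing each $\DDG{i}$ in isolation and then gluing could in principle accept a graph that is locally (per-prime) consistent with directed power graphs of abelian $p$-groups yet is not globally the reduced directed power graph of any single abelian group. The correctness of the gluing step therefore hinges on the structural fact (to be invoked via the results on the Gluing process, e.g. \Cref{rong mile geche} and \Cref{claim2: patchup}) that the least-common-parent map faithfully reconstructs the direct-product structure, so that a genuine colour-isomorphism of $\G$ with some $R(DPow(H))$ exists if and only if each $\DDG{i}$ recognizes a Sylow factor $G_i$ and these factors glue consistently. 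Verifying that this equivalence holds for the abelian class — in particular that no spurious graph survives both the per-prime test and the gluing test — is the step I expect to require the most care.
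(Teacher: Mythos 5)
Your proposal matches the paper's proof essentially step for step: the paper also derives \Cref{recognition abelian} by instantiating \Cref{pow(nil) reducible to dpow(p)} with $\mathcal{X}$ being the abelian $p$-groups, supplying the per-prime subroutine $\mathcal{A}$ via the polynomial-time enumeration of abelian $p$-groups of a given order (\Cref{enumerating abelian p-groups in poly}) combined with the colour-isomorphism test of \Cref{iso: nil}, with the Gluing process (justified by \Cref{rong mile geche}) handling exactly the false-positive concern you flag. Your argument is correct and complete relative to the paper's lemmas.
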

    

\medskip 

\begin{restatable}{lemma}{abelian}\label{enumerating abelian p-groups in poly}
    There is a polynomial time algorithm that, given a number $p^m$, outputs all non-isomorphic abelian groups of order $p^m$.
          
\end{restatable}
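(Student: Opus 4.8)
The plan is to reduce the enumeration directly to the Fundamental Theorem of Finite Abelian Groups. Every abelian group of order $p^m$ is isomorphic to a direct product $\mathbb{Z}_{p^{\lambda_1}} \times \cdots \times \mathbb{Z}_{p^{\lambda_r}}$ with $\lambda_1 \geq \cdots \geq \lambda_r \geq 1$ and $\lambda_1 + \cdots + \lambda_r = m$, and two such products are isomorphic if and only if their partitions $(\lambda_1,\ldots,\lambda_r)$ coincide. Hence the non-isomorphic abelian groups of order $p^m$ are in bijection with the partitions of the integer $m$, and the algorithm simply has to produce one group per partition.

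Since the groups (and the power graphs built from them) have $n := p^m$ elements, I measure running time in $n$, as is done throughout this section. First I would confirm that the output is not too large: the number of partitions of $m$, call it $P(m)$, satisfies $P(m) = 2^{O(\sqrt{m})}$ by the Hardy--Ramanujan estimate, which---because $p^m \geq 2^m$---is not merely polynomial but in fact subpolynomial in $n$. Thus listing one group per partition produces only $2^{O(\sqrt{m})}$ objects.

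The algorithm itself has two routines. First, enumerate all partitions of $m$ using a standard partition-generation procedure that emits each partition in $\mathrm{poly}(m)$ time (for instance, generating partitions in colex order); this costs $\mathrm{poly}(m)\cdot P(m)$ overall. Second, for each partition $\lambda = (\lambda_1,\ldots,\lambda_r)$, construct the group $A_\lambda = \mathbb{Z}_{p^{\lambda_1}} \times \cdots \times \mathbb{Z}_{p^{\lambda_r}}$ and write out an explicit representation---e.g.\ its Cayley table, computed componentwise via addition modulo $p^{\lambda_i}$---in time $O(n^2\,\mathrm{polylog}\,n)$. Summing over all partitions gives total running time $O\bigl(P(m)\cdot n^2\,\mathrm{polylog}\,n\bigr) = \mathrm{poly}(n)$, producing exactly the explicit group representation required by \Cref{iso: nil} when this enumeration is plugged into \Cref{pow(nil) reducible to dpow(p)}.

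Because the structure theorem and partition enumeration are entirely standard, there is essentially no hard mathematical step; the only point that needs care is the complexity bookkeeping---namely fixing the convention so that \emph{polynomial time} is read with respect to the group order $n = p^m$ (equivalently the vertex count of the associated power graph) rather than the bit-length $\Theta(m\log p)$ of the number $p^m$, and verifying along the way that the partition count $P(m) = 2^{O(\sqrt{m})}$ stays subpolynomial in $n$, so that neither the number of outputs nor the total output size blows up.
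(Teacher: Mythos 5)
Your proposal is correct and follows essentially the same route as the paper's proof: invoke the Fundamental Theorem of Finite Abelian Groups to put the non-isomorphic abelian groups of order $p^m$ in bijection with partitions of $m$, bound the partition count by $2^{O(\sqrt{m})}$ (which the paper notes is at most $|G|^c$, and you sharpen to subpolynomial in $n=p^m$), and then write out a Cayley table for each partition in time polynomial in $n$. Your added bookkeeping about measuring time in $n=p^m$ rather than in the bit-length of the input matches the convention the paper uses implicitly, so there is no substantive difference between the two arguments.
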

\begin{proof}
By the Fundamental Theorem of Abelian Groups, any abelian group of order $n=p^{m}$ 
    is isomorphic to $\ZZ_{p^{r_{1}}}\times \ZZ_{p^{r_{2}}} \times \dots \times \ZZ_{p^{r_{k}}}$, such that each $r_{i} \geq 1$, for all $1\leq i\leq k$, and $r_{1}+r_{2}+\dots + r_{k}=m$. 
    Hence, the number of non-isomorphic abelian groups of order $p^m$ is the number of partitions of $m$, which is upper bounded by $2^{c \sqrt{m}}$, for some constant $c$ (see \cite{apostol2013introduction}). Observe that, $2^{c\sqrt{m}} < (|G|)^c$. 
Let $Part(m)=\{Part_1,Part_2,\dots,Part_k\}$ be the set of all possible partitions of $m$. For each possible partition $Part_i$ in $Part(m)$, we construct the Cayley table of the corresponding abelian $p$-group $G_{Part_i}$, i.e., if $Part_i=\{m_1,m_2,\dots,m_s\}$, then $G_{Part_i}:= \ZZ_{p^{m_1}}\times \ZZ_{p^{m_2}}\times \dots \times \ZZ_{p^{m_s}}$. 
Since $|Part(m)|< (p^m)^c$, the algorithm/enumeration process runs in time polynomial in $p^m$. 
\end{proof}

\subsection{Nilpotent Groups of Bounded Polycyclic Length}\label{sec: recognition nilpotent subclass}

A \textit{polycyclic series} of a group $G$ is a finite series $G=G_0 \geq G_1 \geq \dots \geq G_l=\{1\}$ such that for each $0\leq i \leq l-1$, $G_{i+1} \lhd G_{i}$ and the factor $G_{i+1}/G_i$ is cyclic. A group with a polycyclic series is called a \textit{polycyclic group}. 
Every finite nilpotent group is a polycyclic group. The \textit{polycyclic length} (see \cite{fisher1974polycyclic})  of a polycyclic group is the number of non-trivial factors of a polycyclic series of shortest length. The main result of this section follows as a consequence of \Cref{bounded polycyclic length: presentation and enumerating} and \Cref{pow(nil) reducible to dpow(p)}. As a consequence, \Cref{summary} for nilpotent groups of bounded polycyclic length follows.


 \medskip 

 \begin{theorem}\label{recognition bounded polycyclic}
There is a polynomial time algorithm that takes a reduced directed graph $\Gamma$ as input and checks whether $\Gamma$ is isomorphic to the reduced directed power graph of a nilpotent group with bounded polycyclic length.
\end{theorem}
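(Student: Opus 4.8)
The plan is to obtain \Cref{recognition bounded polycyclic} by feeding \Cref{pow(nil) reducible to dpow(p)} with the subclass $\mathcal{X}\subseteq\mathcal{G}_{PP}$ consisting of all $p$-groups of polycyclic length at most $c$, where $c$ is the fixed constant bounding the polycyclic length. For a nilpotent group $H=H_1\times\dots\times H_k$ each Sylow subgroup $H_i$ is one of its polycyclic factors, so $H$ lies in the target class exactly when every $H_i\in\mathcal{X}$; that is, the class of nilpotent groups of bounded polycyclic length considered here is precisely $Nil(\mathcal{X})$, and it contains metacyclic and cube-free-order nilpotent groups since their Sylow subgroups have polycyclic length at most $2$. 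By \Cref{pow(nil) reducible to dpow(p)}, it therefore suffices to exhibit a polynomial time algorithm $\mathcal{A}$ that recognises reduced directed power graphs of groups in $\mathcal{X}$ and returns a witnessing colour-isomorphism; the nilpotent recogniser $\mathcal{A}_{nil}$ then invokes $\mathcal{A}$ only $O(n)$ times and runs in polynomial time.

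To build $\mathcal{A}$, I first read the order of the candidate $p$-group off its reduced graph $\mathcal{D}'$: each vertex $\tau$ of a reduced directed power graph corresponds to a cyclic subgroup and carries the colour $col(\tau)$ equal to the common order of its generators, so the underlying group has order $\sum_{\tau\in V(\mathcal{D}')}\phi(col(\tau))$, from which $p$ and the exponent $m$ with $|G|=p^m$ are immediate. Next I enumerate, via \Cref{bounded polycyclic length: presentation and enumerating}, every $p$-group of order $p^m$ and polycyclic length at most $c$, each produced as a multiplication table. For each candidate $G$ I build $R(DPow(G))$ in time polynomial in $|G|$ and test $\mathcal{D}'\cong_c R(DPow(G))$ using the linear time colour-isomorphism algorithm of \Cref{iso: nil}; $\mathcal{A}$ outputs `Yes', together with $G$ and the isomorphism $\psi$, as soon as a match is found, and `No' otherwise. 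Correctness is immediate from \Cref{color iso in enough}, since for reduced directed power graphs colour-isomorphism coincides with isomorphism of the underlying directed power graphs, so some candidate matches $\mathcal{D}'$ if and only if $\mathcal{D}'$ is a reduced directed power graph of a group in $\mathcal{X}$.

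The whole procedure is polynomial provided the enumeration step is, which is exactly what \Cref{bounded polycyclic length: presentation and enumerating} must supply, and this is the step I expect to be the crux. The idea is that a $p$-group of polycyclic length at most $c$ has a consistent polycyclic presentation on at most $c$ generators $g_1,\dots,g_c$, in which the collected right-hand sides of the power relations $g_i^{p^{e_i}}$ and commutator relations $[g_j,g_i]$ range over normal words whose number is bounded by $(p^m)^{O(c^2)}$; with $c$ constant this is polynomial in $|G|$, and each presentation can be turned into a multiplication table in polynomial time by collection. The main obstacle is thus to carry out this enumeration cleanly---bounding the number of presentations of a fixed order $p^m$, discarding the inconsistent ones, and computing multiplication tables efficiently---so that only polynomially many candidate groups are ever passed to \Cref{iso: nil}.
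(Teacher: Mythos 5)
Your proposal follows the paper's proof essentially verbatim: the paper also obtains the theorem by instantiating \Cref{pow(nil) reducible to dpow(p)} with $\mathcal{X}$ the class of $p$-groups of polycyclic length at most $c$, building the recogniser $\mathcal{A}$ by enumerating all candidate groups of the given order via \Cref{bounded polycyclic length: presentation and enumerating} (whose proof, like your sketch, rests on short polycyclic presentations converted to Cayley tables in polynomial time) and testing colour-isomorphism with \Cref{iso: nil}, with correctness via \Cref{color iso in enough}. One small caveat: your phrase that each Sylow subgroup ``is one of its polycyclic factors'' is misstated---the fact actually needed (and true, though left implicit in the paper as well) is that the polycyclic length of a finite nilpotent group equals the maximum of the polycyclic lengths of its Sylow subgroups, so the target class is indeed $Nil(\mathcal{X})$.
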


Let $\mathcal{X}$ denote the class of groups in $\mathcal{G}_{PP}$ with bounded polycyclic length. To enumerate these groups in polynomial time (see \Cref{bounded polycyclic length: presentation and enumerating}), we use the concept of group \emph{presentations}, which is derived here from their polycyclic series. 
Let $X$ be a finite set and $R$ is a set of words over $X\cup X^{-1}$, i.e., $R$ is a subset of $(X\cup X^{-1})^*$, where $A^*$ is the set of all strings $x_1x_2\dots x_r$ with each $x_i \in A$. The \textit{presentation} of a group $G$ is the pair $(X,R)$ in the sense that $G$ is isomorphic to the quotient group $ F_X / N$, where $F_X$ is the free group on $X$ and $N$ is the normal closure of $R$ in $F_X$. 
The length of a presentation $(X,R)$ is denoted by $||(X,R)||$.

We first show that standard methods (e.g., \cite{higman1960enumerating}, \cite{1313808}) yield logarithmic-size presentations for groups in $\mathcal{X}$. Using these, and a polynomial time \emph{procedure} for constructing a Cayley table from a presentation (cf. \cite{1313808}), we establish the next lemma. 



\medskip

\begin{restatable}{lemma}{polycyclic}\label{bounded polycyclic length: presentation and enumerating}
  Let $c>0$ be a constant and $\mathcal{X}$ be the set of groups $G$ from $\mathcal{G}_{PP}$ such that the polycyclic length of $G$ is bounded by $c$. Then, each $G\in \mathcal{X}$ has a presentation of length $O(\log |G|)$. Moreover, there is an algorithm that generates Cayley tables of all non-isomorphic groups of $\mathcal{X}$ of order $N$ in time polynomial in $N$.
\end{restatable}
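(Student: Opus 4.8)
The plan is to prove the lemma in two separate parts, corresponding to its two assertions: (i) every group $G\in\mathcal{X}$ admits a presentation of length $O(\log|G|)$, and (ii) there is an algorithm enumerating the Cayley tables of all non-isomorphic groups in $\mathcal{X}$ of order $N$ in time $\mathrm{poly}(N)$. For part (i), I would start from a polycyclic series $G=G_0\geq G_1\geq\dots\geq G_l=\{1\}$ of shortest length, where $l\leq c$ is the polycyclic length. Since each quotient $G_{i}/G_{i+1}$ is cyclic, I can pick a generator $g_i$ whose image generates $G_i/G_{i+1}$; the elements $g_0,\dots,g_{l-1}$ then generate $G$, so the generating set $X$ has size at most $c$, a \emph{constant}. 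The relations are the standard \emph{polycyclic presentation} relations: a power relation $g_i^{e_i}=w_{ii}$ expressing the order of each cyclic factor, and conjugation relations $g_j^{g_i}=w_{ij}$ for $i<j$ (together with inverse-conjugation relations), where each right-hand side $w_{ij}$ is a normal-form word $g_{i+1}^{a_{i+1}}\cdots g_{l-1}^{a_{l-1}}$.

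The key length estimate is that each exponent $a_t$ and each $e_i$ is bounded by the order of the corresponding factor, hence by $|G|$, so it can be encoded in $O(\log|G|)$ bits. Since $X$ is of constant size there are only $O(c^2)=O(1)$ relations, each of length $O(\log|G|)$, giving a total presentation length of $O(\log|G|)$, which establishes part (i). Here I would cite the standard theory of polycyclic presentations (e.g.\ \cite{higman1960enumerating}, \cite{1313808}) rather than re-derive that every polycyclic group has a consistent polycyclic presentation of this shape.

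For part (ii), the enumeration strategy is to iterate over all candidate presentations of the prescribed bounded-polycyclic-length shape whose total length is $O(\log N)$. The generating set has constant size, and each of the $O(1)$ defining words has length $O(\log N)$ over a constant-size alphabet; hence the number of syntactically possible presentations is at most $2^{O(\log N)}=\mathrm{poly}(N)$, and they can all be listed in polynomial time. For each such candidate presentation $(X,R)$ I would invoke the polynomial-time procedure of \cite{1313808} that builds a Cayley table from a presentation (using, e.g., collection in the polycyclic setting), discarding any presentation that does not yield a group of order exactly $N$ lying in $\mathcal{G}_{PP}$. Finally, to output only \emph{non-isomorphic} groups, I would run the isomorphism test on the produced Cayley tables and keep one representative per isomorphism class; for $p$-groups of bounded polycyclic length this can be done within the overall polynomial budget, so the whole enumeration runs in time polynomial in $N$.

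The main obstacle I anticipate is controlling the cost of the Cayley-table construction and the deduplication so that the total remains polynomial in $N$ rather than in the presentation length alone. Building the multiplication table naively could be expensive, so I would rely on the fact that collection in a consistent polycyclic presentation runs in time polynomial in the \emph{group order} $N$ (which is exactly the size of the object we are allowed to charge against), and that normal forms are unique, which lets us both verify $|G|=N$ and test isomorphism efficiently. Thus the delicate point is quoting the right effective version of polycyclic collection from \cite{1313808} with an $N^{O(1)}$ rather than merely $\|(X,R)\|^{O(1)}$ bound; everything else reduces to the counting argument that constant generators plus $O(\log N)$-length relations yield only polynomially many candidates.
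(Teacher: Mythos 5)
Your proposal is correct and takes essentially the same route as the paper: a polycyclic presentation with a constant-size generating set and $O(\log|G|)$-bit exponents for part (i), followed by enumerating the $\mathrm{poly}(N)$ candidate presentations of length $O(\log N)$ and feeding each to the Arvind--Tor\'an Cayley-table construction for part (ii), including the correct observation that the construction must run in time polynomial in $N$ rather than in the presentation length. The only divergence is your final deduplication-by-isomorphism-testing step, which the paper omits because the lemma only requires the output list to \emph{cover} every isomorphism class (duplicates are harmless downstream); your step is nonetheless legitimate in polynomial time, since bounded polycyclic length yields a constant-size generating set and isomorphism can then be tested by enumerating the $N^{O(1)}$ possible images of the generators.
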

\begin{proof}
We first show that $G\in \mathcal{X}$ has a logarithmic-size presentation. 

Let $|G|=p^n$. Suppose\footnote{A \textit{subgroup} $H$ of $G$ is denoted by $H \leq G$, and a \textit{normal subgroup} $H$ of $G$ is denoted by $H \lhd G$.}, $G=G_0 \geq G_1 \geq \dots \geq G_{c-1} \geq G_c=\{1\}$ is the polycyclic series of $G$, where $G_{i+1} \lhd G_{i}$, and $G_{i}/G_{i+1}$ is cyclic, for all $0\leq i \leq c-1$. We inductively construct a presentation of $(X_i,R_i)$ of $G_i$, $0\leq i \leq c-1$, where the elements of $R_i$ will be encoded as tuples of length $i$. Moreover, by induction on $i$, we show that $||R_i||\leq 2i^3 \log |G|$. 


 \textit{Base case:} For the base case, we give a representation $(X_1,R_1)$ of $G_{c-1}$. 
Let $[G_{c-1}: G_c]=r_c$ and $G_{c-1}= \langle g_{c}\rangle$. We take $X_{1}= \{g_{c}\}$. Since $G_{c-1}$ is cyclic, we have $g_{c}^{{r_c}}=1$. This relation forms the set $R_1$ and $||R_1|| \leq \log r_c < 2\log|G|$ (as $r_c \leq p^n=|G|$). Note that, to store $R_1$, storing the power of the generator $g_c$ is enough at this step.

\textit{Inductive case:} Let $(X_i,R_i)$ be a presentation for $G_{c-i}$, where $X_i=\{g_{c},g_{c-1},\dots,g_{c-i+1}\}$ and $||R_i||\leq 2i^3 \log |G|$. We now construct a presentation $(X_{i+1},R_{i+1})$ for $G_{c-i-1}$ such that $||R_{i+1}||\leq 2(i+1)^3\log |G|$. Elements of the set $R_i$ are represented as tuples of length $i$, encoding the exponents of the generators $g_c, g_{c-1}, \ldots, g_{c-i+1}$ \emph{in that order}. Accordingly, elements in $R_{i+1} \setminus R_i$ are tuples of length $i+1$, with the final component representing the exponent of the newly introduced generator $g_{c-i}$ in $X_{i+1}$.

    Since $G_{c-i-1}/G_{c-i}$ is cyclic, let $[G_{c-i-1}:G_{c-i}] =r_{c-i}$ and $G_{c-i-1}/G_{c-i}= \langle g_{c-i}G_{c-i} \rangle$. We set $X_{i+1}=X_i \cup \{g_{c-i}\}$. Now we define the relations of $R_{i+1}\setminus R_{i}$. 

    Since $G_{c-i}$ is normal in $G_{c-i-1}$, we have $g_{c-i}G_{c-i}=G_{c-i}g_{c-i}$. Therefore, we have the following set of relations:
    \begin{equation}
    \label{label 1}
        g_{c-i}g_{r}=w_{c-i,r}g_{c-i}, \ c \geq r \geq c-i+1\\
        \text{ for some $w_{c-i,r} \in G_{c-i}$}
    \end{equation}
  Note that here $w_{c-i,r}$ is an word of the form $g_{c}^{l_{c}}g_{c-1}^{l_{c-1}} \dots g_{c-i+1}^{l_{c-i+1}}$ where $0 \leq l_{u} \leq r_{u}$, for each $ c-i+1 \leq u \leq c$. Since $X_i$ is a generating set for $G_{c-i}$ and the order of $g_{u} \in X_i$ is $r_u$, any arbitrary element of $G_{c-i}$ can be represented by an word of the form $w_{c-i,r}$. 
    There are $i$ relations of the form \Cref{label 1}, and the length of the presentation for each relation is at most $i.n. \log p$.

    Now since $G_{c-i-1}/G_{c-i} \cong \ZZ_{p^{r_{c-i}}}$, we have the following relation:
    \begin{equation} \label{label 3}
        g_{c-i}^{p^{r_{c-i}}}=w''_{c-i}, \text{for some $w''_i \in Z_i$}
    \end{equation}

    Here also $w''_{i}$ is a word defining element of $Z_i$. Note that the above relation has presentation of length at most $i. n. \log p$.

    We define $R_{i+1}$ as the union of the above relations \Cref{label 1}, \Cref{label 3} and $R_i$. 
Since $\log (|G|)=n .  \log p$, we have the following: $||R_{i+1}||\leq ||R_i|| + (i+1). i. \log |G| \leq ||R_i|| + 2i^2\log (|G|)$, for $i \geq 1$. By the induction hypothesis, this implies $||R_{i+1}||\leq 2(i+1)^3\log(|G|)$.

By induction, the length of the presentation for $(X_c,R_c)$, which defines $G_0=G$, is at most $2c^3\log (|G|)$. Hence, $G$ has a presentation of length $O(\log |G|)$. This proves the first part of the lemma. 

We now prove the second part of the lemma.

Arvind and Toran \cite{1313808} showed that there is a polynomial time algorithm that takes a presentation $(X,R)$ of a solvable group $G$ which is obtained from a process described in the proof of the above lemma and outputs the Cayley table of a group $G'$ isomorphic to $G$. 
On all strings of length at most $O(\log N)$, we run the algorithm due to Arvind and Toran and generate the Cayley tables of all non-isomorphic groups of order $N$. The process of generating a presentation of length $\log(|G|)$ of each $G\in \mathcal{X}$ guarantees that this works.
\end{proof}

A number $n\in \mathbb{N}$ is \textit{$d$-powerfree} if for each prime $p$, $p^d$ does not divide $n$. Note that a $p$-group of order $p^{d-1}$ has polycyclic length at most $(d-1)$. The next corollary follows from \Cref{summary} of nilpotent groups of bounded polycyclic length.

\medskip 

\begin{corollary}\label{bounded power}
Let $d$ be a fixed number. Let $ n \in \mathbb{N} $ be $ d$-powerfree . Then, there exists a polynomial time algorithm that determines whether an undirected graph is isomorphic to the power graph of a nilpotent group of order $n$.
\end{corollary}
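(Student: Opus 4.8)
The plan is to reduce the corollary directly to the bounded-polycyclic-length case of \Cref{summary}, using the single structural observation that $d$-powerfreeness of the order forces every Sylow subgroup of a nilpotent group of order $n$ to have small polycyclic length. This places the \emph{entire} class of nilpotent groups of order $n$ inside a class $Nil(\mathcal{X})$ that the machinery of \Cref{pow(nil) reducible to dpow(p)} already handles, for one fixed choice of the bound.

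First I would fix the order. On input an undirected graph $\Gamma$, I verify that $|V(\Gamma)| = n$, rejecting otherwise; this is justified because $Pow(G)$ has exactly $|G|$ vertices, so any group whose power graph is $\Gamma$ must have order $n$. Next, let $G = G_1 \times \cdots \times G_k$ be an arbitrary nilpotent group of order $n = p_1^{\alpha_1} \cdots p_k^{\alpha_k}$, written as the direct product of its Sylow subgroups $G_i$ of order $p_i^{\alpha_i}$. Since $n$ is $d$-powerfree, $p_i^d \nmid n$ gives $\alpha_i \le d-1$ for every $i$, so each $G_i$ is a $p$-group of order at most $p_i^{d-1}$, and hence (as noted immediately before the corollary) has polycyclic length at most $d-1$. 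Taking $c = d-1$, which is a fixed constant because $d$ is fixed, this shows that every nilpotent group of order $n$ lies in $Nil(\mathcal{X})$, where $\mathcal{X}$ is the class of $p$-groups of polycyclic length at most $d-1$.

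With this in hand I would simply invoke $\mathcal{A}_{polycyc}$ from \Cref{summary} with bound $c = d-1$; internally this is the reduction of \Cref{pow(nil) reducible to dpow(p)} applied to $\mathcal{X}$, fed by the polynomial-time enumeration of Cayley tables of $p$-groups of bounded polycyclic length from \Cref{bounded polycyclic length: presentation and enumerating} and the colour-isomorphism test of \Cref{iso: nil}. Because $\Gamma$ has $n$ vertices and $n$ is $d$-powerfree, the class of nilpotent groups of order $n$ coincides exactly with the order-$n$ members of $Nil(\mathcal{X})$; therefore $\mathcal{A}_{polycyc}$ accepts $\Gamma$ if and only if $\Gamma$ is the power graph of some nilpotent group of order $n$. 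The running time is polynomial, since $c = d-1$ is fixed and \Cref{summary} runs in polynomial time for any fixed bound.

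The step I expect to be the most delicate is conceptual rather than computational: making the invocation of \Cref{summary} legitimate. The phrase ``nilpotent group of bounded polycyclic length'' must be read at the level of the Sylow subgroups, not the whole group. Indeed the polycyclic length of $G$ itself is $\sum_i (\text{polycyclic length of } G_i)$, which grows with the number of distinct primes $k$ and is therefore \emph{not} bounded; what is bounded, and all that \Cref{pow(nil) reducible to dpow(p)} actually requires, is the per-prime (Sylow) polycyclic length, which $d$-powerfreeness controls. I would make this distinction explicit so that the choice $\mathcal{X} = \{p\text{-groups of polycyclic length} \le d-1\}$ and the resulting class $Nil(\mathcal{X})$ are unambiguous, and so that the equivalence ``order $n$ and $d$-powerfree $\Longrightarrow$ membership in $Nil(\mathcal{X})$'' is transparent.
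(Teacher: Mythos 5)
Your proof is correct and takes essentially the same route as the paper: the paper's entire proof consists of the observation that a $p$-group of order at most $p^{d-1}$ has polycyclic length at most $d-1$, followed by an invocation of \Cref{summary} for nilpotent groups of bounded polycyclic length (which is itself proved via \Cref{pow(nil) reducible to dpow(p)} with $\mathcal{X}$ exactly your class of $p$-groups of polycyclic length at most $d-1$).

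One correction to your closing remark: it is not true that the polycyclic length of a finite nilpotent group $G=G_1\times\cdots\times G_k$ is the sum of the polycyclic lengths of its Sylow subgroups. Because the $G_i$ have pairwise coprime orders, cyclic factors coming from different primes can be merged into a single cyclic factor (e.g., $\mathbb{Z}_2\times\mathbb{Z}_3\times\mathbb{Z}_5\cong\mathbb{Z}_{30}$ has polycyclic length $1$, not $3$), and since each $G_i$ is also a quotient of $G$, the polycyclic length of $G$ equals the \emph{maximum} of the polycyclic lengths of the $G_i$. Hence the distinction you labour over is vacuous---the ``whole-group'' and ``per-Sylow'' readings of bounded polycyclic length define the same class of nilpotent groups---and your invocation of \Cref{summary} is legitimate under either reading; this slip is confined to your commentary and does not affect the correctness of the argument.
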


\subsection{Nilpotent Groups of Squarefree Exponent}\label{sec: recognition nilpotent square free}

We begin with a lemma that follows from a result by \cite{pourgholi2015undirected}.

\medskip 

\begin{lemma}\label{exponet p} Let $p$ be a prime. A reduced directed graph $\DG$ is the reduced directed power graph of a finite $p$-group of order $p^n$ of exponent $p$ if and only if $\DG$ is colour-isomorphic to a directed star having $(p^n-1)/(p-1)$ vertices each with colour $p$, out-degree 1 and in-degree 0, and one vertex with colour $1$, out-degree 0 and in-degree $(p^n-1)/(p-1)$.
\end{lemma}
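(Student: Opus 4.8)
The plan is to compute $R(DPow(G))$ directly for an arbitrary finite $p$-group $G$ of order $p^n$ and exponent $p$, show that it is always the claimed directed star regardless of the specific $G$, and then feed this single computation into both directions. Recall that the vertices of $R(DPow(G))$ are the equivalence classes $[x]$ (equivalently, the distinct cyclic subgroups $\langle x\rangle$ of $G$), and that the colour of $[x]$ equals the out-degree of $x$ in $DPow(G)$, which is $ord(x)$. Since $G$ has exponent $p$, every non-identity element has order $p$, so the classes are exactly the identity class $[e]$ (colour $1$) together with one class per cyclic subgroup of order $p$ (each of colour $p$). Because two distinct subgroups of order $p$ meet only in the identity, the $p^n-1$ non-identity elements are partitioned into subgroups of order $p$, each consisting of its $p-1$ generators; hence there are exactly $(p^n-1)/(p-1)$ order-$p$ classes.

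For the forward direction I would determine the edges of $R(DPow(G))$. For an element $x$ of order $p$, its out-neighbours in $DPow(G)$ form $\langle x\rangle=\{e,x,\dots,x^{p-1}\}$, whose members lie in only two classes, namely $[x]$ itself and $[e]$; thus $[x]$ has a single outgoing edge, to $[e]$, giving out-degree $1$ in $R(DPow(G))$. The identity $e$ has out-neighbourhood $\{e\}$, so $[e]$ has out-degree $0$. Since distinct order-$p$ subgroups intersect trivially, no power of one generator is a generator of another, so there are no edges among the order-$p$ classes; consequently each order-$p$ class has in-degree $0$, while $[e]$ has in-degree equal to the number of order-$p$ classes, namely $(p^n-1)/(p-1)$ (here \Cref{remark power graph} fixes the direction of each edge towards the smaller order). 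This is precisely the directed star in the statement.

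For the converse I would invoke the existence of a witness group: the elementary abelian group $\ZZ_p^{\,n}$ has order $p^n$ and exponent $p$, so by the forward computation $R(DPow(\ZZ_p^{\,n}))$ is exactly this directed star. Hence if $\DG$ is colour-isomorphic to the star, then $\DG\cong_c R(DPow(\ZZ_p^{\,n}))$, exhibiting $\DG$ as the reduced directed power graph of a $p$-group of order $p^n$ and exponent $p$. The main obstacle, and the only place genuine care is needed, is the edge analysis: one must rule out edges between two distinct order-$p$ classes, which rests on the triviality of the intersection of distinct order-$p$ cyclic subgroups, and one must keep carefully separated the two quantities attached to each vertex, namely its \emph{colour} ($ord$, hence $p$ on the leaves) and its \emph{degree} in the reduced graph (out-degree $1$ on the leaves), since these happen to differ.
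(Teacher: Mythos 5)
Your proposal is correct, but it takes a genuinely different route from the paper: the paper gives no self-contained argument for this lemma at all, instead asserting that it ``follows from a result by'' Pourgholi et al.~\cite{pourgholi2015undirected} on undirected power graphs, and leaves the translation into the reduced directed graph language implicit. You compute $R(DPow(G))$ from scratch for an \emph{arbitrary} $p$-group $G$ of order $p^n$ and exponent $p$: exponent $p$ forces every non-identity element to have order $p$; triviality of the intersection of distinct order-$p$ cyclic subgroups both yields the count $(p^n-1)/(p-1)$ of colour-$p$ classes and rules out edges between two such classes; and your bookkeeping correctly separates the colour of a class $[x]$ (the out-degree of $x$ in $DPow(G)$, i.e.\ $ord(x)$) from the out-degree of $[x]$ in the reduced graph (which is $1$, since $\langle x\rangle$ meets only the classes $[x]$ and $[e]$, and the reduced graph has no self-loops). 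Feeding the same computation into the converse via the witness $\ZZ_p^n$ is a clean way to handle the only direction in which a group must actually be exhibited. What your approach buys is self-containedness and uniformity (the forward direction holds for every group of exponent $p$, not just the elementary abelian one); what the paper's citation buys is brevity. One minor wording point: the converse strictly shows that $\DG$ is \emph{colour-isomorphic} to $R(DPow(\ZZ_p^n))$ rather than equal to it, but this is exactly the sense in which \Cref{sec: recognition} uses ``is'' throughout (recognition is always up to colour-isomorphism, cf.\ \Cref{color iso in enough}), so your reading is the intended one.
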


From the above lemma 
and \Cref{pow(nil) reducible to dpow(p)}, the next result can be easily observed by considering the group class $\mathcal{X}=\{G \in \mathcal{G}_{PP} :\text{ exponent of $G$ is } p\}$.

\medskip

\begin{theorem}\label{recognition squarefree exponent}
There is a polynomial time algorithm that takes a reduced directed graph $\Gamma$ as input, checks whether $\Gamma$ is isomorphic to the reduced directed power graph of a nilpotent group with squarefree exponent.
\end{theorem}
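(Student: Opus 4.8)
The plan is to combine \Cref{exponet p} with the reduction machinery of \Cref{pow(nil) reducible to dpow(p)}. The key observation is that a nilpotent group $H$ has squarefree exponent if and only if each of its Sylow $p$-subgroups has exponent $p$; indeed, the exponent of $H = H_1 \times \dots \times H_k$ is the product of the exponents of the $H_i$, and this product is squarefree exactly when each factor contributes a single prime, i.e.\ $\exp(H_i) = p_i$. Therefore, setting $\mathcal{X} = \{G \in \mathcal{G}_{PP} : \exp(G) = p \text{ for some prime } p\}$, we have that $Nil(\mathcal{X})$ is precisely the class of nilpotent groups of squarefree exponent. By \Cref{pow(nil) reducible to dpow(p)}, it suffices to exhibit a polynomial time algorithm $\mathcal{A}$ that recognises reduced directed power graphs of groups in $\mathcal{X}$.

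The construction of $\mathcal{A}$ is where \Cref{exponet p} does the work. Given a reduced directed graph $\mathcal{D}'$, the algorithm checks whether $\mathcal{D}'$ has the star shape characterised in \Cref{exponet p}: a single sink vertex of colour $1$, in-degree equal to the number of remaining vertices and out-degree $0$, together with some number $t$ of source vertices, each of colour $p$ (for a common prime $p$), each with out-degree $1$ pointing to the sink and in-degree $0$. This shape check is clearly performable in linear time. If the shape matches, the algorithm must verify that $t$ is of the form $(p^n-1)/(p-1)$ for some integer $n \geq 1$; this is a simple numerical test, solving for $n$ from $t(p-1)+1 = p^n$, which can be done in time polynomial in the input size. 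When such an $n$ exists, $\mathcal{A}$ returns ``Yes'' together with the elementary abelian group $(\mathbb{Z}_p)^n$ (the unique, up to isomorphism, $p$-group of order $p^n$ and exponent $p$ is not unique in general, but \Cref{exponet p} guarantees that \emph{every} such $p$-group has the \emph{same} reduced directed power graph, so any representative suffices) and the induced colour-isomorphism $\psi$; otherwise it returns ``No''. Feeding this $\mathcal{A}$ into \Cref{pow(nil) reducible to dpow(p)} yields the polynomial time algorithm $\mathcal{A}_{sqfr}$ on reduced directed graphs, and composing with the reduction to the reduced-graph setting (via \Cref{color iso in enough} and the reconstruction of \cite{das_et_al:LIPIcs.FSTTCS.2024.20}) gives the statement of \Cref{summary} for squarefree exponent, completing the proof.

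The main subtlety I anticipate is the final clause of \Cref{exponet p}: because \emph{all} $p$-groups of exponent $p$ and fixed order share an identical reduced directed power graph (the star), the recognition is insensitive to the internal structure of the Sylow subgroup, which is exactly what makes $\mathcal{A}$ trivial to implement — but one must be careful that \Cref{pow(nil) reducible to dpow(p)} only requires $\mathcal{A}$ to \emph{return some} $G \in \mathcal{X}$ realising $\mathcal{D}'$, not to classify it, so returning the witness $(\mathbb{Z}_p)^n$ is legitimate. The only genuine obstacle is confirming that the group-theoretic equivalence ``$\exp(H)$ squarefree $\iff$ each Sylow subgroup has prime exponent'' holds, which follows immediately from the nilpotency of $H$ (so that $H$ is the direct product of its Sylow subgroups and its exponent is the product of their exponents). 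Everything else reduces to the linear-time shape test and a polynomial numerical check, so no heavy calculation is required.
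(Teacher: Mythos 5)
Your proposal is correct and takes essentially the same approach as the paper: the paper's proof likewise instantiates \Cref{pow(nil) reducible to dpow(p)} with the class $\mathcal{X}=\{G \in \mathcal{G}_{PP} : \text{exponent of } G \text{ is } p\}$ and uses the star characterisation of \Cref{exponet p} to make recognition of that class trivial. The only difference is that you spell out the implementation details (the linear-time shape test, the check that $t(p-1)+1$ is a power of $p$, and returning $(\mathbb{Z}_p)^n$ as a legitimate witness since all $p$-groups of exponent $p$ and equal order share the same reduced directed power graph), which the paper leaves as ``easily observed.''
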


From the above theorem, one can easily observe that \Cref{summary} for nilpotent groups of squarefree exponent holds.

\printbibliography 

@article{gr6tschel1988geometric,
  title={Geometric algorithms and combinatorial optimization},
  author={Gr{\"o}tschel, Martin and Lovasz, Laiszlo and Schrijver, Alexander},
  journal={Algorithms and Combinatorics},
  volume={2},
  year={1988}
}

@article{mciver1987enumerating,
  title={Enumerating finite groups},
  author={McIver, Annabelle and Neumann, Peter M},
  journal={The Quarterly Journal of Mathematics},
  volume={38},
  number={4},
  pages={473--488},
  year={1987},
  publisher={Oxford University Press}
}

@article{ganian2012using,
  title={Using neighborhood diversity to solve hard problems},
  author={Ganian, Robert},
  journal={arXiv preprint arXiv:1201.3091},
  year={2012}
}

@InProceedings{das_et_al:LIPIcs.FSTTCS.2024.20,
  author =	{Das, Bireswar and Ghosh, Jinia and Kumar, Anant},
  title =	{{The Isomorphism Problem of Power Graphs and a Question of Cameron}},
  booktitle =	{44th IARCS Annual Conference on Foundations of Software Technology and Theoretical Computer Science (FSTTCS 2024)},
  pages =	{20:1--20:23},
  series =	{LIPIcs},
  year =	{2024},
  volume =	{323},
  annote =	{Keywords: Graph Isomorphism, Graphs defined on Groups, Power Graphs, Enhanced Power Graphs, Directed Power Graphs, Nilpotent Groups}
}

@article{cameron2022graphs,
  title={Graphs defined on groups},
  author={Cameron, P. J.},
  journal={International Journal of Group Theory},
  volume={11},
  number={2},
  pages={53--107},
  year={2022},
  publisher={University of Isfahan},}

@book{rotman2012introduction,
  title={An introduction to the theory of groups},
  author={Rotman, J. J.},
  publisher={},
  year={2012},
  address="Springer Science \& Business Media"
}

@article{lampis2012algorithmic,
  title={Algorithmic meta-theorems for restrictions of treewidth},
  author={Lampis, Michael},
  journal={Algorithmica},
  volume={64},
  pages={19--37},
  year={2012},
  publisher={Springer}
}

@book{finch2003mathematical,
  title={Mathematical constants},
  author={Finch, Steven R},
  year={2003},
  publisher={Cambridge university press}
}

@book{apostol2013introduction,
  title={Introduction to analytic number theory},
  author={Apostol, Tom M},
  year={2013},
  publisher={Springer Science \& Business Media}
}

@inproceedings{chakrabarty2009undirected,
  title={Undirected power graphs of semigroups},
  author={Chakrabarty, Ivy and Ghosh, Shamik and Sen, MK},
  booktitle={Semigroup Forum},
  volume={78},
  pages={410--426},
  year={2009},
  organization={Springer}
}

@article{impagliazzo2001problems,
  title={Which problems have strongly exponential complexity?},
  author={Impagliazzo, Russell and Paturi, Ramamohan and Zane, Francis},
  journal={Journal of Computer and System Sciences},
  volume={63},
  number={4},
  pages={512--530},
  year={2001},
  publisher={Elsevier}
}

@inproceedings{babai2016graph,
  title={Graph isomorphism in quasipolynomial time},
  author={Babai, L{\'a}szl{\'o}},
  booktitle={Proceedings of the forty-eighth annual ACM symposium on Theory of Computing},
  pages={684--697},
  year={2016}
}

@article{higman1960enumerating,
  title={Enumerating p-groups. I: Inequalities},
  author={Higman, Graham},
  journal={Proceedings of the London Mathematical Society},
  volume={3},
  number={1},
  pages={24--30},
  year={1960},
  publisher={Oxford University Press}
}

@INPROCEEDINGS{1313808,
  author={Arvind, V. and Toran, J.},
  booktitle={Proceedings. 19th IEEE Annual Conference on Computational Complexity, 2004.}, 
  title={Solvable group isomorphism is (almost) in NP /spl cap/ CoNP}, 
  year={2004},
  volume={},
  number={},
  pages={91-103},
  keywords={Protocols;Computational complexity;Polynomials;Testing},
  doi={}}

@article{arvind2024aspectscommutinggraph,
      title={Aspects of the commuting graph}, 
      author={V. Arvind and Xuanlong Ma and Peter J. Cameron and Natalia V. Maslova},
      journal={arXiv preprint arXiv:2305.07301},
      year={2023},
}

@article{fisher1974polycyclic,
  title={The polycyclic length of linear and finite polycyclic groups},
  author={Fisher, RK},
  journal={Canadian Journal of Mathematics},
  volume={26},
  number={4},
  pages={1002--1009},
  year={1974},
  publisher={Cambridge University Press}
}

@article{rosser1962approximate,
  title={Approximate formulas for some functions of prime numbers},
  author={Rosser, J Barkley and Schoenfeld, Lowell},
  journal={Illinois Journal of Mathematics},
  volume={6},
  number={1},
  pages={64--94},
  year={1962},
  publisher={Duke University Press}
}

@book{ribenboim2012new,
  title={The new book of prime number records},
  author={Ribenboim, Paulo},
  year={2012},
  publisher={Springer Science \& Business Media}
}

@article{pourgholi2015undirected,
  title={The undirected power graph of a finite group},
  author={Pourgholi, GR and Yousefi-Azari, H and Ashrafi, AR},
  journal={Bulletin of the Malaysian Mathematical Sciences Society},
  volume={38},
  number={4},
  pages={1517--1525},
  year={2015},
  publisher={Springer}
}

@article{kelarev2000combinatorial,
  title={A combinatorial property and power graphs of groups},
  author={Kelarev, Andrei V and Quinn, Stephen J},
  journal={Contributions to general algebra},
  volume={12},
  number={58},
  pages={3--6},
  year={2000},
  publisher={Heyn, Klagenfurt}
}

@article{aalipour2017structure,
  title={On the Structure of the Power Graph and the Enhanced Power Graph of a Group},
  author={Aalipour, Ghodratollah and Akbari, Saieed and Cameron, Peter J and Nikandish, Reza and Shaveisi, Farzad},
  journal={The Electronic Journal of Combinatorics},
  pages={P3--16},
  year={2017}
}

@article{FENG2016197,
title = {The full automorphism group of the power (di)graph of a finite group},
journal = {European Journal of Combinatorics},
volume = {52},
pages = {197-206},
year = {2016},
issn = {0195-6698},
author = {Min Feng and Xuanlong Ma and Kaishun Wang},
abstract = {We describe the full automorphism group of the power (di)graph of a finite group. As an application, we solve a conjecture proposed by Doostabadi, Erfanian and Jafarzadeh in 2013.}
}

@article{lacroix2006motif,
  title={Motif search in graphs: application to metabolic networks},
  author={Lacroix, Vincent and Fernandes, Cristina G and Sagot, Marie-France},
  journal={IEEE/ACM transactions on computational biology and bioinformatics},
  volume={3},
  number={4},
  pages={360--368},
  year={2006},
  publisher={IEEE}
}

@inproceedings{fellows2007sharp,
  title={Sharp tractability borderlines for finding connected motifs in vertex-colored graphs},
  author={Fellows, Michael R and Fertin, Guillaume and Hermelin, Danny and Vialette, St{\'e}phane},
  booktitle={International Colloquium on Automata, Languages, and Programming},
  pages={340--351},
  year={2007},
  organization={Springer}
}

@article{bonnet2017graph,
  title={The graph motif problem parameterized by the structure of the input graph},
  author={Bonnet, {\'E}douard and Sikora, Florian},
  journal={Discrete Applied Mathematics},
  volume={231},
  pages={78--94},
  year={2017},
  publisher={Elsevier}
}

@article{pinter2016deterministic,
  title={Deterministic parameterized algorithms for the graph motif problem},
  author={Pinter, Ron Y and Shachnai, Hadas and Zehavi, Meirav},
  journal={Discrete Applied Mathematics},
  volume={213},
  pages={162--178},
  year={2016},
  publisher={Elsevier}
}

@article{grochow2016np,
  title={NP-hard sets are not sparse unless P= NP: An exposition of a simple proof of Mahaney's Theorem, with applications},
  author={Grochow, Joshua A},
  journal={arXiv preprint arXiv:1610.05825},
  year={2016}
}

@inproceedings{farzan2006succinct,
  title={Succinct representation of finite abelian groups},
  author={Farzan, Arash and Munro, J Ian},
  booktitle={Proceedings of the 2006 international symposium on Symbolic and algebraic computation},
  pages={87--92},
  year={2006}
}

@article{mahaney1982sparse,
  title={Sparse complete sets for NP: Solution of a conjecture of Berman and Hartmanis},
  author={Mahaney, Stephen R},
  journal={Journal of Computer and System Sciences},
  volume={25},
  number={2},
  pages={130--143},
  year={1982},
  publisher={Elsevier}
}

@article{guillemot2013finding,
  title={Finding and counting vertex-colored subtrees},
  author={Guillemot, Sylvain and Sikora, Florian},
  journal={Algorithmica},
  volume={65},
  pages={828--844},
  year={2013},
  publisher={Springer}
}

@inproceedings{miller1978nlog,
  title={On the nlog n isomorphism technique (a preliminary report)},
  author={Miller, Gary L},
  booktitle={Proceedings of the tenth annual ACM symposium on theory of computing},
  pages={51--58},
  year={1978}
}

@inproceedings{das2024minimal,
  title={The minimal faithful permutation degree of groups without abelian normal subgroups},
  author={Das, Bireswar and Thakkar, Dhara},
  booktitle={Proceedings of the 56th Annual ACM Symposium on Theory of Computing},
  pages={118--129},
  year={2024}
}

@InProceedings{10.1007/3-540-56287-7_99,
author="Buhrman, Harry
and Homer, Steven",
editor="Shyamasundar, Rudrapatna",
title="Superpolynomial circuits, almost sparse oracles and the exponential hierarchy",
booktitle="Foundations of Software Technology and Theoretical Computer Science",
year="1992",
publisher="Springer Berlin Heidelberg",
address="Berlin, Heidelberg",
pages="116--127",
abstract="Several problems concerning superpolynomial size circuits and superpolynomial-time advice classes are investigated. First we consider the implications of NP (and other fundamental complexity classes) having circuits slighter bigger than polynomial. We prove that if such circuits exist, for example if NP has nlognsize circuits, the exponential hierarchy collapses to the second level. Next we consider the consequences of the bottom levels of the exponential hierarchy being contained in small advice classes. Again various collapses result. For example, if EXPNP{\$}{\$}{\backslash}subseteq{\$}{\$}EXP/poly then EXPNP=EXP.",
isbn="978-3-540-47507-1"
}
\begin{appendix}
\renewcommand{\theequation}{\arabic{equation}}
\setcounter{equation}{0}



\section{Appendix}
\markboth{Appendix}{Appendix}

\subsection{Proof of \Cref{sat to star-clique} of \Cref{sat to pow}}\label{app: proof of claim 1(sat to star-clique)}

    Suppose that ${\Phi}$ has a satisfying assignment, in which each clause $C_j$, $1\leq j \leq m$, of ${\Phi}$ has at least one true literal. We select the vertices corresponding to the true literals in the satisfying assignment. Let $L$ be the set of selected vertices. 
    Then $L \cup \{r\}$ induces a connected subgraph in $\GG$ and is coloured with $\{0,1,2,\dots,n\}$. Since each clause $C_i$ is true by the satisfying assignment of $\Phi$, the construction of $\GG$ implies the existence of $Z \subseteq N(L)\setminus \{r\}$ such that $col(Z)=\{C_1,C_2,\dots,C_m\}$. Therefore, $L\cup \{r\} \cup Z$ is an occurrence of $M$ in $\GG$.

    For the other direction, suppose that $\GG$ has an occurrence $V'$ of $M$. We want to show that ${\Phi}$ is satisfiable. 
    Let $L \subseteq V'$ and $Z \subseteq V'$ be the subsets coloured with $\{1,2,\dots,n\}$ and $\{C_1,C_2,\dots, C_m\}$ respectively. We set literals $l_i$ corresponding to each vertex in $L$ as true literals. Since $V'$ induces a connected subgraph, each vertex $z_{i_j} \in Z$ coloured with $C_{i_j}$ is adjacent to some vertex $x_i$ (or $\overline{x}_i$) in $L$. Moreover, $Z$ is coloured with $\{C_1,C_2,\dots,C_m\}$. Hence, our assignment process satisfies each clause $C_j$, $1 \leq j \leq m$, and thus $\Phi$. 

\subsection{The Issue Behind the Direct Application of the Isomorphism Algorithm by Das et al. in Recognising the Power Graphs of Nilpotent Groups (\Cref{sec: recognition})}\label{story behind the two techniques}

The algorithm due to Das et al. \cite{das_et_al:LIPIcs.FSTTCS.2024.20} works by noting that for two nilpotent groups $G^{(1)}$ and $G^{(2)}$ of order $n$, $DPow(G^{(1)}) \cong DPow(G^{(2)})$ if and only if for each prime $p$ which divides $n$, $DPow(S_p^{(1)})\cong DPow(S_p^{(2)})$, where $S_p^{(1)}$ and $S_p^{(2)}$ are the Sylow $p$-subgroups of $G^{(1)}$ and $G^{(2)}$, respectively. They use this information for their isomorphism algorithm of directed power graphs of finite nilpotent groups. Given a directed power graph $\Gamma$ of a nilpotent $G$, it is easy to find the directed power graph corresponding to its Sylow $p$-subgroup: just take the directed subgraph induced by the vertices of out-degree divisible by $p$. The isomorphism algorithm by \cite{das_et_al:LIPIcs.FSTTCS.2024.20} primarily checks the isomorphism of these induced subgraphs. However, an input directed graph $\DG$ may not be a directed power graph, but the subgraphs induced on the set of prime-power out-degree vertices may still be valid directed power graphs of $p$-groups. This is the reason why the algorithm by Das et al. \cite{das_et_al:LIPIcs.FSTTCS.2024.20} may still declare that the input graphs are isomorphic while they are not, yielding a false positive answer. An illustrative example is provided in \Cref{Gluing Illustration}. For a compact representation, we have used certain notations and concepts (reduced graphs) that are mentioned in the main body.

\begin{figure}[hpt!]
\centering

\begin{subfigure}{0.43\textwidth}
        \begin{tikzpicture}[scale=0.3]
        
        \coordinate (a) at (7,1);
        \coordinate (g1) at (2,5);
        \coordinate (g2) at (4,5);
        \coordinate (g3) at (6,5);
        
        \coordinate (g4) at (1,9);
        \coordinate (g5) at (3.5,9);
        
        \coordinate (g6) at (0,13);
        \coordinate (g7) at (3,13);
        
        \coordinate (h1) at (9,5);
        \coordinate (h2) at (11,5);
        \coordinate (h3) at (13,5);
        \coordinate (h4) at (15,5);
        
        \coordinate (m1) at (7,12.5);
        \coordinate (m2) at (9,12.5);
        \coordinate (m3) at (11,12.5);
        \coordinate (m4) at (13,12.5);
        
        \coordinate (n1) at (0,18);
        \coordinate (n2) at (2,18);
        \coordinate (n3) at (4,18);
        \coordinate (n4) at (6,18);
        \coordinate (n5) at (8,18);
        \coordinate (n6) at (10,18);
        \coordinate (n7) at (12,18);
        \coordinate (n8) at (14,18);

        \tikzset{
  middlearrow/.style={
    postaction={
      decorate,
      decoration={
        markings,
        mark=at position 0.5 with {\arrow{>}}
      }
    }
  }
}

        \tikzset{
  quatarrow/.style={
    postaction={
      decorate,
      decoration={
        markings,
        mark=at position 0.2 with {\arrow{>}}
      }
    }
  }
}

        \tikzset{
  quaterarrow/.style={
    postaction={
      decorate,
      decoration={
        markings,
        mark=at position 0.450 with {\arrow{>}}
      }
    }
  }
}

         \path [draw, middlearrow] (g1) -- (a);
         \path [draw, middlearrow] (g2) -- (a);
        \path [draw, middlearrow] (g3) -- (a);
        \path [draw, middlearrow] (g4) -- (g1);
        \path [draw, middlearrow] (g5) -- (g1);
        \path [draw, middlearrow] (g6) -- (g4);
        \path [draw, middlearrow] (g7) -- (g4);
        \path [draw, middlearrow] (g4) -- (a);
        \path [draw, middlearrow] (g5) -- (a);
        \path [draw, middlearrow] (g6) -- (a);
        \path [draw, middlearrow] (g7) -- (a);
        
       \path [draw, middlearrow] (g6) .. controls (-2,9) .. (g1);
       \path [draw, quatarrow] (g7) -- (g1);        \path [draw, middlearrow] (h1) -- (a);
        \path [draw, middlearrow] (h2) -- (a);
        \path [draw, middlearrow] (h3) -- (a);
        \path [draw, middlearrow] (h4) -- (a);
        \path [draw, quatarrow] (n1) -- (m1);
        \path [draw, quatarrow] (n2) -- (m2);
        \path [draw, quatarrow] (n3) -- (m3);
        \path [draw, quatarrow] (n4) -- (m4);
        \path [draw, quaterarrow] (n5) -- (m1);
        \path [draw, quaterarrow] (n6) -- (m2);
        \path [draw, quaterarrow] (n7) -- (m3);
        \path [draw, quaterarrow] (n8) -- (m4);

     \filldraw [brown] (a) circle(10pt);
        \filldraw [blue] (g1) circle(10pt);
        \filldraw [blue] (g2) circle(10pt);
        \filldraw [blue] (g3) circle(10pt);
        \filldraw [blue] (g4) circle(10pt);
        \filldraw [blue] (g5) circle(10pt);
        \filldraw [blue] (g6) circle(10pt);
        \filldraw [blue] (g7) circle(10pt);
        
        \filldraw [cyan] (h1) circle(10pt);
        \filldraw [cyan] (h2) circle(10pt);
        \filldraw [cyan] (h3) circle(10pt);
        \filldraw [cyan] (h4) circle(10pt);
        
        \filldraw [teal] (m1) circle(10pt);
        \filldraw [teal] (m2) circle(10pt);
        \filldraw [teal] (m3) circle(10pt);
        \filldraw [teal] (m4) circle(10pt);
        
        \filldraw [teal] (n1) circle(10pt);
        \filldraw [teal] (n2) circle(10pt);
        \filldraw [teal] (n3) circle(10pt);
        \filldraw [teal] (n4) circle(10pt);
        \filldraw [teal] (n5) circle(10pt);
        \filldraw [teal] (n6) circle(10pt);
        \filldraw [teal] (n7) circle(10pt);
        \filldraw [teal] (n8) circle(10pt);

        \draw (g1) node[below, yshift=-2pt, scale=0.7]{$g_1$};
        \draw (g2) node[below, yshift=-2pt, scale=0.7]{$g_2$};
        \draw (g3) node[right, xshift=2pt, scale=0.7]{$g_3$};
        \draw (g4) node[left, xshift=-2pt, scale=0.7]{$g_4$};
        \draw (g5) node[above, yshift=2pt, scale=0.7]{$g_5$};
        \draw (g6) node[above, yshift=2pt, scale=0.7]{$g_6$};
        \draw (g7) node[above, yshift=2pt, scale=0.7]{$g_7$};
        
        \draw (h1) node[above, yshift=2pt, scale=0.7]{$h_1$};
        \draw (h2) node[above, yshift=2pt, scale=0.7]{$h_2$};
        \draw (h3) node[above, yshift=2pt, scale=0.7]{$h_3$};
        \draw (h4) node[above, yshift=2pt, scale=0.7]{$h_4$};
        
        \draw (m1) node[below, xshift=-10pt, yshift=-2pt, scale=0.7]{$(g_4,h_1)$};
        \draw (m2) node[below, yshift=-2pt, scale=0.7]{$(g_4,h_2)$};
        \draw (m3) node[below, xshift=4pt , yshift=-2pt, scale=0.7]{$(g_4,h_3)$};
        \draw (m4) node[below, xshift=10pt, yshift=-2pt, scale=0.7]{$(g_4,h_4)$};
        
                \draw (n1) node[above, xshift=-12pt, yshift=2pt, scale=0.65]{$(g_6,h_1)$};
        \draw (n2) node[above, xshift=-7 pt, yshift=2pt, scale=0.65]{$(g_6,h_2)$};
        \draw (n3) node[above, xshift=-4pt , yshift=2pt, scale=0.65]{$(g_6,h_3)$};
        \draw (n4) node[above, xshift=-2pt, yshift=2pt, scale=0.65]{$(g_6,h_4)$};
        
                        \draw (n5) node[above, xshift=2pt, yshift=2pt, scale=0.65]{$(g_7,h_1)$};
        \draw (n6) node[above, xshift=4 pt, yshift=2pt, scale=0.65]{$(g_7,h_2)$};
        \draw (n7) node[above, xshift=7pt , yshift=2pt, scale=0.65]{$(g_7,h_3)$};
        \draw (n8) node[above, xshift=12pt, yshift=2pt, scale=0.65]{$(g_7,h_4)$};

         \end{tikzpicture}
        \caption{ $R(DPow(H)$, where $H= \mathbb{Z}_8\times\mathbb{Z}_2 \times \mathbb{Z}_3 \times \mathbb{Z}_3$}
\end{subfigure}
\hspace{0.5 cm}
\begin{subfigure}{0.43\textwidth}
\centering
        \begin{tikzpicture}[scale=0.3]

        \coordinate (a) at (7,1);
        \coordinate (g1) at (2,5);
        \coordinate (g2) at (4,5);
        \coordinate (g3) at (6,5);
        
        \coordinate (g4) at (1,9);
        \coordinate (g5) at (3.5,9);
        
        \coordinate (g6) at (0,13);
        \coordinate (g7) at (3,13);
        
        \coordinate (h1) at (9,5);
        \coordinate (h2) at (11,5);
        \coordinate (h3) at (13,5);
        \coordinate (h4) at (15,5);
        
        \coordinate (m1) at (7,12.5);
        \coordinate (m2) at (9,12.5);
        \coordinate (m3) at (11,12.5);
        \coordinate (m4) at (13,12.5);
        
        \coordinate (n1) at (0,18);
        \coordinate (n2) at (2,18);
        \coordinate (n3) at (4,18);
        \coordinate (n4) at (6,18);
        \coordinate (n5) at (8,18);
        \coordinate (n6) at (10,18);
        \coordinate (n7) at (12,18);
        \coordinate (n8) at (14,18);
        
        \filldraw [brown] (a) circle(10pt);
        \filldraw [blue] (g1) circle(10pt);
        \filldraw [blue] (g2) circle(10pt);
        \filldraw [blue] (g3) circle(10pt);
        \filldraw [blue] (g4) circle(10pt);
        \filldraw [blue] (g5) circle(10pt);
        \filldraw [blue] (g6) circle(10pt);
        \filldraw [blue] (g7) circle(10pt);
        
        \filldraw [cyan] (h1) circle(10pt);
        \filldraw [cyan] (h2) circle(10pt);
        \filldraw [cyan] (h3) circle(10pt);
        \filldraw [cyan] (h4) circle(10pt);
        
        \filldraw [teal] (m1) circle(10pt);
        \filldraw [teal] (m2) circle(10pt);
        \filldraw [teal] (m3) circle(10pt);
        \filldraw [teal] (m4) circle(10pt);
        
        \filldraw [teal] (n1) circle(10pt);
        \filldraw [teal] (n2) circle(10pt);
        \filldraw [teal] (n3) circle(10pt);
        \filldraw [teal] (n4) circle(10pt);
        \filldraw [teal] (n5) circle(10pt);
        \filldraw [teal] (n6) circle(10pt);
        \filldraw [teal] (n7) circle(10pt);
        \filldraw [teal] (n8) circle(10pt);

        \tikzset{
  middlearrow/.style={
    postaction={
      decorate,
      decoration={
        markings,
        mark=at position 0.5 with {\arrow{>}}
      }
    }
  }
}

        \tikzset{
  quatarrow/.style={
    postaction={
      decorate,
      decoration={
        markings,
        mark=at position 0.2 with {\arrow{>}}
      }
    }
  }
}

        \tikzset{
  quaterarrow/.style={
    postaction={
      decorate,
      decoration={
        markings,
        mark=at position 0.450 with {\arrow{>}}
      }
    }
  }
}

         \path [draw, middlearrow] (g1) -- (a);
         \path [draw, middlearrow] (g2) -- (a);
        \path [draw, middlearrow] (g3) -- (a);
        \path [draw, middlearrow] (g4) -- (g1);
        \path [draw, middlearrow] (g5) -- (g1);
        \path [draw, middlearrow] (g6) -- (g4);
        \path [draw, middlearrow] (g7) -- (g4);
        \path [draw, middlearrow] (g4) -- (a);
        \path [draw, middlearrow] (g5) -- (a);
        \path [draw, middlearrow] (g6) -- (a);
        \path [draw, middlearrow] (g7) -- (a);
        
       \path [draw, middlearrow] (g6) .. controls (-2,9) .. (g1);
       \path [draw, quatarrow] (g7) -- (g1);
        
        \path [draw, middlearrow] (h1) -- (a);
        \path [draw, middlearrow] (h2) -- (a);
        \path [draw, middlearrow] (h3) -- (a);
        \path [draw, middlearrow] (h4) -- (a);
        \path [draw, quatarrow] (n1) -- (m1);
        \path [draw, quatarrow] (n2) -- (m2);
        \path [draw, quatarrow] (n3) -- (m3);
        \path [draw, quatarrow] (n4) -- (m4);
        \path [draw, quaterarrow] (n5) -- (m1);
        \path [draw, quaterarrow] (n6) -- (m2);
        \path [draw, quaterarrow] (n7) -- (m4);
        \path [draw, quaterarrow] (n8) -- (m4);
        
        \filldraw [brown] (a) circle(10pt);
        \filldraw [blue] (g1) circle(10pt);
        \filldraw [blue] (g2) circle(10pt);
        \filldraw [blue] (g3) circle(10pt);
        \filldraw [blue] (g4) circle(10pt);
        \filldraw [blue] (g5) circle(10pt);
        \filldraw [blue] (g6) circle(10pt);
        \filldraw [blue] (g7) circle(10pt);
        
        \filldraw [cyan] (h1) circle(10pt);
        \filldraw [cyan] (h2) circle(10pt);
        \filldraw [cyan] (h3) circle(10pt);
        \filldraw [cyan] (h4) circle(10pt);
        
        \filldraw [teal] (m1) circle(10pt);
        \filldraw [teal] (m2) circle(10pt);
        \filldraw [teal] (m3) circle(10pt);
        \filldraw [teal] (m4) circle(10pt);
        
        \filldraw [teal] (n1) circle(10pt);
        \filldraw [teal] (n2) circle(10pt);
        \filldraw [teal] (n3) circle(10pt);
        \filldraw [teal] (n4) circle(10pt);
        \filldraw [teal] (n5) circle(10pt);
        \filldraw [teal] (n6) circle(10pt);
        \filldraw [teal] (n7) circle(10pt);
        \filldraw [teal] (n8) circle(10pt);

                \draw (g1) node[below, yshift=-2pt, scale=0.7]{$g_1$};
        \draw (g2) node[below, yshift=-2pt, scale=0.7]{$g_2$};
        \draw (g3) node[right, xshift=2pt, scale=0.7]{$g_3$};
        \draw (g4) node[left, xshift=-2pt, scale=0.7]{$g_4$};
        \draw (g5) node[above, yshift=2pt, scale=0.7]{$g_5$};
        \draw (g6) node[above, yshift=2pt, scale=0.7]{$g_6$};
        \draw (g7) node[above, yshift=2pt, scale=0.7]{$g_7$};
        
        \draw (h1) node[above, yshift=2pt, scale=0.7]{$h_1$};
        \draw (h2) node[above, yshift=2pt, scale=0.7]{$h_2$};
        \draw (h3) node[above, yshift=2pt, scale=0.7]{$h_3$};
        \draw (h4) node[above, yshift=2pt, scale=0.7]{$h_4$};
        
                \draw (m1) node[below, xshift=-10pt, yshift=-2pt, scale=0.7]{$(g_4,h_1)$};
        \draw (m2) node[below, yshift=-2pt, scale=0.7]{$(g_4,h_2)$};
        \draw (m3) node[below, xshift=4pt , yshift=-2pt, scale=0.7]{$(g_4,h_3)$};
        \draw (m4) node[below, xshift=10pt, yshift=-2pt, scale=0.7]{$(g_4,h_4)$};
        
                \draw (n1) node[above, xshift=-12pt, yshift=2pt, scale=0.65]{$(g_6,h_1)$};
        \draw (n2) node[above, xshift=-7 pt, yshift=2pt, scale=0.65]{$(g_6,h_2)$};
        \draw (n3) node[above, xshift=-4pt , yshift=2pt, scale=0.65]{$(g_6,h_3)$};
        \draw (n4) node[above, xshift=-2pt, yshift=2pt, scale=0.65]{$(g_6,h_4)$};
        
                        \draw (n5) node[above, xshift=2pt, yshift=2pt, scale=0.65]{$(g_7,h_1)$};
        \draw (n6) node[above, xshift=4 pt, yshift=2pt, scale=0.65]{$(g_7,h_2)$};
        \draw (n7) node[above, xshift=7pt , yshift=2pt, scale=0.65]{$(g_7,h_3)$};
        \draw (n8) node[above, xshift=12pt, yshift=2pt, scale=0.65]{$(g_7,h_4)$};
        	    
          \end{tikzpicture}
        \caption{The given $\DDG{1}$}
\end{subfigure}
\caption{A short representation of $R(DPow(H))$ and $\DDG{1}$. Some edges and vertices are omitted for a clear representation.}
\label{Gluing Illustration}
\end{figure}
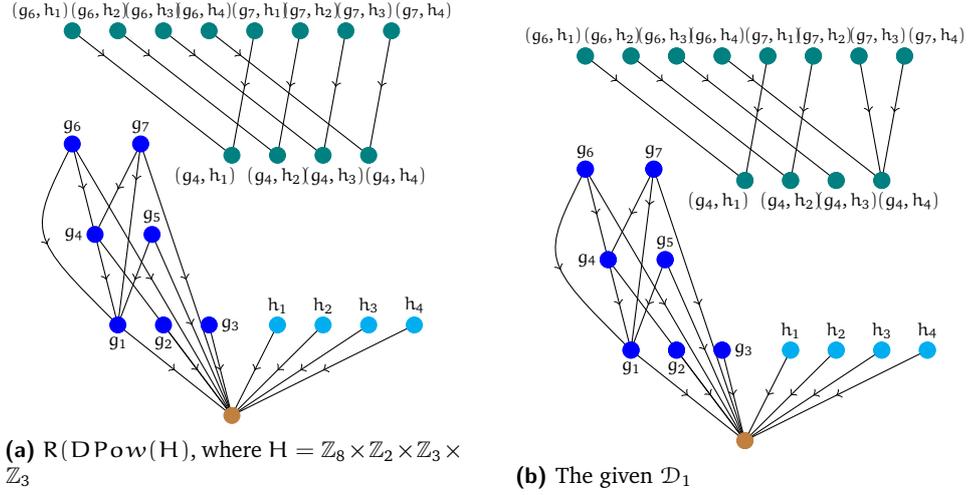

\textit{Description of the example depicted in \Cref{Gluing Illustration}:} Here $\DDG{1}$ is a reduced directed graph and $R(DPow(H))$ is the reduced directed power graph of $H=\mathbb{Z}_8 \times \mathbb{Z}_2 \times \mathbb{Z}_3 \times \mathbb{Z}_3$. Let $\DDG{11}$ and $\DDG{12}$ denote the subgraphs of $\DDG{1}$ induced on the set of vertices with $2$-power and $3$-power colours. This example shows an instance where $\DDG{11}\cong_c R(DPow(H_1))$ and $\DDG{12}\cong_c R(DPow(H_2))$, but $\DDG{1} \ncong_c R(DPow(H))$.

 For a fixed $i$ and $j$, the same notation $g_i$, $h_j$ and $(g_i,h_j)$ are used to denote two isomorphic vertices in the graphs $R(DPow(H))$ and $\DDG{1}$. Observe that, $col(g_4)=col(g_5)=4$ and $col(g_6)=col(g_7)=8$, whereas $col(h_1)=col(h_2)=col(h_3)=col(h_4)=3$. Hence, to preserve colours, $col((g_6,h_j))=col(g_7,h_j))=24$ and $col((g_4,h_1))=col((g_5,h_1))=12$ in both the graphs. But note that the subgraphs induced on the set of these $(g_i,h_j)$, $i=4,\dots,7$, vertices (denoted by green vertices in the picture) do not follow the adjacency criteria.
 


\subsection{Proof of Correctness of Gluing Process (\Cref{sec: recognition})}\label{correctness of patchup}

Here, we prove the correctness of Gluing process described in \Cref{sec: recognition}. Let $\DDG{1}$ be a reduced directed graph and $H$ be a finite nilpotent group such that $H=H_1\times \dots \times H_k$, where $H_i$ is a Sylow $p_i$-subgroup. We want to prove that if $\DG_1 \cong_c \DG_2 $, where $R(DPow(H))=\DG_2$, the mapping $f$ defined for Gluing process is indeed a colour-preserving isomorphism (see \Cref{rong mile geche}). We start with some discussions which facilitate our proof.

Since $\DG_1 \cong_c \DG_2 $, we can assume that there exists some nilpotent group $G=G_1 \times G_2 \times \dots G_k$, where $G_i$ is the Sylow $p_i$-subgroup such that $\DG_1 = R(DPow(G))$. Note that the groups $G$ and $H$ may be non-isomorphic, but $|G|=|H|$. For all $1\leq i< j \leq k$, $gcd(|H_i|,|H_j|)=1, gcd(|G_i|,|G_j|)=1$, and for all $1\leq i \leq k$, $|H_i|=|G_i|$. Each element of $H$ can be expressed as $(h_1,h_2,\dots,h_k)$, where $h_i \in H_i$, and each $h_i$ can also be identified as an element of $H$ by embedding it in the tuple $(e,\dots,e,h_i,e,\dots,e)$. So a vertex of the reduced graph $\DDG{2}$ is represented as $[(h_1,h_2,\dots,h_k)]$. Similar things apply for $G$ and its associated graph $\DDG{1}$.

\medskip 

\begin{remark}\label{nilpotent remark}
   Let $(x_1,x_2,\dots,x_k)$ and $(y_1,y_2,\dots,y_k)$ be two elements of a finite nilpotent group $G_1\times G_2 \times \dots \times G_k$. Then the following two hold:
   
   (a) $[(x_1,x_2,\dots,x_k)]=[(y_1,y_2,\dots,y_k)]$ if and only if $[x_i]=[y_i]$, for all $1\leq i \leq k$. This can be argued easily by the fact that $ord(x_i)$ and $ord(x_j)$ are co-prime to each other, for all $1\leq i < j \leq k$, and $ord(x_i)=ord(y_i)$, for all $1 \leq i \leq k$.

   (b) $(x_1,x_2,\dots,x_k)$ generates $(y_1,y_2,\dots,y_k)$ if and only if, for all $1\leq i \leq k$, $x_i$ generates $y_i$ in $G_i$ (or, in $\{e\}\times \dots \times \{e\} \times G_i \times \{e\} \times \dots \times \{e\}$ ).
\end{remark}

Let $\DDG{ji}$ where $j=1,2$ and $i=1,2,\dots,k$ denote the subgraph of $\DDG{j}$ induced on the set of vertices with $p_i$-power colours. We give sketches for the proofs of the following two easy lemmas.

\medskip 

\begin{lemma}\label{claim1: patchup}
    $\DDG{1i} \cong_c R(DPow(\{e\}\times \dots \times\{e\}\times G_i \times \{e\} \times \dots \times \{e\}))$ and $\DDG{2i} \cong_c R(DPow(\{e\}\times \dots \times\{e\}\times H_i \times \{e\} \times \dots \times \{e\}))$, for all $1\leq i \leq k$.
\end{lemma}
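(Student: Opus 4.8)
The plan is to show, for each fixed $i$ with $1\le i\le k$, that $\DDG{1i}$ coincides, as a coloured directed graph, with the reduced directed power graph of the $i$-th Sylow factor, from which the claimed colour-isomorphism is immediate; the statement for $\DDG{2i}$ and $H_i$ then follows by the identical argument with $H$ in place of $G$. Throughout I work under the standing assumption $\DDG{1}=R(DPow(G))$ with $G=G_1\times\dots\times G_k$, and write $S_i=\{e\}\times\dots\times G_i\times\dots\times\{e\}$ for the $i$-th factor regarded as a subgroup of $G$.

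First I would determine which vertices of $\DDG{1}$ survive into $\DDG{1i}$. The colour of a class $[x]$ with $x=(x_1,\dots,x_k)$ is $ord(x)=\prod_{j}ord(x_j)$, a product of pairwise coprime factors; this is a power of $p_i$ exactly when $ord(x_j)=1$ for every $j\neq i$, i.e.\ exactly when $x\in S_i$. Hence the vertex set of $\DDG{1i}$ is $\{[x]:x\in S_i\}$. Moreover, for $x\in S_i$ the cyclic subgroup $\langle x\rangle$ is contained in $S_i$, so the class $[x]$ (the set of generators of $\langle x\rangle$) is the \emph{same} set whether $x$ is regarded as an element of $G$ or of $S_i$; this is \Cref{nilpotent remark}(a) specialised to tuples supported on the $i$-th coordinate. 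Thus the vertices of $\DDG{1i}$, together with their colours $ord(x)$, coincide with those of $R(DPow(S_i))$.

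It then remains to match the arcs, and this is the step I expect to carry the real (if modest) weight. An arc from $[x]$ to $[y]$ in $\DDG{1}$ is present iff $\langle y\rangle\subsetneq\langle x\rangle$, and since $\DDG{1i}$ is the subgraph induced on the classes of $S_i$, its arcs are exactly those arcs of $\DDG{1}$ with both endpoints in $S_i$. For $x,y\in S_i$ the relation $\langle y\rangle\subsetneq\langle x\rangle$ is intrinsic to $S_i$ --- equivalently, by \Cref{nilpotent remark}(b), $x$ generates $y$ in $G$ iff the $i$-th coordinate of $x$ generates that of $y$ in $G_i$ --- so it holds iff the corresponding arc is present in $R(DPow(S_i))$. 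Consequently $\DDG{1i}$ and $R(DPow(S_i))$ have the same vertices, colours and arcs, the identity map is a colour-isomorphism, and $\DDG{1i}\cong_c R(DPow(S_i))$; replacing $G$ by $H$ throughout yields $\DDG{2i}\cong_c R(DPow(\{e\}\times\dots\times H_i\times\dots\times\{e\}))$.

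The real content of the lemma --- and the reason it is worth isolating --- is that the two operations ``reduce to $R(\cdot)$'' and ``pass to the $p_i$-power-coloured induced subgraph'' commute once the ambient graph is a genuine reduced directed power graph: no arc among classes of $p_i$-power order can originate outside the factor $S_i$, precisely because $\langle x\rangle\subseteq S_i$ for every $x\in S_i$. This is exactly the property that can fail when one only knows that the individual factors are recognised correctly, which is what the example in \Cref{Gluing Illustration} is designed to expose and what makes the subsequent Gluing process necessary.
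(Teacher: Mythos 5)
Your proof is correct and follows essentially the same route as the paper's (which is only a sketch): both identify the vertices of $\DDG{1i}$ with the classes of elements of the Sylow factor $\{e\}\times\dots\times G_i\times\dots\times\{e\}$ via the colour/order correspondence, and conclude that the induced subgraph coincides with $R(DPow(S_i))$. Your version simply spells out the details the paper leaves implicit — that classes, colours, and arcs are all intrinsic to the factor $S_i$ — which is a faithful completion rather than a different argument.
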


\begin{proof}
    Note that, $\DG_1 \cong_c R(DPow(G))$ where $G=G_1 \times G_2 \times \dots G_k$. Here, $G_i$ is the Sylow $p_i$-subgroup of $G$. So, $|G_i|$ and $|G_j|$ are relatively prime when $i \neq j$. Also, the colour of an element $[g]$ in $\DG_1$ is the same as $ord(g)$, the order of the element $g$. Hence, all the elements of $\DDG{1}$ with $p_i$-power colours are essentially the elements from $ \{e\}\times \dots \times\{e\}\times G_i \times \{e\} \times \dots \times \{e\}$. This proves that $\DDG{1i} \cong_c R(DPow(\{e\}\times \dots \times\{e\}\times G_i \times \{e\} \times \dots \times \{e\}))$.
\end{proof}

\begin{lemma}\label{claim2: patchup}
    Let $\DDG{1} \cong_c \DDG{2}$. Then $\DDG{1i} \cong_c \DDG{2i}$, for all $1\leq i \leq k$.
\end{lemma}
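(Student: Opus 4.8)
The plan is to take the colour-preserving isomorphism $\psi : V(\DDG{1}) \xrightarrow{} V(\DDG{2})$ guaranteed by the hypothesis $\DDG{1} \cong_c \DDG{2}$ and show that its restriction to the $p_i$-power part is exactly a colour-isomorphism between $\DDG{1i}$ and $\DDG{2i}$. The whole argument rests on the observation that membership of a vertex in the induced subgraph $\DDG{ji}$ is determined entirely by its colour.

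First I would recall that, by the construction of the reduced graph, the colour of each vertex equals the order of the corresponding group element; hence a vertex of $\DDG{1}$ (respectively $\DDG{2}$) lies in $\DDG{1i}$ (respectively $\DDG{2i}$) precisely when its colour is a power of $p_i$. Since $\psi$ is colour-preserving, $col(\psi(v)) = col(v)$ for every $v$, so $\psi$ sends a vertex with $p_i$-power colour to a vertex with the same $p_i$-power colour. Because $\psi^{-1}$ is also colour-preserving, it likewise carries $p_i$-power-coloured vertices back into $V(\DDG{1i})$. Therefore $\psi$ restricts to a well-defined bijection $\psi_i := \psi|_{V(\DDG{1i})} : V(\DDG{1i}) \xrightarrow{} V(\DDG{2i})$ that preserves colours.

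Next I would verify that $\psi_i$ respects the directed-edge relation of the induced subgraphs. Since $\DDG{1i}$ and $\DDG{2i}$ are induced subgraphs and $\psi$ is an isomorphism of the full graphs, for any $u,w \in V(\DDG{1i})$ we have $(u,w) \in E(\DDG{1i})$ iff $(u,w) \in E(\DDG{1})$ iff $(\psi(u),\psi(w)) \in E(\DDG{2})$ iff $(\psi_i(u),\psi_i(w)) \in E(\DDG{2i})$, where the final equivalence uses that $\psi_i(u),\psi_i(w) \in V(\DDG{2i})$ and $\DDG{2i}$ is induced. Combined with the colour-preservation already established, this shows $\psi_i$ is a colour-isomorphism, so $\DDG{1i} \cong_c \DDG{2i}$ for each $1 \leq i \leq k$.

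There is no substantive obstacle in this argument; it is essentially a bookkeeping consequence of the fact that the $p_i$-parts of both graphs are carved out by a colour condition that any colour-preserving map automatically respects (cf.\ the definition of $\DDG{ji}$ and \Cref{claim1: patchup}). The only point deserving a moment of care is confirming that $\psi$ and $\psi^{-1}$ both map $p_i$-power-coloured vertices to $p_i$-power-coloured vertices, which is immediate once one notes that the property of a colour being a power of $p_i$ is invariant under the colour-preserving bijection.
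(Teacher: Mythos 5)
Your proof is correct and follows essentially the same route as the paper's: restrict the given colour-isomorphism $\psi$ to the vertices with $p_i$-power colours, observing that colour-preservation forces these vertices to map into $\DDG{2i}$, and that induced subgraphs inherit the edge-preservation. Your version simply spells out the surjectivity (via $\psi^{-1}$) and adjacency details that the paper's shorter argument leaves implicit.
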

\begin{proof}
    Let $\psi$ be a colour isomorphism from $\DDG{1}$ to $\DDG{2}$. Observe that all the elements of $\DDG{1i}$ can have their images only in $\DDG{2i}$ under $\psi$, since for $j\neq i$ the colours of elements in $\DDG{2j}$ are co-prime to the colours of elements in $\DDG{2i}$. So, $\psi$ restricted on $\DDG{1i}$ gives a colour-preserving isomorphism from $\DDG{1i}$ to $\DDG{2i}$,  for all $1 \leq i \leq k$.
\end{proof}

We first state a relevant fact, which follows from the properties of finite nilpotent groups, e.g, \Cref{nilpotent remark}.

\medskip

\begin{fact}\label{nil unique lca}
    Let $G=G_1 \times G_2 \times \dots \times G_k$ be a finite nilpotent group, where $G_i$ denotes the Sylow $p_i$-subgroup of $G$. Let $u_i \in G_i$, for all $1\leq i \leq k$. Then, $[(u_1,u_2,\dots,u_k)]$ is the unique least common parent of the vertices $[(u_1,e,\dots,e)],[(e,u_2,e,\dots,e)],$  $\dots,[(e,\dots,e,u_k)]$ in $R(DPow(G))$.
\end{fact}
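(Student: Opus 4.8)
The plan is to first recall how edges and colours behave on $R(DPow(G))$, then characterise all common parents of the given vertices, and finally extract the unique minimiser of the colour. Recall that an edge $(\tau_1,\tau_2)$ exists in $R(DPow(G))$ precisely when $\langle x_2\rangle\subseteq\langle x_1\rangle$ for representatives $x_i\in\tau_i$; this is well defined on classes, since every element of a class generates the same cyclic subgroup, and $col(\tau)=ord(x)$ for any $x\in\tau$. Thus a parent of a vertex $v$ is exactly a vertex $u$ with $v\in\langle u\rangle$, and in particular a parent carries a colour divisible by the colour of its child. The key structural input, valid because the orders $ord(w_1),\dots,ord(w_k)$ are pairwise coprime in the nilpotent group $G=G_1\times\cdots\times G_k$, is that $\langle(w_1,\dots,w_k)\rangle=\langle w_1\rangle\times\cdots\times\langle w_k\rangle$, and hence $ord((w_1,\dots,w_k))=\prod_i ord(w_i)$.

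First I would verify that $[(u_1,\dots,u_k)]$ is a common parent. For each fixed $i$, the Chinese Remainder Theorem applied to the pairwise coprime moduli $ord(u_1),\dots,ord(u_k)$ yields an exponent $m$ with $m\equiv 1\pmod{ord(u_i)}$ and $m\equiv 0\pmod{ord(u_j)}$ for $j\neq i$, so that $(u_1,\dots,u_k)^m=(e,\dots,u_i,\dots,e)$; equivalently this is \Cref{nilpotent remark}(b), since $u_i$ generates $u_i$ and each $u_j$ generates $e$. Hence there is an edge from $[(u_1,\dots,u_k)]$ to each $[(e,\dots,u_i,\dots,e)]$, and its colour is $\prod_i ord(u_i)$.

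Next I would characterise an arbitrary common parent. Every vertex of $R(DPow(G))$ has a representative of the form $(w_1,\dots,w_k)$, and the edge condition to $[(e,\dots,u_i,\dots,e)]$ reads $\langle(e,\dots,u_i,\dots,e)\rangle\subseteq\langle(w_1,\dots,w_k)\rangle$. Using the product decomposition of the cyclic subgroup above and comparing the $i$-th coordinate, this is equivalent to $\langle u_i\rangle\subseteq\langle w_i\rangle$, i.e. $ord(u_i)\mid ord(w_i)$. Imposing this for every $i$ and taking products gives $col([(u_1,\dots,u_k)])=\prod_i ord(u_i)\le\prod_i ord(w_i)=col([(w_1,\dots,w_k)])$, so $[(u_1,\dots,u_k)]$ attains the minimum colour among common parents and is therefore a least common parent.

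Finally, for uniqueness I would argue that equality of colours forces equality of classes. If $[(w_1,\dots,w_k)]$ is another least common parent, then $\prod_i ord(w_i)=\prod_i ord(u_i)$ while $ord(u_i)\mid ord(w_i)$ for each $i$; since no factor can shrink, equality of the products forces $ord(u_i)=ord(w_i)$ for all $i$. Combined with $u_i\in\langle w_i\rangle$ this gives $\langle u_i\rangle=\langle w_i\rangle$, i.e. $[u_i]=[w_i]$ in $G_i$, and then \Cref{nilpotent remark}(a) yields $[(u_1,\dots,u_k)]=[(w_1,\dots,w_k)]$. The main obstacle here is purely bookkeeping: one must keep the parent/child direction straight (parents carry the larger order) and justify the product formula $\langle(w_1,\dots,w_k)\rangle=\prod_i\langle w_i\rangle$ from coprimality, after which the minimisation and the uniqueness follow immediately.
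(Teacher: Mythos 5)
Your proof is correct and takes essentially the same route as the paper: the paper simply asserts this fact as a consequence of \Cref{nilpotent remark} and the pairwise coprimality of orders across distinct Sylow subgroups, which is precisely the machinery you spell out (the product decomposition $\langle(w_1,\dots,w_k)\rangle=\langle w_1\rangle\times\cdots\times\langle w_k\rangle$, part (b)/CRT for the parent edges, the colour comparison for leastness, and part (a) for uniqueness). Your write-up just supplies the details that the paper leaves implicit.
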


The next lemma implies that if a reduced directed graph $\DDG{1}$ is colour-isomorphic to the reduced directed power graph of a nilpotent group $H_1\times H_2\times \dots \times H_k$, then the mapping $f$ for Gluing process is indeed a colour-preserving isomorphism between $\DDG{1}$ and $R(DPow(H_1\times H_2\times \dots \times H_k))$. This is enough for the correctness of Gluing Process, as the opposite direction is obvious, i.e, if $f$ is a colour-preserving isomorphism between $\DDG{1}$ and $R(DPow(H_1\times H_2\times \dots \times H_k))$, then these two directed graphs are indeed colour-isomorphic. At this point, let us recall the definition of $f: V(R(DPow(H))) \xrightarrow{} V(\DDG{1})$: Let $f_i$ be a colour isomorphism from $R(DPow(H_i))$ to $\DDG{1i}$, for all $1\leq i \leq k$. The function $f$ maps $[(x_1,x_2,\dots,x_k)]$ to a least common parent (if it exists) of $f_1([x_1])$, $f_2([x_2])$, $\dots$ and $f_k([x_k])$ in $\DDG{1}$. By our previous discussion, $\DDG{1}$ can be viewed as $\DG_1 = R(DPow(G))$, for some nilpotent group $G$.
 Hence, using \Cref{nil unique lca},  a vertex $[(z_1,z_2,\dots,z_k)]$ in $\DDG{1}$ is the unique least common parent of $[z_1]=[(z_1,e,\dots,e)]$, $[z_2]=[(e,z_2,\dots,e)]$, \dots, $[z_k]=[(e,e,\dots,z_k)]$. 

 \medskip 

\begin{lemma}
\label{rong mile geche}
    Let $\DDG{1}=R(DPow(G_1\times\ldots\times G_k ))$ and $\DDG{2}=R(DPow(H_1\times \dots \times H_k))$,  where $H_i$ and $G_i$ are $p_i$-groups for distinct primes $p_1,\ldots, p_k$. Let $f_i:V(\DDG{2i})\xrightarrow{} V(\DDG{1i})$ be a colour-preserving isomorphism between $\DDG{2i}$ and $\DDG{1i}$, for all $1 \leq i \leq k$. Then there exists a colour-preserving isomorphism $f:V(\DDG{2}) \xrightarrow[]{} V(\DDG{1})$ such that $f|_{\DDG{2i}}=f_i$. 
\end{lemma}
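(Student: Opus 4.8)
The plan is to show that $f$ is nothing but the ``coordinatewise product'' of the maps $f_1,\dots,f_k$, read through the identification of a vertex with the tuple of its projections, so that well-definedness, bijectivity, colour-preservation and edge-preservation all reduce to the corresponding coordinatewise statements. By \Cref{nilpotent remark}(a), the assignment $[(x_1,\dots,x_k)]\mapsto([x_1],\dots,[x_k])$ is a bijection between $V(\DDG{2})$ and $V(\DDG{21})\times\dots\times V(\DDG{2k})$, and similarly for $\DDG{1}$; here $[x_i]$ abbreviates the axis vertex $[(e,\dots,e,x_i,e,\dots,e)]$. I will verify five assertions: (i) $f$ is well defined, (ii) $f$ is a bijection, (iii) $f$ preserves colours, (iv) $f$ preserves arcs in both directions, and (v) $f|_{\DDG{2i}}=f_i$.

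First I would establish well-definedness. By \Cref{claim1: patchup}, each $f_i$ sends the axis vertex $[x_i]$ of $\DDG{2i}$ to an axis vertex of $\DDG{1i}$, say $f_i([x_i])=[z_i]$ with $z_i$ lying in the $i$-th factor $G_i$ of $G$. Applying \Cref{nil unique lca} inside $\DDG{1}=R(DPow(G))$, the vertices $[z_1],\dots,[z_k]$ admit a \emph{unique} least common parent, namely $[(z_1,\dots,z_k)]$. Hence $f$ is well defined with $f([(x_1,\dots,x_k)])=[(z_1,\dots,z_k)]$, where $[z_i]=f_i([x_i])$. Under the two coordinate bijections above, $f$ therefore acts as the product map $f_1\times\dots\times f_k$; as each $f_i$ is a bijection, so is $f$, giving (ii).

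For (iii), since the $ord(h_i)$ are powers of distinct primes, $col([(h_1,\dots,h_k)])=ord((h_1,\dots,h_k))=\prod_{i=1}^k ord(h_i)$; as each $f_i$ is colour-preserving we have $ord(z_i)=ord(x_i)$, and multiplying over $i$ yields $col(f([(x_1,\dots,x_k)]))=col([(x_1,\dots,x_k)])$. For (iv), recall that an arc from $[a]$ to $[b]$ in a reduced directed power graph records that $a$ generates $b$; by \Cref{nilpotent remark}(b) this generation relation decomposes coordinatewise. Thus there is an arc $[(x_1,\dots,x_k)]\to[(y_1,\dots,y_k)]$ in $\DDG{2}$ iff for every $i$ the element $x_i$ generates $y_i$, i.e.\ iff in $\DDG{2i}$ either there is an arc $[x_i]\to[y_i]$ or $[x_i]=[y_i]$. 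Since $f_i$ is a colour-isomorphism it preserves both the arc relation and vertex equality, so this condition is equivalent to the analogous coordinatewise condition on the images, which by \Cref{nilpotent remark}(b) is equivalent to an arc $f([(x_1,\dots,x_k)])\to f([(y_1,\dots,y_k)])$ in $\DDG{1}$. Finally, for (v), $f_j([e])$ is the unique colour-$1$ (identity) vertex of each $\DDG{1j}$, so for an axis vertex $[x_i]$ the least common parent of $f_1([e]),\dots,f_i([x_i]),\dots,f_k([e])$ collapses to $f_i([x_i])$, whence $f|_{\DDG{2i}}=f_i$.

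I expect the only genuinely delicate point to be well-definedness: one must know that the least common parent invoked in the definition of $f$ not only exists but is \emph{unique} and equals the product vertex. This is exactly what \Cref{nil unique lca} supplies, once \Cref{claim1: patchup} has identified the images $f_i([x_i])$ as axis vertices in the correct Sylow slice. Everything else then follows mechanically from the observation that class-equality (\Cref{nilpotent remark}(a)), element order, and the generation relation (\Cref{nilpotent remark}(b)) all factor through the $k$ pairwise-coprime coordinates.
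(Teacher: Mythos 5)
Your proof is correct and takes essentially the same route as the paper's: both construct $f$ as the coordinatewise product of the $f_i$'s (equivalently, via the unique least common parent supplied by \Cref{nil unique lca}) and verify well-definedness, bijectivity, colour-preservation, arc-preservation, and the restriction property by reducing each to the coordinatewise statements in \Cref{nilpotent remark}. If anything, your treatment is marginally more careful than the paper's at two points: you explicitly handle the ``arc or class equality'' dichotomy in each coordinate (the reduced graphs have no self-loops, a case the paper's wording glosses over), and you obtain $f|_{\DDG{2i}}=f_i$ directly from $f_j([e])=[e]$ rather than via the paper's unique-child-of-given-colour argument.
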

\begin{proof}
    We define mapping $f:V(\DDG{2}) \xrightarrow[]{} V(\DDG{1})$ as follows:  $f([(x_1,\dots,x_k)])= [(z_1,\dots,z_k)]$ where $f_i([x_i])=[z_i]$, for $1\leq i \leq k$.

    $(i)$ We need to prove that $f$ is well-defined. That means if $[(x_1,x_2,\dots,x_k)]=[(y_1,y_2,\dots,y_k)]$, then $f([(x_1,x_2,\dots,x_k)])=f([(y_1,y_2,\dots,y_k)])$. Note that, by \Cref{nilpotent remark}(a), $[(x_1,x_2,\dots,x_k)]=$\newline $[(y_1,y_2,\dots,y_k)]$ implies $[x_i]=[y_i]$, for all $1 \leq i \leq k$, and hence $f_i([x_i])=f_i([y_i])$ (by well-definedness of $f_i$), for all $1\leq i \leq k$. So, again by \Cref{nilpotent remark}(a), $f([(x_1,x_2,\dots,x_k)])=f([(y_1,y_2,\dots,y_k)])$.

    $(ii)$ Now, we prove that $f$ is injective and surjective. Again, by \Cref{nilpotent remark}(a) and the injective property of $f_i$, $1\leq i \leq k$, it is easy to see that $f$ is injective. To show $f$ is surjective, let $[(z_1,z_2,\dots,z_k)] \in V(\DDG{1})$. Hence, there exists $[x_i] \in V(\DDG{2i})$ such that $f_i([x_i])=[z_i]$, for all $1\leq i \leq k$ (since $f_i$ is bijection). So, by definition, $f([(x_1,x_2,\dots,x_k)])=[(z_1,z_2,\dots,z_k)]$.

    $(iii)$ We now prove that $f$ is a colour-preserving isomorphism by showing that it preserves both colour and the adjacency relation between $\DDG{2}$ and $\DDG{1}$. Let there be a directed edge from $[(x_1,x_2,\dots,x_k)]$ to $[(y_1,y_2,\dots,y_k)]$ in $\DDG{2}$. This implies that 
    $(x_1,x_2,\dots,x_k)$  generates $(y_1,y_2,\dots,y_k)$ in $H$. 
    Hence, by \Cref{nilpotent remark}(b), $x_i$ generates $y_i$ in $H_i$, for all $1\leq i \leq k$. So, there is an edge from $[x_i]$ to $[y_i]$ in $\DDG{2i}$, for all $i$. Since $f_i$ is an isomorphism between $\DDG{2i}$ and $\DDG{1i}$, there is a directed edge from $f_i([x_i])$ to $f_i([y_i])$ in $\DDG{1i}$, for all $i$. So, assuming $f_i([x_i])=[z_i]$ and $f_i([y_i])=[w_i]$, we have $z_i$ generates $w_i$ in $G_i$ for all $i$. Therefore, by \Cref{nilpotent remark}(b), there is a directed edge from $[(z_1,z_2,\dots,z_k)]$ to $[(w_1,w_2,\dots,w_k)]$ in $\DDG{1}$.

    The other direction can be similarly shown.

    Now we prove that $f$ preserves colour. Note that $ord((x_1,x_2,\dots,x_k))=ord(x_1)\times ord(x_2) \times \dots \times ord(x_k)$. Hence, $col([(x_1,x_2,\dots,x_k)])=col([x_1])\times \dots \times col([x_k])$. Since all $f_i$'s are colour-preserving, $f$ preserves colour.

    $(iv)$ We prove $f|_{\DDG{2i}}=f_i$ by showing that if $f([(x_1,x_2,\dots , x_i,\dots,x_k)])=[(z_1,z_2,\dots, z_i, \dots, z_k)]$ then $f([(e,\dots,x_i,\dots,e)])=[(e,\dots,z_i,\dots,e)]$. By the property of colour-preserving isomorphism, $col([(x_1,x_2,\dots,x_k)])$ $=col([(z_1,z_2,\dots,z_k)])$. It is known that any cyclic group of order $n$ has a unique subgroup (which is also cyclic) of order $d$ for each divisor $d$ of $n$. Therefore, $(x_1,\dots,x_k)$ has only one one child of $col([x_i])$ which is $[(e,\dots,x_i,\dots,e)]$. Now, $col([x_i])=col([z_i])$ and $[(e,\dots,z_i,\dots,e)]$ is the only child of $[(z_1,\dots,z_k)]$ of order $col([z_i])$. Hence, by isomorphic property of $f$, we can say that $f([(e,\dots,x_i,\dots,e)])=[(e,\dots,z_i,\dots,e)]$. \end{proof}

\end{appendix}

\end{document}